\documentclass[12pt]{article}

% PACKAGES

\usepackage[colorlinks, linkcolor=blue, citecolor=blue]{hyperref}           
\usepackage{color}
\usepackage{graphicx,subfigure,amsmath,amssymb,amsfonts,bm,epsfig,epsf,url,dsfont}
\usepackage{amsthm}
\usepackage{tikz}
\usepackage{bbm}      
\usepackage{booktabs}
\usepackage{cases}
\usepackage{fullpage}
\usepackage[small,bf]{caption}

\usepackage{overpic}

\usepackage[top=1in,bottom=1in,left=1in,right=1in]{geometry}
\usepackage{fancybox}
\usepackage{hyperref}
\usepackage{algorithm}
\usepackage{verbatim}
\usepackage{algorithmic}

\usepackage{mathtools}

\usepackage{scalerel,stackengine}
\stackMath
\newcommand\reallywidehat[1]{%
\savestack{\tmpbox}{\stretchto{%
  \scaleto{%
    \scalerel*[\widthof{\ensuremath{#1}}]{\kern-.6pt\bigwedge\kern-.6pt}%
    {\rule[-\textheight/2]{1ex}{\textheight}}%WIDTH-LIMITED BIG WEDGE
  }{\textheight}% 
}{0.5ex}}%
\stackon[1pt]{#1}{\tmpbox}%
}

\usepackage{rotating}

% FILE INPUTS
% Custom commands

%\newcommand{\p}{\text{path}}
\newcommand{\bE}{\mathbb{E}}

% Standard environments
\newtheorem{definition}{Definition}

\newtheorem{theorem}{Theorem}

\newtheorem{lemma}{Lemma}
\newtheorem{prop}{Proposition}

\newenvironment{fminipage}%
  {\begin{Sbox}\begin{minipage}}%
  {\end{minipage}\end{Sbox}\fbox{\TheSbox}}

\setcounter{tocdepth}{1}
\makeatletter
\newcommand*{\rom}[1]{\expandafter\@slowromancap\romannumeral #1@}
\makeatother
\newtheorem*{remark}{Remark}

\newcommand{\Ind}{\mathbbm{1}}

\newcommand{\abs}[1]{\left|#1\right|}
\newcommand{\R}{\mathbb{R}} 
\newcommand{\N}{\mathbb{N}}

\newcommand{\E}{\mathbb{E}}
\newcommand{\s}[1]{\mathsf{#1}}

\newcommand{\calp}{\mathcal{p}}
\def\P{{\mathbb P}}
\def\S{{\mathbb S}}

\newcommand {\pr} {\mathbb{P}}

\newcommand{\calA}{{\cal A}}

\newcommand{\calH}{{\cal H}}
\newcommand{\calI}{{\cal I}}

\newcommand{\calK}{{\cal K}}
\newcommand{\calL}{{\cal L}}

\newcommand{\calN}{{\cal N}}

\newcommand{\calP}{{\cal P}}
\newcommand{\calQ}{{\cal Q}}

%\newcommand{\Yvec}{{\bf{y}}}

%\newcommand{\muvec}{{\bf{\mu}}}

%\newcommand{\psivec}{{\bf{\psi}}}
%\newcommand{\xsivec}{{\bf{\xi}}}

%%% For BOLD Greek Letters

%%% For BOLD Greek Letters

%% Ilya

%\newcommand{\betavec}{{\bf{\beta}}}

\newcommand{\be}{\begin{equation}}
\newcommand{\ee}{\end{equation}}
\newcommand{\beqna}{\begin{eqnarray}}
\newcommand{\eeqna}{\end{eqnarray}}

%\newcommand{\}{}
%\mathaccent{\mjm}{$J_{m}$}

%%%%%%%%%%%%%%% Dor's shortcuts%%%%%%%%%%

\DeclarePairedDelimiterX{\set}[1]{\{}{\}}{\setargs{#1}}
\DeclarePairedDelimiterX{\cond}[1]{[}{]}{\setargs{#1}}
\NewDocumentCommand{\setargs}{>{\SplitArgument{1}{;}}m}
{\setargsaux#1}
\NewDocumentCommand{\setargsaux}{mm}
{\IfNoValueTF{#2}{#1} {#1\,\delimsize|\,\mathopen{}#2}}%{#1\:;\:#2}

\DeclarePairedDelimiter\parenv{\lparen}{\rparen}
\newcommand{\eqdef}{\triangleq}

\newcommand{\indep}{\perp \!\!\! \perp}

\newcommand{\p}[1]{\left(#1\right)}
\newcommand{\pp}[1]{\left[#1\right]}
\newcommand{\ppp}[1]{\left\{#1\right\}}
\newcommand{\norm}[1]{\left\|#1\right\|}
\newcommand{\innerP}[1]{\left\langle#1\right\rangle}

\usepackage{xspace}
\usepackage[scr=esstix,cal=boondox]{mathalfa} 

\setcounter{tocdepth}{2}
\setlength\parindent{24pt}  

\allowdisplaybreaks
\begin{document}

\title{Detection of Correlated Random Vectors}

\author{Dor~Elimelech\thanks{D. Elimelech is with the School of Electrical and Computer Engineering at Ben-Gurion university, {B}eer {S}heva 84105, Israel (e-mail:  \texttt{doreli@post.bgu.ac.il}). The work of D.Elimelech was supported by the ISRAEL SCIENCE FOUNDATION (grant No.  985/23).}~~~~~~~~~~~Wasim~Huleihel\thanks{W. Huleihel is with the Department of Electrical Engineering-Systems at Tel Aviv university, {T}el {A}viv 6997801, Israel (e-mail:  \texttt{wasimh@tauex.tau.ac.il}). The work of W. Huleihel was supported by the ISRAEL SCIENCE FOUNDATION (grant No. 1734/21).}}

\maketitle

\begin{abstract}

In this paper, we investigate the problem of deciding whether two standard normal random vectors $\s{X}\in\mathbb{R}^{n}$ and $\s{Y}\in\mathbb{R}^{n}$ are correlated or not. This is formulated as a hypothesis testing problem, where under the null hypothesis, these vectors are statistically independent, while under the alternative, $\s{X}$ and a randomly and uniformly permuted version of $\s{Y}$, are correlated with correlation $\rho$. We analyze the thresholds at which optimal testing is information-theoretically impossible and possible, as a function of $n$ and $\rho$. To derive our information-theoretic lower bounds, we develop a novel technique for evaluating the second moment of the likelihood ratio using an orthogonal polynomials expansion, which among other things, reveals a surprising connection to integer partition functions. We also study a multi-dimensional generalization of the above setting, where rather than two vectors we observe two databases/matrices, and furthermore allow for partial correlations between these two.
   
\end{abstract}

\section{Introduction}

Consider the following binary hypothesis testing problem. Under the null hypothesis, two $n$-dimensional standard normal random vectors $\s{X} = (X_1,\ldots,X_n)$ and $\s{Y} = (Y_1,\ldots,Y_n)$ are drawn \emph{independently} at random. Under the alternative hypothesis, the entries of $\s{X}$ are \emph{correlated} with a randomly and uniformly permuted version of the entries of $\s{Y}$ (see, Fig.~\ref{fig:comp} for an illustration). Then, under what conditions, one can infer/decide whether $\s{X}$ and $\s{Y}$ are correlated or not?

\begin{figure}[ht]
\centering
\begin{overpic}[scale=0.3]
    {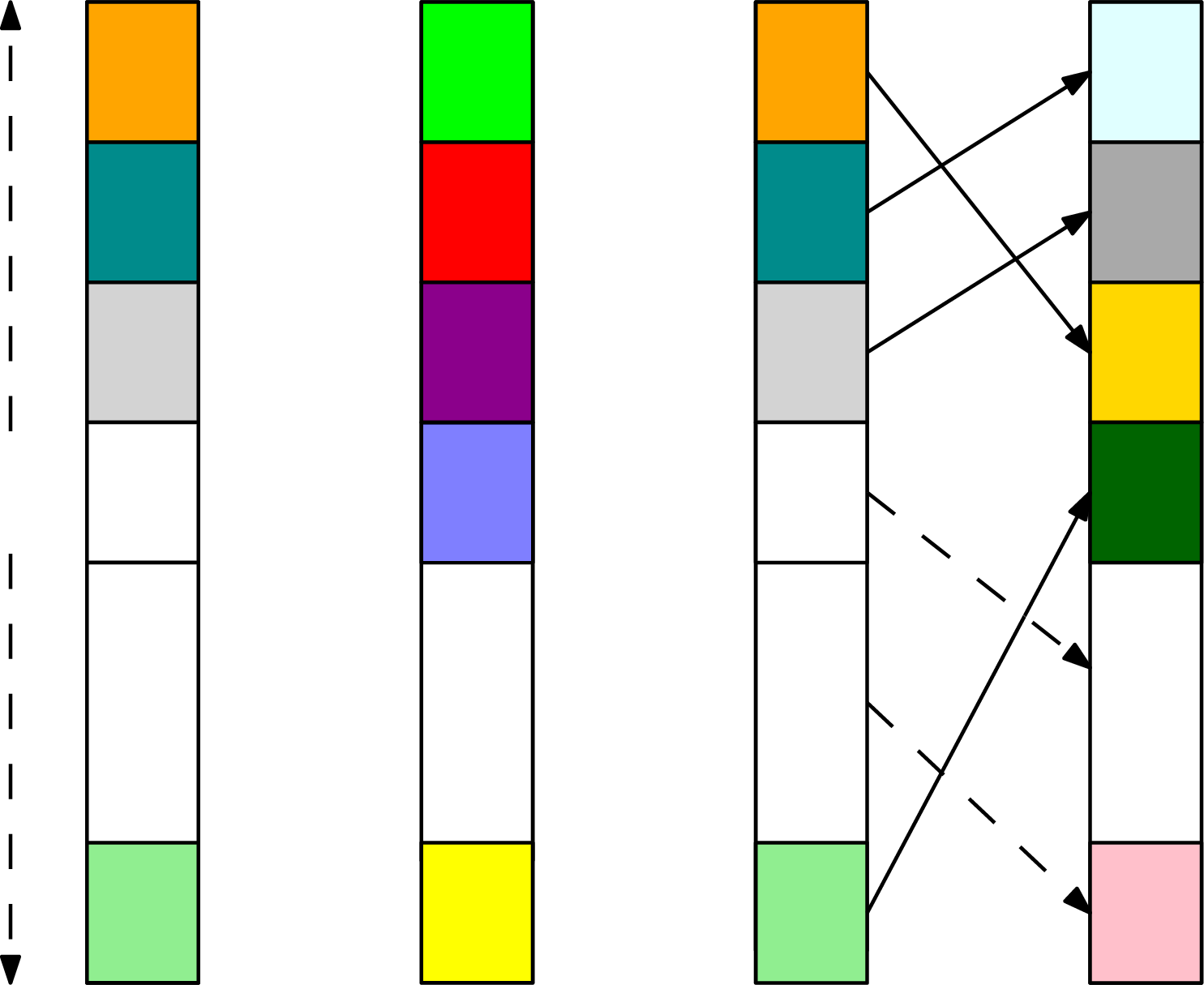}
    \put(22.2,82){$\calH_0$} 
    \put(77.7,82){$\calH_1$} 
    
    \put(-0.5,39.8){\begin{turn}{90}$n$\end{turn}}
    \put(8.3,74){$X_1$} 
    \put(8.3,62.5){$X_2$} 
    \put(8.3,50.8){$X_3$} 
    \put(8.3,39){$X_4$} 
    \put(11,25){$\vdots$} 
    \put(8.3,4.4){$X_n$} 
    \put(79.5,43){$\sigma$} 

    \put(64,74){$X_1$} 
    \put(64,62.5){$X_2$} 
    \put(64,50.8){$X_3$} 
    \put(64,39){$X_4$} 
    \put(66.7,25){$\vdots$} 
    \put(63.8,4.4){$X_n$} 

    \put(36.7,74){$Y_1$} 
    \put(36.7,62.5){$Y_2$} 
    \put(36.7,50.8){$Y_3$} 
    \put(36.7,39){$Y_4$} 
    \put(39,25){$\vdots$} 
    \put(36.7,4.4){$Y_n$} 

    \put(92,74){$Y_1$} 
    \put(92,62.5){$Y_2$} 
    \put(92,50.8){$Y_3$} 
    \put(92,39){$Y_4$} 
    \put(94.8,25){$\vdots$} 
    \put(92,4.4){$Y_n$} 
    
\end{overpic}
\caption{
An illustration of the detection problem. On the left are the uncorrelated vectors under the null hypothesis $\calH_0$. On the right are the vectors $\s{X}$ and $\s{Y}$ under the hypothesis $\calH_1$, where correlated elements are marked with a similar color.
}
\label{fig:comp}
\end{figure}

The above setting is motivated by a recent line of work on what is known as the \emph{data alignment problem}, introduced and explored in, e.g., \cite{10.1109/ISIT.2018.8437908,pmlr-v89-dai19b}. In this problem, $\s{X}$ and $\s{Y}$ are $\mathbb{R}^{n\times d}$ random matrices, with $n$ typically designating the number of users, each with $d$ features. As in the problem above, there is an unknown (or randomly generated) permutation that matches users in $\s{X}$ with those in $\s{Y}$. When a pair of entries from these databases is matched, their features are dependent according to a known distribution, whereas for unmatched entries, the features are independent. In the recovery problem, the goal is to \emph{recover} the unknown permutation. When the two databases have independent standard normal entries, with correlation coefficient $\rho$ between the entries of matched rows, it has been demonstrated in \cite{pmlr-v89-dai19b} that perfect recovery is attainable if $\rho^2 = 1-o(n^{-4/d})$, while it becomes impossible if $\rho^2 = 1-\omega(n^{-4/d})$ as both $n$ and $d$ tend to infinity. 

The \emph{detection} problem, which is more relevant to our paper, has also undergone extensive investigation in \cite{9834731,nazer2022detecting,tamir,HuleihelElimelech}, in the Gaussian case, and recently, for general distributions in \cite{paslev2023testing}. In a nutshell, in the regime where $d\to\infty$, if $\rho^2d\to0$ weak detection (performing slightly better than random guessing) is information-theoretically impossible, independently of the value of $n$, while if $\rho^2d\to\infty$, efficient strong detection (with vanishing error probability) is possible. While the statistical limits for detection and recovery are clear when $d\to\infty$, the case where $d$ is fixed is still a mystery. Specifically, for $d\geq d_0$, and some $d_0\in\mathbb{N}$, it was shown in \cite{HuleihelElimelech} that if $\rho^2 = 1-o(n^{-2/(d-1)})$ then strong detection is possible, while if $\rho\leq\rho^\star(d)$, for some function $\rho^\star(d)$ of $d$, then strong detection is impossible. In particular, for $d=1$ we have $\rho^\star(1)=1/2$, while an algorithmic upper bound in this case is missing. Accordingly, this evident substantial gap between the current known upper and lower bounds, even in the elementary case of $d=1$, sets the main goal of our paper. 

The main results of our work are sharp information-theoretic thresholds for the impossibility and possibility of strong detection in the one-dimensional case, thereby closing the aforementioned gap in \cite{HuleihelElimelech,nazer2022detecting,paslev2023testing,tamir}. Specifically, we prove that strong detection is impossible when $\rho^2$ is bounded away from $1$ and possible when $\rho^2=1-o(n^{-4})$. Another significant aspect of our results concerns with the proof technique of our lower bound, in which we abandon the standard approach of upper bounding the second moment of the likelihood ratio directly. Instead, we use an orthogonal decomposition of the likelihood ratio with respect to (w.r.t.) Hermite polynomials, and compute the second moment using Parseval's identity. Not only that this new approach enables us to improve upon the $\rho^{\star}$ threshold from \cite{HuleihelElimelech}, and close the gap, it also reveals an intriguing connection between random permutations and integer partition functions.

Then, in the second part of our paper, we use the techniques established for the one-dimensional, and study a generalized version, where $\s{X}$ and $\s{Y}$ are now $n\times d$ databases/matrices, and further consider the scenario of partial correlation. To wit, under the alternative hypothesis, only a random subset of $k$ out of the $n$ rows of $\s{X}$ are correlated with some $k$ rows of a randomly permuted version of $\s{Y}$. This aims to model the natural and practical scenario where in two different databases (e.g., movie rating datasets) only a subset of the total population of users in each database, is common. For this model, we prove non-trivial thresholds for impossibility and possibility of weak and strong detection w.r.t. the parameters $\rho,d,n$ and $k$. For example, in the fully correlated case where $k=n$, our results coincide with the sharp thresholds proved in \cite{HuleihelElimelech}. When $k=O(\log n)$, our results suggest that detection becomes statistically harder compared to the fully correlated model, regardless of the value of $d$.

We will now present a brief overview of a few related works. The problem of partially recovering the hidden alignment was investigated in \cite{9174507}. In \cite{ShiraniISIT} necessary and sufficient conditions for successful recovery using a typicality-based framework were developed. Furthermore, \cite{Bakirtas2021DatabaseMU} and \cite{Bakirtas2022DatabaseMU} addressed alignment recovery in scenarios involving feature deletions and repetitions, respectively. Finally, the task of joint detection and recovery was proposed and analyzed in \cite{tamir}. We also mention a series of papers \cite{riederer2016linking,8993856,9354190,unnikrishnan2014asymptotically}, in which a variant the database alignment problem (i.e., recovery) was considered. Specifically, these papers deal with the task of matching random sequences to their \textit{distinct} generative distributions, which are either known or unknown but a training sequence from each database is available. While this is similar in nature to the database alignment problem, the techniques and results are quite different. It is noteworthy that the difficulties in database alignment and detection are closely linked to various planted matching problems, particularly the graph alignment problem. This problem involves detecting edge correlations between two random graphs with unlabeled nodes, see, e.g., \cite{GraphAl1,GraphAl2,GraphAl3,GraphAl4,wu2020testing,GraphAl5,GraphAl6,GraphAl7}.

\section{Model Formulation}

\noindent\textbf{Preliminaries and notation.} Throughout this work, we use lower case letters $x$ for scalars, upper case letters $X$ for random variables. We denote the set of integer $\ppp{1,2,\dots n}$ by $[n]$, and with abuse of notation, we use $\binom{[n]}{k}$ in order to denote the set of all subsets of $[n]$ of size $k$. We denote the set of all permutation over $[n]$ by $\mathbb{S}_n$, and for $\sigma\in\mathbb{S}_n$, we let $\sigma(i)$ denote the value to which $\sigma$ maps $i\in[n]$.  For two functions $f,g:\N\to \R$ we say that $f=o(g)$ if $f(n)/g(n)\xrightarrow[]{n\to\infty}0$, and $f=O(g)$ if there exists a constant $C$ such that $f(n)\leq C\cdot g(n)$ for all $n\in \N$. By the same convention, we say that $f=\omega(g)$ if $g=o(f)$ and $f=\Omega(g)$ if $g=O(f)$. For two measures $\mu$ and $\nu$ on the same measurable space such that $\nu$ is absolutely continuous w.r.t. $\mu$ (that is, $\mu(A)=0\implies \nu(A)=0$ for all $A$), we denote the Radon-Nikodym derivative of $\nu$ w.r.t. $\mu$ by $\frac{\mathrm{d}\mu}{\mathrm{d}\nu}$. We use $\calN(\mathbf{\mu},\Sigma)$ to denote the normal distribution with expectation $\mathbf{\mu}$ and covariance matrix $\Sigma$. For $d\in \N$ we use $\calN(\mathbf{\mu},\Sigma)^{\otimes d}$ to denote the independent product of $d$ $\calN(\mathbf{\mu},\Sigma)$-Normal distributions. 

\vspace{0.2cm}
\noindent\textbf{Probabilistic model.} As described in the introduction, we deal with the following hypothesis testing problem, 
%Under the null hypothesis $\calH_0$, the Gaussian databases $X^n$ and $Y^n$ are generated independently with $X_1,\ldots,X_n,Y_1,\ldots,Y_n\sim N(0_d,\mathbf{I}_d)$. Let $\mathbb{P}_0$ denote the resulting distribution over $(\s{X},\s{Y})$. Under the alternative hypothesis $\calH_1$,  the rows of $X^n$ are correlated with some an unknown permuted version of the rows of $Y^n$. Specifically, let $\sigma$ be an unknown permutation of $[n]$. Then, for each $i\in[n]$, under $\calH_1$, the random vectors $X_i$ and $Y_{\sigma(i)}$ are jointly Gaussian with correlation $\rho\neq0$. To conclude, we deal with the following hypothesis testing problem,
%\begin{equation}
%\begin{aligned}\label{eqn:decproblem}
%    &\calH_0: (X_1,Y_1),\ldots,(X_n,Y_n)\stackrel{\mathrm{i.i.d}}{\sim} \calN^{\otimes d}(\mathbf{0},\mathbf{I}_{2\times 2})\\
%& \calH_1:
%(X_1,Y_{\sigma(1)}),\ldots,(X_{\sigma(n)},Y_{\sigma(n)}) \overset{\mathrm{i.i.d}}{\sim} \calN^{\otimes d}(\mathbf{0},\Sigma_\rho),
%\end{aligned}
%\end{equation}
\begin{equation}
\begin{aligned}\label{eqn:decproblem}
    &\calH_0: (X_1,Y_1),\ldots,(X_n,Y_n)\stackrel{\mathrm{i.i.d}}{\sim} \calN(\mathbf{0},\Sigma_0)\\
& \calH_1:
(X_1,Y_{\sigma(1)}),\ldots,(X_{\sigma(n)},Y_{\sigma(n)}) \overset{\mathrm{i.i.d}}{\sim} \calN(\mathbf{0},\Sigma_\rho),
\end{aligned}
\end{equation}
where $\sigma\sim\s{Unif}(\mathbb{S}_n)$, and for $|\rho|\leq1$,
\begin{align}
    \Sigma_\rho\triangleq\begin{bmatrix}
1 & \rho\\
\rho & 1
\end{bmatrix}.
\end{align}
Given $\sigma\in \S_n$, we denote the joint distribution measure of $(\s{X},\s{Y})$  under the hypothesis $
\calH_{1}$ by $\P_{\calH_1\vert\sigma}$, and under the hypothesis $
\calH_{0}$ by $\P_{\calH_0}$.

\vspace{0.2cm}
\noindent\textbf{Problem formulation.} A test function for our problem is a function $\phi:\R^{n}\times\R^{n}\to \{0,1\}$, designed to determine which of the hypothesis $\calH_0,\calH_1$ occurred. The risk of a test $\phi$ is defined as the sum of its average Type-I and Type-II error probabilities, i.e.,
\begin{align}
\s{R}(\phi)\triangleq \P_{\calH_0}[\phi(\s{X},\s{Y})=1]+\P_{\calH_1}[\phi(\s{X},\s{Y})=0],
\end{align}
with $\P_{\calH_1} = \bE_{\sigma\sim\s{Unif}(\mathbb{S}_n)}[\P_{\calH_1\vert\sigma}]$. The Bayesian risk associated with our hypothesis detection problem is 
\begin{align}
\s{R}^\star\triangleq\inf_{\phi:\R^{n}\times\R^{n}\to \{0,1\}}\s{R}(\phi).
\end{align}
We remark that $\s{R}$ is a function of $\rho$ and $n$, however, we omit them from our notation for the benefit of readability.  

\begin{definition}
A sequence $(\rho,n)=(\rho_k,n_k)_k$ is said to be:
\begin{enumerate}
    \item Admissible for strong detection if 
    % there exists an algorithm $\phi$ such that 
    $\lim_{k\to \infty}\s{R}^\star=0$.
    \item Admissible for weak detection if
    % if there exists an algorithm $\phi$ such that 
    $\limsup_{k\to \infty} \s{R}^\star<1$.
\end{enumerate}
Clearly, admissibility of strong detection implies the admissibility of weak detection.
\end{definition}
While admissibility of strong detection clearly refers to the existence of algorithms that correctly detects with probability that tends to $1$, weak detection implies the the existence of algorithms which are asymptotically better then randomly guessing which of the hypothesis occurred. The next two sections are devoted to the investigation of the one-dimensional detection problem in \eqref{eqn:decproblem}. As mentioned in the Introduction, we investigate also a generalization of \eqref{eqn:decproblem} to high-dimensions and partial correlations (see, Section~\ref{sec:extension}). 

\section{Phase Transition in 1D}\label{sec:1D}

In this section, we present our main results starting with lower bounds. 

\vspace{0.2cm}
\noindent\textbf{Lower bounds.} We have the following impossibility guarantees for strong and weak detection, respectively. 
\begin{theorem}[Impossibility of strong detection]\label{th:lowerStrong}
Consider the detection problem in \eqref{eqn:decproblem}.
For any sequence $(\rho,n)=(\rho_k,n_k)_k$ such that $\rho^2=1-\Omega(1)$, we have 
\begin{align}
    \s{R}^\star=\Omega(1).
\end{align}
Namely, strong detection is impossible.
\end{theorem}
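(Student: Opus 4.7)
The plan is to bound the second moment $M := \E_{\calH_0}[L^2]$ of the likelihood ratio $L = d\P_{\calH_1}/d\P_{\calH_0}$, and to conclude $\s R^\star = \Omega(1)$ via the standard consequence $\s R^\star \geq (2M)^{-1}$ of Bretagnolle--Huber combined with Jensen's inequality $D(\P_{\calH_1}\|\P_{\calH_0}) \leq \log M$. Conditional on $\sigma$, the likelihood ratio factorizes as $L_\sigma(\s X,\s Y) = \prod_{i=1}^n f_\rho(X_i, Y_{\sigma(i)})$, with $f_\rho(x,y)$ the Gaussian likelihood ratio of $\calN(\mathbf 0,\Sigma_\rho)$ against its product of marginals, so that $M = \E_{\sigma,\sigma'}\E_{\calH_0}[L_\sigma L_{\sigma'}]$.

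The core of the argument is to evaluate the inner expectation via the Mehler expansion $f_\rho(x,y) = \sum_{k\geq 0}\rho^k H_k(x)H_k(y)$ in the $L^2(\gamma)$-orthonormal basis of Hermite polynomials $\{H_k\}$, in place of direct moment estimates. Expanding every factor of $L_\sigma L_{\sigma'}$ into this basis and applying Parseval/orthogonality coordinate-by-coordinate twice --- once in the $X_i$'s, once in the $Y_j$'s --- should isolate the surviving multi-indices: the $X$-side orthogonality collapses the two indices coming from $L_\sigma$ and $L_{\sigma'}$ to a single vector $(k_1,\ldots,k_n)$, while the $Y$-side orthogonality then forces this vector to be \emph{constant on each cycle} of $\tau := \sigma\sigma'^{-1}$. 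Summing the resulting geometric series per cycle of length $\ell$ yields
$$\E_{\calH_0}[L_\sigma L_{\sigma'}] \;=\; \prod_{c\in\mathrm{Cycles}(\tau)}\frac{1}{1-\rho^{2|c|}}.$$

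Since $\tau$ is itself uniform on $\S_n$, the remaining average is a cycle-index computation that I plan to execute with the exponential formula (P\'olya cycle index) for the symmetric group:
$$M \;=\; [t^n]\exp\!\Biggl(\sum_{k\geq 1}\frac{t^k}{k(1-\rho^{2k})}\Biggr) \;=\; [t^n]\prod_{j\geq 0}\frac{1}{1-\rho^{2j}t},$$
where the second equality uses $1/(1-\rho^{2k}) = \sum_{j\geq 0}\rho^{2jk}$ together with $\sum_{k\geq 1}x^k/k = -\log(1-x)$. Expanding the product in $t$, the $t^n$-coefficient is identified with an integer partition generating function, $M = \sum_{\lambda:\ell(\lambda)\leq n}\rho^{2|\lambda|}$, where the sum runs over integer partitions $\lambda$ with at most $n$ parts --- exactly the surprising connection to integer partitions advertised in the abstract. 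Dropping the length constraint yields the uniform bound $M \leq \prod_{j\geq 1}(1-\rho^{2j})^{-1}$, which under the hypothesis $\rho^2 \leq 1-\epsilon$ is finite (comparison with a geometric series) and depends only on $\epsilon$, hence $M = O_\epsilon(1)$ and $\s R^\star = \Omega(1)$. I expect the main obstacle to lie in the Hermite step: naive multinomial or direct-moment estimates do not see the cycle structure of $\tau$ and produce bounds that diverge unless $\rho$ is uniformly small, so the delicate bookkeeping that precisely isolates the cycle-constant multi-indices is what is needed to close the $\rho^\star$ gap from \cite{HuleihelElimelech}.
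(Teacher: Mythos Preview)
Your argument is correct and arrives at exactly the same identity as the paper,
\[
\E_{\calH_0}[\calL^2]=\sum_{m\ge 0}\abs{\s{Par}(m,\le_n)}\,\rho^{2m}\le \prod_{j\ge 1}\frac{1}{1-\rho^{2j}},
\]
but the route is genuinely different. The paper expands the \emph{full} likelihood $\calL(\s X,\s Y)$ in the multivariate Hermite basis, computes the projection coefficients $\langle H_{\alpha,\beta},\calL\rangle=\rho^{|\alpha|}\P[\sigma(\beta)=\alpha]$ (Lemma~\ref{lem:OrthCeoff}), and then reaches the partition sum by counting equivalence classes $[\alpha]$ of multi-indices under coordinate permutation. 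It never passes through the cycle formula; indeed, the paper derives the cycle expression $\E_\tau\prod_\ell(1-\rho^{2\ell})^{-N_\ell(\tau)}$ separately (Proposition~\ref{prop:nastybound}) and presents it as the ``straightforward'' approach of prior work whose analysis is ``highly complicated'' and yields only $\rho^2<1/2$.

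Your proof shows that this complication is illusory once one applies the symmetric-group exponential formula: the cycle product averages cleanly to $[t^n]\prod_{j\ge 0}(1-\rho^{2j}t)^{-1}$, which is the partition sum. So you use Mehler factor-by-factor to get the cycle formula, then a generating-function identity to get partitions; the paper uses Parseval on $\calL$ globally and a combinatorial bijection to get partitions directly. The paper's route generalizes more readily to the partial-correlation model of Section~\ref{sec:extension} (where the relevant permutation statistic $\s N_\ell(\sigma,\calK)$ no longer has a clean cycle-index), while your route has the virtue of explaining \emph{why} the cycle-based second-moment computation of \cite{HuleihelElimelech,nazer2022detecting} and the paper's Hermite computation give the same answer---a connection the paper leaves implicit.
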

\begin{theorem}[Impossibility of weak detection]\label{th:lowerWeak}
Consider the detection problem in \eqref{eqn:decproblem}. For any sequence $(\rho,n)=(\rho_k,n_k)_k$ such that $\rho^2=o(1)$, we have 
\begin{align}
    \s{R}^\star=1-o(1).
\end{align}
Namely, weak detection is impossible.
\end{theorem}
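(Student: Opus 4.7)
The plan is to apply the second moment method: since $\s{R}^\star = 1 - \TV(\P_{\calH_0},\P_{\calH_1}) \geq 1 - \tfrac{1}{2}\sqrt{\bE_{\calH_0}[L^2]-1}$ with $L = d\P_{\calH_1}/d\P_{\calH_0}$, it suffices to show $\bE_{\calH_0}[L^2] = 1 + o(1)$ whenever $\rho^2 = o(1)$.

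To compute the second moment I would use the Hermite expansion that the introduction advertises. Writing $L = \frac{1}{n!}\sum_\sigma L_\sigma$ with $L_\sigma = \prod_i \phi_\rho(X_i,Y_{\sigma(i)})/(\phi_0(X_i)\phi_0(Y_{\sigma(i)}))$, Mehler's formula gives $\phi_\rho(x,y)/(\phi_0(x)\phi_0(y)) = \sum_{k\geq 0}\rho^k h_k(x)h_k(y)$, where $h_k$ are the orthonormal Hermite polynomials in $L^2(\phi_0)$. Expanding $L_\sigma L_\tau$ and taking expectation under $\calH_0$, the orthonormality $\bE[h_j(Z)h_k(Z)]=\delta_{jk}$ forces the surviving multi-indices $\mathbf{k}\in\N^n$ to satisfy $k_i = k_{\pi(i)}$ for $\pi := \tau^{-1}\sigma$, i.e.\ to be constant on the cycles of $\pi$. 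Summing the geometric series over each cycle yields
$$\bE_{\calH_0}[L_\sigma L_\tau] = \prod_{j}\frac{1}{1-\rho^{2\ell_j(\pi)}},$$
where $\ell_j(\pi)$ are the cycle lengths of $\pi$. Since $\pi \sim \s{Unif}(\S_n)$, averaging via the exponential formula for the cycle index gives $\bE[L^2] = [z^n]\exp\!\bigl(\sum_{j\geq 1}z^j/(j(1-\rho^{2j}))\bigr)$. Expanding the inner factor as $(1-\rho^{2j})^{-1}=\sum_{m\geq 0}\rho^{2jm}$ and swapping sums reduces this to $[z^n]\prod_{m\geq 0}(1-z\rho^{2m})^{-1}$, and Euler's $q$-series identity then produces the closed form
$$\bE_{\calH_0}[L^2] = \prod_{k=1}^n \frac{1}{1-\rho^{2k}} = \sum_{\lambda:\,\ell(\lambda)\leq n}\rho^{2|\lambda|},$$
which is the promised partition-function representation.

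To conclude, I would bound this by the infinite product and use $\log\prod_{k\geq 1}(1-\rho^{2k})^{-1} = \sum_{m\geq 1}\rho^{2m}/(m(1-\rho^{2m})) = O(\rho^2)$ in any bounded neighborhood of $\rho=0$, so that $\bE[L^2] \leq 1 + O(\rho^2) = 1+o(1)$ whenever $\rho^2 = o(1)$. The main obstacle I anticipate lies in the Hermite-expansion step: establishing the clean combinatorial identity $\bE[L_\sigma L_\tau] = \prod_j(1-\rho^{2\ell_j(\pi)})^{-1}$ requires careful bookkeeping of how the Hermite indices attached to the $X_i$'s and $Y_m$'s must match across the two factors, a matching dictated jointly by $\sigma$ and $\tau$. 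Without the Mehler expansion, a direct Gaussian-integral evaluation of $\bE[L_\sigma L_\tau]$ is doable but does not readily expose the partition-theoretic structure that is essential to the sharp bound; in particular, the subtle cancellations needed to reach $\prod_{k=1}^n(1-\rho^{2k})^{-1}$ would be opaque.
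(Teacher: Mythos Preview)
Your proposal is correct and reaches exactly the same closed form $\bE_{\calH_0}[L^2]=\prod_{k=1}^n(1-\rho^{2k})^{-1}=\sum_{m\geq0}|\s{Par}(m,\leq_n)|\rho^{2m}$ as the paper, but by a genuinely different route. The paper expands $\calL$ itself in the Hermite basis of $L^2(\P_{\calH_0})$ on $\R^{2n}$, computes the Fourier coefficients $\langle H_{\alpha,\beta},\calL\rangle=\rho^{|\alpha|}\P[\sigma(\beta)=\alpha]$ via a change of measure (Lemma~\ref{lem:OrthCeoff}), and then applies Parseval; the partition function emerges from counting permutation-orbits of index vectors. You instead expand each factor $L_\sigma$ via Mehler, obtain the cycle-product formula $\bE_{\calH_0}[L_\sigma L_\tau]=\prod_j(1-\rho^{2\ell_j(\tau^{-1}\sigma)})^{-1}$, and then average exactly using the exponential formula and Euler's $q$-identity $\prod_{m\geq0}(1-zq^m)^{-1}=\sum_n z^n/\prod_{k=1}^n(1-q^k)$. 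Your path is closer in spirit to the prior work the paper cites (see Proposition~\ref{prop:nastybound} and its references), which also reached the cycle-product formula but then resorted to Poisson approximation of the cycle counts and obtained only $\rho^2<1/2$; your key improvement over that line is the exact generating-function averaging, which recovers the sharp bound. The paper's Parseval route, by contrast, bypasses cycle structure entirely; its payoff is that it extends cleanly to the partial-correlation high-dimensional model of Section~\ref{sec:extension}, where the analogue of your cycle average (Proposition~\ref{prop:nastybound}) involves $\s{N}_\ell(\sigma,\calK)$ and no longer admits a tractable exponential-formula evaluation.

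One small correction to your self-assessment: the step you flag as the main obstacle, namely deriving $\bE[L_\sigma L_\tau]=\prod_j(1-\rho^{2\ell_j})^{-1}$, is standard and does not actually require Mehler; it is done by direct Gaussian integration in the references and in Appendix~\ref{app:cyclesPer} here. The genuinely new ingredient in your argument is the subsequent exact averaging via the cycle index and Euler's identity.
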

Let us discuss the above results. From Theorem~\ref{th:lowerStrong} we see that strong detection is statistically impossible if $\rho^2$ is bounded away from $1,$ and in particular, for any \emph{fixed} $\rho^2<1$. This improves \emph{strictly} the known bound in \cite{HuleihelElimelech}, where it was shown that strong detection is statistically impossible whenever $\rho^2<1/2$. As for weak detection, our lower bound in Theorem~\ref{th:lowerWeak} implies that weak detection is statistically impossible whenever the correlation $\rho^2$ decays to zero; this matches the results in \cite{HuleihelElimelech}.
%Below we will show that strong detection is possible if $\rho^2$ approaches unity sufficiently fast. \black{The lower bound of Theorem~\ref{th:lowerWeak}, which matches the one of \cite{HuleihelElimelech}, shows  that weak detection is impossible whenever the correlation $\rho^2$ decays to zero. Below we show that weak detection is possible when $\rho^2$ is bounded away from zero, which also improves upon the existing results, which does not apply to the one dimensional case (as the threshold of \cite{HuleihelElimelech} is non-trivial only when $d\geq 60\cdot\log2$). Perhaps unsurprisingly, the above results show that weak detection is statistically easier than strong detection as the condition on the correlation for strong detection is (i.e., $\rho^2=\Omega(1)$) is weaker than the condition for strong detection (i.e, $\rho^2=1-o(1)$).}

\vspace{0.2cm}
\noindent\textbf{Upper bounds.} Next, we present our upper bounds, starting with a strong detection guarantee. To that end, we now define a testing procedure and analyze its performance. For simplicity of notation, we denote $Q_{XY} = \calN(\mathbf{0},\Sigma_0)$ and $P_{XY} = \calN(\mathbf{0},\Sigma_\rho)$, and the associated joint probability density functions by $f_{P}$ and $f_{Q}$, respectively. Then, for $x,y\in\mathbb{R}$, let the individual likelihood of $x$ and $y$ be defined by
\begin{align}
    \calL_{\s{I}}&(x,y)\triangleq \log\frac{f_{P}(x,y)}{f_Q(x,y)}\\
    &=-\frac{1}{2}\log(1-\rho^2)-\frac{\rho^2}{2(1-\rho^2)}(x^2+y^2) +\frac{\rho}{1-\rho^2}xy,
\end{align}
and define
\begin{align}
    \phi_{\s{count}}(\s{X},\s{Y})\triangleq\Ind\ppp{\sum_{i,j=1}^n\calI(X_i,Y_j)\geq \frac{1}{2} n\calP_{\rho}},\label{eqn:testcount}
\end{align}
where $\calI(X_i,Y_j)\triangleq\Ind\ppp{\calL_{\s{I}}(X_{i},Y_{j})\geq\tau_{\s{count}}}$,  $\calP_{\rho}\triangleq\pr_{P}\pp{\calL_{\s{I}}(A,B)\geq \tau_{\s{count}}}$, and $\tau_{\s{count}}\in\mathbb{R}$, where $\pr_{P}(\mathcal{F})$ means that the probability of the event $\mathcal{F}$ is evaluated w.r.t. the joint distribution $P_{XY}$. Similarly, we also define $\calQ_{\rho}\triangleq\pr_{Q}\pp{\calL_{\s{I}}(A,B)\geq \tau_{\s{count}}}$, where the probability is evaluated w.r.t. the joint distribution $Q_{XY}$.
%\begin{align}\label{eqn:p_rho}
%    \pr_\rho \triangleq \mathcal{N}\p{\begin{bmatrix}
%0 \\
%0 
%\end{bmatrix},\begin{bmatrix}
%1 & \rho\\
%\rho & 1
%\end{bmatrix}}.
%\end{align}
\sloppy
Roughly speaking, $\phi_{\s{count}}$ counts the number of pairs whose likelihood individually exceed the threshold $\frac{1}{2} n\calP_{\rho}$. Intuitively, since the individual likelihoods are proportional to the products of the entries of $\s{X}$ and $\s{Y}$, we expect the count to be larger under the alternative hypothesis, when these vectors are correlated. While the choice of the threshold $\frac{1}{2}n\calP_\rho$ is not unique (and perhaps even not optimal), it allows us to control the Type-I and Type-II error probabilities. Indeed, intuitively, when $\rho^2$ is sufficiently close to unity (which as will be seen below is the relevant regime where the count test is successful) we get that $\calQ_\rho=o(n^2\calP_\rho)$, and therefore, the summation of $\calL_{\s{I}}(X_i,Y_i)$ over uncorrelated pairs $\{(X_i,Y_j)\}_{i,j}$ is ``relatively small" with high probability (e.g., by the law of large numbers); this eliminates errors of Type-I. Moreover, we have that $\calP_\rho$ is strictly positive, and thus the summation of $\calL_{\s{I}}(X_i,Y_i)$ over correlated pairs $\{(X_i,Y_j)\}_{i,j}$ is strictly positive (again, by the law of large numbers), which eliminates errors of Type-II. It should be mentioned here that a similar test was proposed in \cite{HuleihelElimelech}. Specifically, in \cite{HuleihelElimelech}, the number of \emph{normalized} inner products is being counted, while here we count the inner products directly. As it turns out, the normalization procedure in \cite{HuleihelElimelech} excludes the $d=1$ case we consider in this section.

To present our main result, we need the following definitions. For $\theta\in(-d_{\s{KL}}(Q_{XY}||P_{XY}),d_{\s{KL}}(P_{XY}||Q_{XY}))$, we define the Chernoff's exponents $E_{P},E_Q:\mathbb{R}\to[-\infty,\infty)$ as the Legendre transforms of the log-moment generating functions, namely,
\begin{align}
    E_Q(\theta)&\triangleq\sup_{\lambda\in\mathbb{R}}\pp{\lambda\theta-\psi_Q(\lambda)},\\
    E_P(\theta)&\triangleq\sup_{\lambda\in\mathbb{R}}\pp{\lambda\theta-\psi_P(\lambda)},\label{eq:chernoff}
\end{align}
where $\psi_Q(\lambda)\triangleq\log\bE_Q[\exp(\lambda\calL_{\s{I}}(A,B))]$ and $\psi_P(\lambda)\triangleq\log\bE_P[\exp(\lambda\calL_{\s{I}}(A,B))]$. Closed-form expressions for $\psi_Q$ and $\psi_P$ in our setting, can be found in Lemma~\ref{lem:simplecalc}. Then, applying standard Chernoff's bound on these tails yield that if the threshold $\tau_{\s{count}}$ is such that $\tau_{\s{count}}\in(-d_{\s{KL}}(Q_{XY}||P_{XY}),d_{\s{KL}}(P_{XY}||Q_{XY}))$, then
\begin{subequations}\label{eqn:chen}
\begin{align}
    \calQ_{\rho}&\leq \exp\pp{- E_Q(\tau_{\s{count}})},\label{eqn:chen1}\\
    \calP_{\rho}&\geq 1-\exp\pp{- E_P(\tau_{\s{count}})}.\label{eqn:chen2}
\end{align}
\end{subequations}
%Note that $\psi_P(\lambda) = \psi_Q(\lambda+1)$, and thus $E_P(\theta) = E_Q(\theta)-\theta$. In particular, $E_P$ and $E_Q$ are non-negative convex functions. Moreover, since $\psi'_Q(0) = -d_{\s{KL}}(Q_{XY}||P_{XY})$ and $\psi'_Q(1) = d_{\s{KL}}(P_{XY}||Q_{XY})$, we have $E_Q(-d_{\s{KL}}(Q_{XY}||P_{XY})) = E_P(d_{\s{KL}}(P_{XY}||Q_{XY}))=0$ and hence $E_Q(d_{\s{KL}}(P_{XY}||Q_{XY})) = d_{\s{KL}}(P_{XY}||Q_{XY})$ and $E_P(-d_{\s{KL}}(Q_{XY}||P_{XY}))=d_{\s{KL}}(Q_{XY}||P_{XY})$. Finally, it is well known that $\lambda\theta-\psi_Q(\lambda)$ is concave and has derivative at zero given by $\theta+d_{\s{KL}}(Q_{XY}||P_{XY})$. This implies that the maximizer $\lambda^\star$ to the concave optimization $E_Q(\theta)$ can be taken to be non-negative if $\theta\geq - d_{\s{KL}}(Q_{XY}||P_{XY})$. The same is true for $E_P(\theta)$ if  $\theta\leq d_{\s{KL}}(P_{XY}||Q_{XY})$.
We are now in a position to state our main result. %We mention here that our result holds for any natural $d\geq1$, while in \cite{HuleihelElimelech} it is assumed that $d\geq d_0$, for some fixed $d_0\in\mathbb{N}$ (most notably, excluding the interesting $d=1$ case). We have the following result.
\begin{theorem}[Count test strong detection]\label{thm:upper}
Consider the detection problem in \eqref{eqn:decproblem}, and the count test in \eqref{eqn:testcount}. Suppose there is a $\tau_{\s{count}}\in(-d_{\s{KL}}(Q_{XY}||P_{XY}),d_{\s{KL}}(P_{XY}||Q_{XY}))$ with
\begin{subequations}\label{eqn:Countcond}
	\begin{align}
    E_Q(\tau_{\s{count}}) &= \omega\p{\log n},\label{eqn:Countcond1}\\
    E_P(\tau_{\s{count}}) &= \omega(n^{-1}).\label{eqn:Countcond2}
\end{align}
\end{subequations}
Then, $\s{R}(\phi_{\s{count}})\to0$, as $n\to\infty$. In particular, for \eqref{eqn:Countcond} to hold it suffices that $\rho^2 =1-o(n^{-4})$.
\end{theorem}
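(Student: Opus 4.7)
The plan is to split the risk $\s{R}(\phi_{\s{count}})$ into its Type-I and Type-II components, bound each separately, and then verify \eqref{eqn:Countcond} in the regime $\rho^2 = 1 - o(n^{-4})$. Let $S \triangleq \sum_{i,j=1}^n \calI(X_i, Y_j)$ denote the test statistic, so that $\{\phi_{\s{count}} = 1\} = \{S \geq n\calP_\rho/2\}$. Under $\calH_0$ every pair $(X_i, Y_j)$ is marginally $Q_{XY}$-distributed, so $\bE_{\calH_0}[S] = n^2\calQ_\rho$ by linearity and Markov's inequality gives
\[
\P_{\calH_0}[\phi_{\s{count}} = 1] \;\leq\; \frac{2n\calQ_\rho}{\calP_\rho}.
\]
Combining with \eqref{eqn:chen1}--\eqref{eqn:chen2}: condition \eqref{eqn:Countcond2} forces $\calP_\rho \geq 1 - e^{-E_P(\tau_{\s{count}})} \geq \Omega(\min\{1, E_P(\tau_{\s{count}})\}) = \omega(1/n)$, while \eqref{eqn:Countcond1} forces $\calQ_\rho \leq e^{-\omega(\log n)}$. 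Hence the right-hand side is at most $O(n^2 e^{-E_Q(\tau_{\s{count}})}) \to 0$.

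\textbf{Type-II via Binomial concentration.} Under $\calH_1$ conditional on $\sigma$, the $n$ matched pairs $(X_i, Y_{\sigma(i)})_{i=1}^n$ are i.i.d.\ samples from $P_{XY}$; moreover, each unmatched pair $(X_i, Y_j)$ with $j \neq \sigma(i)$ has $X_i \perp Y_j$ since they are components of different i.i.d.\ samples, hence is marginally $Q_{XY}$-distributed. Restricting the count to the matched indices yields the stochastic lower bound
\[
S \;\geq\; \sum_{i=1}^n \calI(X_i, Y_{\sigma(i)}) \;\stackrel{d}{=}\; \mathrm{Bin}(n, \calP_\rho),
\]
and the multiplicative Chernoff bound gives $\P_{\calH_1}[\phi_{\s{count}} = 0 \mid \sigma] \leq \P[\mathrm{Bin}(n, \calP_\rho) < n\calP_\rho/2] \leq \exp(-n\calP_\rho/8)$. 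This vanishes uniformly in $\sigma$ because \eqref{eqn:chen2} together with \eqref{eqn:Countcond2} imply $n\calP_\rho = \omega(1)$; integrating over $\sigma$ the averaged Type-II error vanishes as well.

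\textbf{Verification for $\rho^2 = 1 - o(n^{-4})$.} Writing $\epsilon \triangleq 1 - \rho^2$, a direct Gaussian moment-generating-function calculation on the quadratic form $\calL_{\s{I}}(X,Y)$ (the content of Lemma~\ref{lem:simplecalc}) gives
\[
\psi_Q(\lambda) \;=\; -\tfrac{\lambda}{2}\log\epsilon \;-\; \tfrac{1}{2}\log\!\left(1 + \tfrac{\rho^2}{\epsilon}\lambda(2-\lambda)\right), \qquad \psi_P(\lambda) = \psi_Q(\lambda+1),
\]
valid for $\lambda$ in a neighborhood of $[0,1]$ that widens as $\epsilon$ shrinks. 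Taking $\tau_{\s{count}}$ as a suitable interior point of the admissible interval and computing the Legendre transforms, one finds $E_Q(\tau_{\s{count}}) \sim \tfrac{1}{2}\log(1/\epsilon)$ and $E_P(\tau_{\s{count}}) \sim \tfrac{1}{2}\log(1/\epsilon)$ to leading order. Since $\epsilon = o(n^{-4})$ forces $\log(1/\epsilon) \geq 4\log n + \omega(1)$, both conditions in \eqref{eqn:Countcond} hold. I expect this final step to be the main technical obstacle: because $\psi_Q$ is singular in the limit $\epsilon \to 0$ (both of its summands individually diverge), the maximizer of $\lambda\tau - \psi_Q(\lambda)$ concentrates at $\lambda \sim 1/\log(1/\epsilon)$, and subleading corrections of order $\log\log(1/\epsilon)$ must be tracked carefully to confirm the stated $n^{-4}$ scaling.
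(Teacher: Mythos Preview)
Your Type-I bound via Markov and your Type-II bound via binomial concentration are both correct and essentially match the paper; the paper also uses Markov for Type-I, and uses Chebyshev rather than the multiplicative Chernoff bound for Type-II, but the resulting requirement $n\calP_\rho = \omega(1)$ is identical.

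Where you diverge from the paper is in the verification step, and here you are making your life harder than necessary. You anticipate having to locate the true maximizer of $\lambda\tau - \psi_Q(\lambda)$, which (as you correctly observe) drifts toward the boundary of the domain as $\epsilon\to 0$ and picks up $\log\log(1/\epsilon)$ corrections. The paper sidesteps this completely: since $E_Q(\tau) = \sup_\lambda[\lambda\tau - \psi_Q(\lambda)]$, any single value of $\lambda$ yields a \emph{lower bound} on the Legendre transform, and a lower bound is all that is needed. The paper fixes $\tau_{\s{count}} = 0$, plugs $\lambda = 1/2$ into $-\psi_Q$ and $\lambda = -1/2$ into $-\psi_P$ (using Lemma~\ref{lem:simplecalc}), and reads off
\[
E_Q(0),\; E_P(0) \;\geq\; -\tfrac{1}{4}\log(1-\rho^2) + \tfrac{1}{2}\log\!\bigl(1-\rho^2/4\bigr) \;=\; \tfrac{1}{4}\log(1/\epsilon) - O(1),
\]
which already delivers both parts of \eqref{eqn:Countcond} once $\epsilon = o(n^{-4})$. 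Your leading-order asymptotic $E_Q(0)\sim\tfrac{1}{2}\log(1/\epsilon)$ is in fact correct and sharper than the paper's bound (it would actually give the improved threshold $\epsilon = o(n^{-2})$), but it is not needed to establish the stated $n^{-4}$ sufficient condition; the ``main technical obstacle'' you flag can be bypassed entirely by not optimizing over $\lambda$ at all.
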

Recall that in Theorem~\ref{th:lowerStrong} we proved that strong detection is statistically impossible for any $\rho^2<1$; therefore, for strong detection to be possible $\rho^2$ must converge to unity. Theorem~\ref{thm:upper} shows that indeed the count test that achieves strong detection if $\rho^2$ approaches unity sufficiently fast. We would like to mention here that there is still a statistical gap between the lower and upper bounds in Theorem~\ref{th:lowerStrong} and Theorem~\ref{thm:upper}, respectively. While according to Theorem~\ref{th:lowerStrong} for strong detection to be possible, $\rho^2$ must converge to unity, as $n\to\infty$, it is not clear yet what is the sufficient rate of convergence. We suspect that Theorem~\ref{thm:upper} is not optimal, and that, slower rates are possible. An intriguing question is to analyze the optimal Neyman-Pearson test, which seems quite challenging similarly to other related problems where a latent combinatorial structure is planted. Finally, we would like to mention here that the strong detection upper bounds in \cite{nazer2022detecting,HuleihelElimelech} exclude the case where $d=1$, and as so Theorem~\ref{thm:upper} is novel.

Next, we consider weak detection. By definition, any test that achieves strong detection achieves weak detection automatically. As so, the count test above achieves weak detection under the same conditions stated in Theorem~\ref{thm:upper}. In fact, it is rather straightforward to show that weak detection is possible using the count test even if, for example, the condition $\rho^2=1-o(n^{-4})$ is replaced by the weaker condition $\rho^2 =1-\s{C}\cdot n^{-4}$, for some constant $\s{C}>0$ (or, more generally, the $\omega$-asymptotic in \eqref{eqn:Countcond} is replaced by some constant $\s{C}>0$). Nonetheless, for weak detection, it turns out that the following rather simpler test exhibits better performance guarantees. Let $\theta\in\mathbb{R}_+$, and define the test,
\begin{align}
    \phi_{\s{comp}}(\s{X},\s{Y})\triangleq\Ind\ppp{\abs{\sum_{i=1}^n(X_i-Y_i)}\leq\theta},\label{eqn:testcomp}
\end{align}
if $\rho\in(0,1]$, and we flip the direction of the inequality in \eqref{eqn:testcomp} if $\rho\in[-1,0)$. This test simply compares the sum of entries of $\s{X}$ and $\s{Y}$. Let $\s{G}\sim\calN(0,1)$ and $\s{G}'\sim\calN(0,1-|\rho|)$ We define the threshold $\theta$ as the value for which 
\begin{align}
    &d_{\s{TV}}(\calN(0,1),\calN(0,1-|\rho|))= \pr\p{|\s{G}|\geq\frac{\theta}{\sqrt{2n}}}-\pr\p{|\s{G}'|\geq \frac{\theta}{\sqrt{2n}}}.\label{eq:thetavalues}
\end{align}
Such a value exists by the definition of the total-variation distance for centered Gaussian random variables (see, e.g., \cite[pg. 10]{devroye2023total}). The intuition behind this choice of $\theta$ is that, as it turns out, the right-hand-side (r.h.s.) of \eqref{eq:thetavalues} is exactly the reward (i.e., $1-\s{R}(\phi)$) associated with the comparison test. Thus, since the total-variation distance at the left-hand-side of \eqref{eq:thetavalues} is clearly positive for any $\rho^2=\Omega(1)$, we obtain that the risk is strictly bounded by unity. We then have the following result.
\begin{theorem}[Comparison test weak detection]\label{thm:upper_comp}
Consider the detection problem in \eqref{eqn:decproblem}, and the comparison test in \eqref{eqn:testcomp}, with $\theta$ given by \eqref{eq:thetavalues}. If $\rho^2 = \Omega(1)$, then $\lim_{n\to\infty}\s{R}(\phi_{\s{comp}})<1$.
\end{theorem}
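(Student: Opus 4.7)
The plan is as follows. First, I would compute the (unconditional) distribution of the scalar test statistic $T\triangleq\sum_{i=1}^n(X_i-Y_i)$ under both hypotheses. Under $\calH_0$ the pairs $(X_i,Y_i)$ are i.i.d.\ standard normal, so immediately $T\sim\calN(0,2n)$. The key observation for $\calH_1$ is that $\sum_{i=1}^nY_i$ is permutation-invariant, hence
$$T=\sum_{i=1}^n X_i-\sum_{i=1}^n Y_{\sigma(i)}=\sum_{i=1}^n\bigl(X_i-Y_{\sigma(i)}\bigr),$$
and the summands on the right are i.i.d.\ $\calN(0,2(1-\rho))$ regardless of $\sigma$, so $T\sim\calN(0,2n(1-\rho))$ unconditionally under $\calH_1$. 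Without loss of generality I take $\rho\in(0,1]$; the $\rho<0$ case is handled symmetrically by the flipped inequality (equivalently, replacing $\s{Y}$ with $-\s{Y}$, which preserves its marginal distribution and flips the sign of $\rho$).

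Next, I would rescale by $\sqrt{2n}$ to reduce the problem to distinguishing $\s{G}\sim\calN(0,1)$ from $\s{G}'\sim\calN(0,1-|\rho|)$ at the fixed threshold $\theta/\sqrt{2n}$. The risk of $\phi_{\s{comp}}$ then telescopes:
\begin{align*}
\s{R}(\phi_{\s{comp}})&=\pr\bigl(|\s{G}|\leq\theta/\sqrt{2n}\bigr)+\pr\bigl(|\s{G}'|>\theta/\sqrt{2n}\bigr)\\
&=1-\Bigl[\pr\bigl(|\s{G}|\geq\theta/\sqrt{2n}\bigr)-\pr\bigl(|\s{G}'|\geq\theta/\sqrt{2n}\bigr)\Bigr].
\end{align*}
By the very definition of $\theta$ in \eqref{eq:thetavalues}, the bracketed expression equals $d_{\s{TV}}(\calN(0,1),\calN(0,1-|\rho|))$ exactly, so I obtain the clean identity
$$\s{R}(\phi_{\s{comp}})=1-d_{\s{TV}}\bigl(\calN(0,1),\calN(0,1-|\rho|)\bigr).$$

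The final step is to show that this TV distance is bounded away from zero whenever $\rho^2=\Omega(1)$. I would invoke the standard explicit formula for the TV distance between two centered one-dimensional Gaussians (it can be written in closed form in terms of the standard normal CDF evaluated at the intersection point of the two densities), and note that it is a continuous, strictly positive, strictly increasing function of $|\rho|\in(0,1]$ that vanishes only in the limit $|\rho|\to 0$. Consequently, if $|\rho|\geq c$ eventually for some constant $c>0$, the TV distance is lower-bounded by the positive constant $d_{\s{TV}}(\calN(0,1),\calN(0,1-c))$ for all large $n$, giving $\limsup_n\s{R}(\phi_{\s{comp}})<1$. I do not anticipate a serious obstacle: the only mildly delicate point is the permutation-invariance trick that renders the distribution of $T$ under $\calH_1$ independent of $\sigma$, collapsing what would otherwise be a mixture-detection problem into a direct comparison between two univariate Gaussians.
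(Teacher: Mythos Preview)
Your proposal is correct and follows essentially the same approach as the paper: compute the distribution of $T=\sum_i(X_i-Y_i)$ under each hypothesis via the permutation-invariance of $\sum_i Y_i$, rescale, and invoke the defining property of $\theta$ to identify $1-\s{R}(\phi_{\s{comp}})$ with $d_{\s{TV}}(\calN(0,1),\calN(0,1-|\rho|))$, which is $\Omega(1)$ when $\rho^2=\Omega(1)$. The paper's argument is slightly terser (it does not spell out the monotonicity/continuity of the TV distance), but the substance is identical.
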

\begin{comment}
    \begin{proof}[Proof of Theorem~\ref{thm:upper_comp}]
    We analyze the case where $\rho\in(0,1]$, with the understanding that the case where $\rho\in[-1,0)$ is analyzed in the same way. Let $G_1\triangleq\sum_{i=1}^nX_i$ and $G_2\triangleq\sum_{i=1}^nY_i$. Then, under $\calH_0$, we clearly have $G_1-G_2\sim\calN(0,2n)$, while under $\calH_1$, we have $G_1-G_2\sim\calN(0,2n(1-\rho))$. Therefore, 
    \begin{align}
        1-\s{R}(\phi_{\s{comp}}) &= \pr_{\calH_0}(|G_1-G_2|\geq\theta)\nonumber\\
        &\quad-\pr_{\calH_1}(|G_1-G_2|\geq\theta)\\
        & = \pr(|\calN(0,2n)|\geq\theta)\nonumber\\
        &\quad-\pr(|\calN(0,2n(1-\rho))|\geq\theta)\\
        & = d_{\s{TV}}(\calN(0,1),\calN(0,1-\rho))\\
        &= \Omega(1),
    \end{align}
    where the third equality holds by the definition of $\theta$, and the last equality is because $\rho = \Omega(1)$. 
\end{proof}
\end{comment}
Comparing Theorems~\ref{thm:upper} and \ref{thm:upper_comp} we see that for weak detection we can construct a test for which it is suffice that $\rho^2$ is of order constant (i.e., bounded away from zero), while for strong detection our test requires the correlation to converge to unity sufficiently fast. We would like to emphasize that Theorem~\ref{thm:upper_comp} is novel; in particular, the weak detection results in \cite{nazer2022detecting,HuleihelElimelech} again exclude the $d=1$ case (and in fact a range of values of $d$, as discussed in more detail in Section~\ref{sec:extension}).  
We provide the proofs of Theorems~\ref{thm:upper} and \ref{thm:upper_comp} in Section~\ref{swubsec:proofUpperCount}, where in fact, we prove a more general result for a high-dimensional model which allows for partial correlations, as described in Section~\ref{sec:extension}. We summarize the above results in Table~\ref{tab:exact}, next to previously known bounds in the literature. Finally, we conclude this subsection by showing a numerical evaluation of our count test. Specifically, in Fig.~\ref{fig:stupid}, we present the empirical risk, averaged over $10^3$ Monte-Carlo runs, associated with the count test in \eqref{eqn:testcount} with $\tau_{\s{count}}=0$, as a function of $\rho$, for $n=100$. As predicted by our theoretical results, it can be seen that if the correlation is sufficiently close to unity, then the associated risk decreases. 
\begin{table}[t!]
\begin{center}
\renewcommand{\arraystretch}{2}
\begin{tabular}{ |p{3.5cm}||p{1.8cm}|p{2.1cm}|p{1.8cm}| p{2.1cm}|}
 \hline 
   & \multicolumn{2}{|c|}{\textbf{Weak Detection}} &  \multicolumn{2}{|c|}{\textbf{Strong Detection}}  \\
 \hline
  \textbf{Current/previous work} & \textbf{Possible} & \textbf{Impossible} &\textbf{Possible}& \textbf{Impossible}\\
 \hline 
 Our results  & $\underset{\text{(Theorem~\ref{thm:upper_comp})}}{\Omega(1) }$    & $\underset{\text{(Theorem~\ref{th:lowerWeak})}}{o(1)}$ &     $\underset{\text{(Theorem~\ref{thm:upper})}}{1-o(n^{-4}) }$    & $\underset{\text{(Theorem~\ref{th:lowerStrong})}}{1-\Omega(1)} $\\
  \hline
 Previous work  & --    & $o(1)$ &     $- $    & $\frac{1}{2}$\\ 
  \hline
\end{tabular}
\caption{A summary of our bounds on $\rho^2$, for weak and strong detection in the one-dimensional case, compared with the results of \cite{HuleihelElimelech}.}
\label{tab:exact}
\end{center}
\end{table}

\begin{figure}[ht]
\centering
\begin{overpic}[scale=1]
    {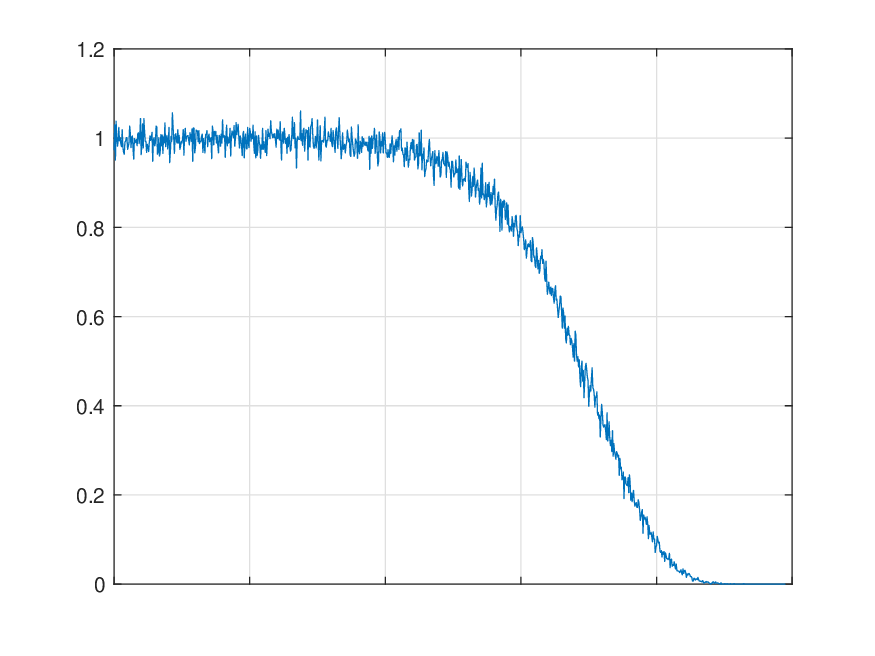}
    
    \put(2.5,37.8){\begin{turn}{90}\large{$\s{R}(\phi_{\s{count}})$}\end{turn}}
    \put(50.5,2.7){\Large{$\rho^2$}}
\end{overpic}
\caption{
The risk of the count test $\phi_{\s{count}}$ as a function of $\rho$, for $d=1$, $n=100$ and values of $\rho^2$ spanning between $1-n^{-2}$ and $1-n^{-5}$. The results are compatible with Theorem~\ref{th:lowerStrong}, showing that the risk indeed vanishes when $\rho^2=1-o(n^{-4}$).}
\label{fig:stupid}
\end{figure}

\section{Lower Bound via Polynomial Decomposition}\label{sec:LBPOL}
This section is devoted to the proofs of Theorems~\ref{th:lowerStrong}--\ref{th:lowerWeak}. \textcolor{black}{ A crucial part of our proof involves the expansion of the likelihood function into its orthogonal Hermite polynomials components, which compose an orthonormal basis of $L^2$ w.r.t. the Gaussian measure. We shall therefore begin with a brief introduction of Hermite polynomials in Hilbert spaces. For the benefit of readability, throughout this entire section, with some abuse of notation, we use lower-case Greek letters (e.g., $\alpha$), for integer vectors of finite lengths.%, where the length of vectors is mentioned of understood by context.
}

\subsection{Hermite polynomials and Hilbert spaces}\label{sec:Hilbert}
Let $\s{Y}\sim \calN(0,\mathbf{I})$ be an $n$-dimensional random variable distributed according to, $\mu$, an $n$-dimensional Gaussian distribution. Consider the space $L^2(\mu)$,  of real-valued random variables $\R^n\to\R$ with finite second moment (w.r.t. the measure to the measure $\mu$), which is a Hilbert space when equipped with the inner product,
\begin{align}
    \innerP{\phi,\psi}_{\mu}\eqdef \E_{\mu}\pp{\phi(\s{Y})\cdot \psi(\s{Y})},\label{eq:innerPs}
\end{align}
where $\phi$ and $\psi$ are two measurable functions from $\R^n$ to $\R$.  Recall the univariate Hermite polynomials, which are a sequence of polynomials $(h_k(x))_{k\geq0}$, with $\s{deg}(h_k)=k$, defined by the equation,
\begin{align}
    h_k(x)\triangleq(-1)^k e^{x^2/2}\frac{d^k}{dx^k}e^{-x^2/2}.
\end{align} It is well-known  (see, for example \cite{Magnus}[Chapter 5.6]) that Hermite polynomials are orthonormal w.r.t. the standard one-dimensional Gaussian measure. Namely, 
\begin{align}\label{eq:herOrthhh}
    \bE_{Y\sim N(0,1)}\pp{h_k(Y)h_\ell(Y)} = \delta[k-\ell].
\end{align}
The multivariate Hermite polynomials in $n$ variables are indexed by $\theta\in\mathbb{N}^{n}$, and are merely products of the univariate Hermite polynomials, i.e., $H_\theta(x) = \prod_{i=1}^n h_{\theta_{i}}(x_{i})$. We note that the degree of $H_\theta$ is exactly $|\theta|\eqdef \sum_i\theta_i$, which is exactly the $L_1$ norm of the vector $\theta$. Furthermore, the multivariate Hermite polynomials $(H_{\theta})_{\theta \in \N^n}$ are orthonormal w.r.t. an independent product of standard Gaussian measures, namely,
\begin{align}
    \bE_{\s{Y}\sim N(\mathbf{0},\mathbf{I})}\pp{H_\alpha(\s{Y})H_\gamma(\s{Y}) }&=\bE_{\s{Y}\sim N(\mathbf{0},\mathbf{I})}\pp{\prod_{i=1}^n h_{\alpha_i}(Y_i) h_{\gamma_i}(Y_i)}\\
    &\overset{(a)}{=}\prod_{i=1}^n \bE_{Y_i\sim N(\mathbf{0},1)}\pp{h_{\alpha_i}(Y_i) h_{\gamma_i}(Y_i)}\\&
    \overset{(b)}{=}\prod_{i=1}^n \delta[\alpha_i-\gamma_i] =\delta[\alpha-\gamma],\label{eq:HerOrth}
\end{align}
where $(a)$ follows from the independence of the random variables $\ppp{Y_i}_i$, and $(b)$ follows from \eqref{eq:innerPs}. 

It is well-known that the Hermitian polynomials form a complete orthonormal basis in the Hilbert space $L^2(\mu)$ w.r.t. the Gaussian measure (see, e.g., \cite[Proposition 1.10]{urbina2019preliminary}), and therefore any random variable of the form $\psi(\s{Y})$ can be expanded as, 
\begin{align}
    \psi(\s{Y})=\sum_{\theta \in \N^{n}}\innerP{H_{\theta}(\s{Y}),\psi(\s{Y})}_{\mu}H_{\theta}(\s{Y}),\label{eq:conver}
\end{align}
where the equality is in the sense that the sum at the r.h.f.s of \eqref{eq:conver} converges in norm to $\psi(\s{Y})$. Finally, Parseval's identity implies that,
\begin{align}
\E_{\mu}\pp{\psi(\s{Y})^2}&=\norm{\psi(\s{Y})}^2_{\mu}\\
    &=\sum_{m=0}^\infty \sum_{\substack{\theta\in \N^n\\ |\theta|=m }}\abs{\innerP{H_{\theta}(\s{Y}),\psi(\s{Y})}_{\mu}}^2. \label{eq:parseval1}
\end{align}

\subsection{Proof outline} \label{sec:outline}
Before delving into the proofs, let us provide a brief outline of the main steps. Roughly speaking, we derive our lower bounds using a non-standard analysis of the the second moment of likelihood function which involves the decomposition of the likelihood function into its orthogonal components, which we presented in the previous subsection. The main steps are:
\begin{enumerate}
    \item \textbf{Bounding the risk using the likelihood's second moment:} Using a series of fairly standard arguments, we bound the optimal risk function as
    \begin{align}\label{eq:first}
        \s{R}^\star\geq 1-\frac{1}{2}\sqrt{\E_{\calH_0}\pp{\calL(\s{X},\s{Y})^2}-1}.
    \end{align}
    \item \textbf{Hermite polynomial decomposition:} Using Parseval's identity we present the likelihood's second moment (with respect to $\calH_0$) using the orthogonal projection coefficients,
    \begin{align}\label{eq:Parse}
    \E_{\calH_0}\pp{\calL(\s{X},\s{Y})^2}=\sum_{(\alpha,\beta)\in \N^{2n}}\abs{\innerP{H_{\alpha,\beta}(\s{X},\s{Y}),\calL(\s{X},\s{Y})}_{\calH_0}}^2,
\end{align}
    where $H_{\alpha,\beta}$ is the Hermite polynomial with $2n$ variables which corresponds to the pair $(\alpha,\beta)\in \N^{2n}$.
  \item \textbf{Computing the orthogonal projection:}  In Lemma~\ref{lem:OrthCeoff}, we show that the coefficients of the orthogonal projection coefficients  can be presented as,
\begin{align}
    \innerP{H_{\alpha,\beta}(\s{X},\s{Y}),\calL(\s{X},\s{Y})}_{\calH_0}=\E_{\sigma}\pp{\E_{\calH_1|\sigma}\pp{H_{\alpha,\beta}(\s{X},\s{Y})}}.
\end{align}
Then, we carefully analyse the r.h.s. of the above equation using some well-known properties of Hermite polynomials and give an expression which depends only on $\alpha$ and $\beta$. 
\item \textbf{Integer partitions:} We plug in the result in the previous step into \eqref{eq:Parse}, and show that the likelihood's second moment is uniformly bounded by the generating function of integer partitions (see, Definition~\ref{def:ppp}). Finally, we combine this bound and \eqref{eq:first} in order to get a threshold for the impossibility of detection.
\end{enumerate}
\subsection{The proof of the lower bound}

The proof strategy for our lower bound begins with a series of standard arguments frequently used in  proofs of impossibility of detection in hypothesis testing problems involving a planted combinatorial structure. We start by recalling the well-known characterization of the optimal Bayes risk, using the total-variation distance \cite[Theorem 2.2]{tsybakov2004introduction},
\begin{align}
    \s{R}^\star = 1-d_{\s{TV}}(\P_{\calH_0},\P_{\calH_1}),\label{eq:Rstar}
\end{align}
where $d_{\s{TV}}$ denotes the total variation distance between two probability measures. In the case where $\P_{\calH_1}$ is absolutely continuous w.r.t. $\P_{\calH_0}$ the total variation distance is given by
\begin{align}
    d_{\s{TV}}(\P_{\calH_0},\P_{\calH_1})=\frac{1}{2} \intop \abs{\calL(\s{X},\s{Y})-1}\mathrm{d} \P_{\calH_0},\label{eq:dTV}
\end{align}
where $\calL(\s{X},\s{Y})$ is the  Radon-Nikodym derivative of $\P_{\calH_1}$ w.r.t. $\P_{\calH_0}$, also known as the\textit{ likelihood ratio}. In our case $\calL(\s{X},\s{Y})$ is given by
\begin{align}
    \calL(\s{X},\s{Y})\eqdef 
\frac{\mathrm{d}\P_{\calH_1}}{\mathrm{d}\P_{\calH_0}}=\E_{\sigma}\pp{\frac{f_{\calH_1|\sigma}(\s{X},\s{Y})}{f_{\calH_0}(\s{X},\s{Y})}},\label{eq:radonNi}
\end{align}
where $f_{\calH_0}(\s{X},\s{Y})$
denotes the density function of $(\s{X},\s{Y})$ and  $f_{\calH_1|\sigma}(\s{X},\s{Y})$
denotes the density function of $(\s{X},\s{Y})$ conditioned that hidden permutation is $\sigma\sim \s{Unif}(\S_n)$. The second step involves a standard use of Cauchy-Schwartz inequality together with 
\eqref{eq:Rstar} and $\eqref{eq:dTV}$ in order to obtain: 
\begin{align}
    \s{R}^\star&\geq 1-\frac{1}{2}\sqrt{\E_{\calH_0}\pp{\calL(\s{X},\s{Y})^2}-1}\\
    & = 1-\frac{1}{2}\sqrt{\chi^2(\P_{\calH_0},\P_{\calH_1})}\label{eq:CauchyS},
\end{align}
where $\chi^2$ is the $\chi^2$-divergence/distance \cite{tsybakov2004introduction}. The above equation shows that if $\E_{\calH_0}[\calL^2]$ approaches to $1$, then weak and strong detection are impossible.   

At this point, we deviate from the typical straight-forward analysis of the second moment of the likelihood ratio as in \cite{HuleihelElimelech, paslev2023testing,nazer2022detecting}, which involves a non-trivial analysis of the cycles structure of random permutations. Not only that this analysis is highly complicated, as we show, it also results in a loose lower bound. Instead, we suggest a simple, but powerful alternative technique for bounding the second moment of the likelihood ratio, using a decomposition/expansion of the likelihood function $\calL(\s{X},\s{Y})$ w.r.t. the orthonormal basis of Hermite polynomials.

The key idea of our technique is to decompose the likelihood ratio into its orthogonal components using the basis formed by the (infinite) family of Hermite polynomials, and then use Parseval's identity. As mentioned in Section~\ref{sec:Hilbert}, the Hermitian polynomials form a complete orthonormal system in $L^2(\calH_0)$, and by \eqref{eq:parseval1},  $\calL(\s{X},\s{Y})$ can be expanded as, 
\begin{align}
    \calL(\s{X},\s{Y})=\sum_{\alpha,\beta\in \N^{n}}\innerP{H_{\alpha,\beta}(\s{X},\s{Y}),\calL(\s{X},\s{Y})}_{\calH_0}H_{\alpha,\beta}(\s{X},\s{Y}),
\end{align}
where $H_{\alpha,\beta}$ denotes the Hermite polynomial with $2n$ variables defined by $(\alpha,\beta)\in \R^{2n}$. The above sum converges in norm to $\calL$, and by Parseval's identity,
\begin{align}
\E_{\calH_0}\pp{\calL(\s{X},\s{Y})^2}&=\norm{\calL(\s{X},\s{Y})}^2_{\calH_0}\\
    &=\sum_{m=0}^\infty \sum_{\substack{(\alpha,\beta) \in \N^{2n}\\ |\alpha|+|\beta|=m }}\innerP{H_{\alpha,\beta}(\s{X},\s{Y}),\calL(\s{X},\s{Y})}_{\calH_0}^2. \label{eq:parseval}
\end{align}
In the following lemma, we find an expression for the projection coefficients in \eqref{eq:parseval}. \begin{lemma}\label{lem:OrthCeoff}
    For any $(\alpha,\beta) \in \N^{2n}$, 
    \begin{align}
        \innerP{H_{\alpha,\beta}(\s{X},\s{Y}),\calL(\s{X},\s{Y})}_{\calH_0}=\rho^{|\alpha|}\cdot \P[\sigma(\beta)=\alpha],
    \end{align}
    where $\sigma(\alpha)\in \N^d$ denotes the vector obtained by permuting the coordinates of $\alpha$ according to the uniformly distributed random permutation $\sigma$.
\end{lemma}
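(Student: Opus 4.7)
The plan is to convert the inner product to an expectation under $\calH_1$, condition on the planted permutation, and then use the factorization properties of multivariate Hermite polynomials together with a classical identity for correlated Gaussians.

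First, by the change of measure, since $\calL = \mathrm{d}\P_{\calH_1}/\mathrm{d}\P_{\calH_0}$, I would write
\begin{align}
\innerP{H_{\alpha,\beta}(\s{X},\s{Y}),\calL(\s{X},\s{Y})}_{\calH_0}
= \E_{\calH_1}\pp{H_{\alpha,\beta}(\s{X},\s{Y})}
= \E_{\sigma}\pp{\E_{\calH_1|\sigma}\pp{H_{\alpha,\beta}(\s{X},\s{Y})}}.
\end{align}
Next, since $\sigma$ is a bijection on $[n]$, I can reindex the product defining $H_{\alpha,\beta}$ by $j=\sigma(i)$, giving
\begin{align}
H_{\alpha,\beta}(\s{X},\s{Y})
= \prod_{i=1}^n h_{\alpha_i}(X_i)\, h_{\beta_i}(Y_i)
= \prod_{i=1}^n h_{\alpha_i}(X_i)\, h_{\beta_{\sigma(i)}}(Y_{\sigma(i)}).
\end{align}
The point of this rewriting is that under $\P_{\calH_1|\sigma}$ the pairs $\{(X_i,Y_{\sigma(i)})\}_{i=1}^n$ are i.i.d.\ $\calN(\mathbf{0},\Sigma_\rho)$, so the conditional expectation factorizes across $i$.

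The second ingredient is the classical identity for correlated Gaussians: if $(X,Y)\sim\calN(\mathbf{0},\Sigma_\rho)$, then
\begin{align}
\E\pp{h_k(X)\, h_\ell(Y)} = \rho^k\,\delta[k-\ell],
\end{align}
which follows, for instance, from Mehler's formula for the generating function of Hermite polynomials, or directly from the moment generating function of $\calN(\mathbf{0},\Sigma_\rho)$ expanded in the Hermite basis. Applying this coordinatewise yields
\begin{align}
\E_{\calH_1|\sigma}\pp{H_{\alpha,\beta}(\s{X},\s{Y})}
= \prod_{i=1}^n \rho^{\alpha_i}\,\delta[\alpha_i-\beta_{\sigma(i)}]
= \rho^{|\alpha|}\,\Ind\{\alpha_i = \beta_{\sigma(i)}\text{ for all }i\}
= \rho^{|\alpha|}\,\Ind\{\alpha=\sigma(\beta)\},
\end{align}
using the paper's convention that $(\sigma(\beta))_i = \beta_{\sigma(i)}$. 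Taking expectation over $\sigma\sim\s{Unif}(\S_n)$ gives exactly $\rho^{|\alpha|}\cdot \P[\sigma(\beta)=\alpha]$, which is the claim.

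The steps above are each essentially mechanical; the main obstacle, if any, is invoking the scalar identity $\E[h_k(X)h_\ell(Y)]=\rho^k\delta[k-\ell]$ cleanly. I would either cite Mehler's formula or give a one-line derivation by expanding $\exp(\lambda X + \mu Y - \lambda^2/2 - \mu^2/2)$, which is the joint generating function of $\{h_k(X)h_\ell(Y)\}$, and computing that its expectation under $\calN(\mathbf{0},\Sigma_\rho)$ equals $\exp(\lambda\mu\rho)=\sum_k (\lambda\mu\rho)^k/k!$; matching coefficients extracts the identity. A secondary minor point is being explicit about the permutation convention so that the indicator collapses to $\{\alpha=\sigma(\beta)\}$ rather than $\{\beta=\sigma(\alpha)\}$; since $\sigma$ is uniform over $\S_n$ and $\sigma^{-1}$ is then also uniform, both conventions give the same probability, so no loss occurs either way.
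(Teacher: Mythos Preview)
Your proposal is correct and follows the same architecture as the paper's proof: change measure to $\calH_1$, condition on $\sigma$, factorize over $i$ using independence of the pairs $(X_i,Y_{\sigma(i)})$, apply the scalar identity $\E[h_k(X)h_\ell(Y)]=\rho^k\delta[k-\ell]$ for $(X,Y)\sim\calN(\mathbf{0},\Sigma_\rho)$, and average over $\sigma$. The only cosmetic difference is in how that scalar identity is justified: the paper conditions on $X$ and computes $\E_{Y\sim\calN(\rho x,1-\rho^2)}[h_k(Y)]=\rho^k h_k(x)$ via an explicit integral formula for physicist's Hermite polynomials, then uses orthogonality, whereas you cite Mehler's formula or the bivariate generating-function argument directly---both are standard derivations of the same fact.
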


\begin{proof}
    By the definition of the Radon-Nikodym derivative we have,
    \begin{align}
        \innerP{H_{\alpha,\beta}(\s{X},\s{Y}),\calL(\s{X},\s{Y})}_{\calH_0}&=\E_{\calH_0}\pp{H_{\alpha,\beta}(\s{X},\s{Y})\cdot\calL(\s{X},\s{Y})}\\
        & =\E_{\calH_0}\pp{H_{\alpha,\beta}(\s{X},\s{Y})\frac{\mathrm{d}\P_{\calH_1}}{\mathrm{d}\P_{\calH_0}}}\\
        &=\E_{\calH_1}\pp{H_{\alpha,\beta}(\s{X},\s{Y})}\\
        &=\E_{\sigma}\pp{\E_{\calH_1|\sigma}\pp{H_{\alpha,\beta}(\s{X},\s{Y})}}\\
        &=\E_{\sigma}\pp{\E_{\calH_1|\sigma}\pp{\prod_{i=1}^n h_{\alpha_i}(\s{X}_i) 
        \cdot h_{\beta_{\sigma(i)}}(\s{Y}_{\sigma(i)})}}\\
        &=\E_{\sigma}\pp{\prod_{i=1}^n \E_{\calH_1|\sigma}\pp{h_{\alpha_i}(\s{X}_i) 
        \cdot h_{\beta_{\sigma(i)}}(\s{Y}_{\sigma(i)})}}, \label{eq:prod1}
    \end{align}
    where the last equality follows from the independence of the pairs $(\s{X}_i,\s{Y}_{\sigma(i)})_{i=1}^n$ conditioned on the hidden permutation $\sigma$. Given $\sigma$, the pair of random variables $(\s{X}_i,\s{Y}_{\sigma(i)})$ is distributed as the two-dimensional Gaussian distribution $\calN(\mathbf{0},\Sigma_\rho)$. Let $X$ and $Z$ be two independent standard normal random variables. A simple calculation reveals that the pair $(X,\rho X + \sqrt{1-\rho^2}Z)$ is also distributed as $\calN(\mathbf{0},\Sigma_\rho)$, and therefore for all $i$,
    \begin{align}
    \E_{\calH_1|\sigma}\pp{h_{\alpha_i}(\s{X}_i) 
        \cdot h_{\beta_{\sigma(i)}}(\s{Y}_{\sigma(i)})}
        &=\E_{X\indep Z} \pp{ h_{\alpha_i}(X) \cdot h_{\beta_{\sigma(i)}}(\rho X + \sqrt{1-\rho^2}Z)}\\
        &=\E\pp{\left.h_{\alpha_i}(X)\cdot \E\pp{h_{\beta_{\sigma(i)}}(\rho X + \sqrt{1-\rho^2}Z)\right|X}}, \label{eq:prod2}
    \end{align}
       where the last equality follows from the law of total expectation. When conditioned on $X$, the random variable $\rho X + \sqrt{1-\rho^2}Z$ satisfies 
       \begin{equation}\label{eq:conditionedonx}
           (\rho X + \sqrt{1-\rho^2}Z) ~\big|~ X \sim \calN(\rho X,1-\rho^2).
       \end{equation} Thus, our goal is now to find the expectation $\E_{Y\sim \calN(\rho x,1-\rho^2)}[h_k(Y)]$, for any $k\in\N$ and $|\rho|< 1$.    Let $\bar{h}_k$ denote the $k$-th order physicist's Hermite polynomials. It is known that \cite[equation (3)]{FrancisBach},
\begin{align}\label{eq:HerIden}
    \int_{\mathbb{R}} \bar{h}_k(y)e^{-\frac{(y-\rho x)^2}{1-\rho^2}}\mathrm{d}y= \sqrt{\pi} \rho^k \sqrt{1-\rho^2} \cdot \bar{h}_k (x),
\end{align}
for $|\rho|<1$, $k\geq0$, and $y\in\mathbb{R}$. Also, it is known that \cite[Chapter 5.6]{Magnus} 
\begin{equation}
    \label{eq:HemitePhys}h_k(x) = 2^{-\frac{k}{2}}\bar{h}_k\p{\frac{x}{\sqrt{2}}}.
\end{equation} Thus,
\begin{align}
    \bE_{Y\sim N(\rho x,1-\rho^2)}\pp{h_k(Y)} &= \frac{1}{\sqrt{2\pi(1-\rho^2)}}\int_{\mathbb{R}}h_k(y)e^{-\frac{(y-\rho x)^2}{2(1-\rho^2)}}\mathrm{d}y\\
    & =\frac{2^{-\frac{k}{2}}}{\sqrt{2\pi(1-\rho^2)}}\int_{\mathbb{R}}\bar{h}_k\p{\frac{y}{\sqrt{2}}}e^{-\frac{(y-\rho x)^2}{2(1-\rho^2)}}\mathrm{d}y\\
    & = \frac{2^{-\frac{k}{2}}}{\sqrt{\pi(1-\rho^2)}}\int_{\mathbb{R}}\bar{h}_k\p{y}e^{-\frac{(y-\rho x/\sqrt{2})^2}{(1-\rho^2)}}\mathrm{d}y\\
    & \overset{(a)}{=} \frac{2^{-\frac{k}{2}}}{\sqrt{\pi(1-\rho^2)}}\sqrt{\pi} \rho^k \sqrt{1-\rho^2} \bar{h}_k(x/\sqrt{2})\label{eq:50}\\
    & = 2^{-\frac{k}{2}}\rho^k\bar{h}_k(x/\sqrt{2})\\
    & = \rho^kh_k(x).  \label{eq:prod3}
\end{align}
where $(a)$ follows by using \eqref{eq:HerIden} with $\Tilde{x}=x/\sqrt{2}$. Combining \eqref{eq:prod2}, \eqref{eq:conditionedonx},  and \eqref{eq:prod3} we obtain, 
    \begin{align}
        \E_{\calH_1|\sigma}\pp{h_{\alpha_i}(\s{X}_i) 
        \cdot h_{\beta_{\sigma(i)}}(\s{Y}_{\sigma(i)})}
        &=\E\pp{h_{\alpha_i}(X)\cdot \E\pp{h_{\beta_{\sigma(i)}}(\rho X + \sqrt{1-\rho^2}Z)~\big|~X}}\\
        &=\E\pp{h_{\alpha_i}(X)\cdot  \bE_{Y\sim N(\rho X,1-\rho^2)}\pp{h_{\beta_{\sigma(i)}}(Y)}}\\
        &=\E\pp{h_{\alpha_i}(X)\cdot \rho^{\beta_{\sigma(i)}}\cdot h_{\beta_{\sigma(i)}}(X)}\\
        &=\rho^{\beta_{\sigma(i)}} \cdot \delta[\alpha_i-\beta_{\sigma(i)}],
    \end{align}
where the last equality follows from the orthogonality property of the Hermite polynomials. We now conclude,
    \begin{align}
         \innerP{H_{\alpha,\beta}(\s{X},\s{Y}),\calL(\s{X},\s{Y})}_{\calH_0}&=\E_{\sigma}\pp{\prod_{i=1}^n \E_{\calH_1|\sigma}\pp{h_{\alpha_i}(\s{X}_i) 
        \cdot h_{\beta_{\sigma(i)}}(\s{Y}_{\sigma(i)})}}\label{eq:40}\\
        &=\E_{\sigma}\pp{\prod_{i=1}^n \rho^{\beta_{\sigma(i)}} \cdot \delta[\alpha_i-\beta_{\sigma(i)}]}\\
        &=\E_{\sigma}\pp{\rho^{\sum_{i=1}^n \beta_{\sigma(i)}} \Ind_{\sigma(\beta)=\alpha}}\\
        &=\rho^{|\beta|} \P[\sigma(\beta)=\alpha].
    \end{align}
    Note that the if $\P[\sigma(\beta)=\alpha]\neq0$ then it must be that $|\alpha|=|\beta|$ (since the $L_1$ norm is invariant under reordering of the coordinates). Thus, we have
    \begin{equation}\label{eq:reored}
         \innerP{H_{\alpha,\beta}(\s{X},\s{Y}),\calL(\s{X},\s{Y})}_{\calH_0}\rho^{|\beta|}=\P[\sigma(\beta)=\alpha]=\rho^{|\alpha|} \P[\sigma(\beta)=\alpha].
    \end{equation}
\end{proof}

We now turn to analyse $\P[\sigma(\beta)=\alpha]$ for arbitrary vectors $\alpha,\beta \in \N^n$. To that end, we consider the integer distribution function of a vector $\alpha\in \N^n$, $p_\alpha:\N\to \N$, where $p_\alpha(\ell)$ is defined as the number of coordinates of $\alpha$ that contain $\ell$. We also consider the equivalence relation over $\N^n$ defined by setting $\alpha \equiv \beta$ if and only if there is $\sigma\in \S_n$ such that $\sigma(\beta)=\alpha$. We observe that $\alpha\equiv \beta$ if and only if they have the same distribution functions, namely, $p_\alpha=p_\beta$. Let $[\alpha]$ denote the equivalence class of $\alpha$ w.r.t. this equivalence relation.\footnote{For mathematical correctness, we can think of $[\cdot]$ as a map between integer vectors and equivalence classes, which maps any $\alpha$ to the unique equivalence class for which it belongs.} For $\sigma\sim\s{Unif}(\S_n)$, symmetry implies that $\sigma(\alpha)\sim \s{Unif}([\alpha])$. We therefore have,
\begin{align}
    \P[\sigma(\beta)=\alpha]=\frac{1}{|[\beta]|}\Ind_{\alpha\equiv \beta}=\frac{1}{|[\alpha]|}\Ind_{\alpha\equiv \beta}.\label{eq:probaper}
\end{align}
We are now ready to prove our lower bounds. By \eqref{eq:parseval} and \eqref{eq:probaper},
\begin{align}
    \E_{\calH_0}\pp{\calL(\s{X},\s{Y})^2}
    &=\sum_{m=0}^\infty \sum_{\substack{(\alpha,\beta) \in \N^{2n}\\ |\alpha|+|\beta|=m }}\innerP{H_{\alpha,\beta}(\s{X},\s{Y}),\calL(\s{X},\s{Y})}_{\calH_0}^2\\
    &=\sum_{m=0}^\infty \sum_{\substack{(\alpha,\beta) \in \N^{2n}\\ |\alpha|+|\beta|=m }}\frac{\rho^{m}}{|[\alpha]|^2}\Ind_{\alpha\equiv \beta}\label{eq:manysummands}\\
     &\overset{(a)}{=}\sum_{m=0}^\infty \sum_{\substack{(\alpha,\beta) \in \N^{2n}\\ |\alpha|=|\beta|=m }}\frac{\rho^{2m}}{|[\alpha]|^2}\Ind_{\alpha\equiv \beta}\\
     &=\sum_{m=0}^\infty \sum_{\substack{\alpha\in \N^n\\ |\alpha|=m }}\sum_{\substack{\beta\in \N^n\\ \beta\equiv \alpha}}\frac{\rho^{2m}}{|[\alpha]|^2}\\
     &=\sum_{m=0}^\infty \sum_{\substack{\alpha\in \N^n\\ |\alpha|=m }}\frac{\rho^{2m}}{|[\alpha]|^2}\sum_{\beta\in [\alpha]}1\\
     &=\sum_{m=0}^\infty \sum_{\substack{\alpha\in \N^n\\ |\alpha|=m }}\frac{\rho^{2m}}{|[\alpha]|}\\
     &\overset{(b)}{=}\sum_{m=0}^{\infty}\sum_{\substack{[\alpha]\\
      \abs{\alpha}=m}}\sum_{\beta\in[\alpha ]}\frac{\rho^{2|\beta|}}{|[\beta]|}\\
      &\overset{(c)}{=}\sum_{m=0}^{\infty}\sum_{\substack{[\alpha]\\
      \abs{\alpha}=m}}\rho^{2m}\\&=\sum_{m=0}^{\infty}\abs{\ppp{[\alpha]: |\alpha|=m}}\cdot \rho^{2m}, \label{eq:parsevalDec}
\end{align}
where: 
\begin{itemize}
    \item[$(a)$]  follows since the $L_1$ norm is invariant under the equivalence relation $\equiv$ (since reordering of the coordinate do not affect the $L_1$ norm) and therefore the summand in \eqref{eq:manysummands} may be nonzero only if $|\alpha|=|\beta|$. Thus, the summation includes only pairs $(\alpha,\beta)$ such that $|\alpha|=|\beta|=m$ for some $m\in \N$.
    \item[$(b)$] follows since the set of equivalence classes is a partition of $\N^n$,
    \item[$(c)$] follows since $[\beta]=[\alpha]$ for $\alpha\equiv \beta$.
\end{itemize}

For a fixed $m\in \N$, consider the set of all equivalence classes $([\alpha])_{\alpha\in \N^n}$ with $\abs{\alpha}=m$ (which is well defined as the $L^1$ norm is preserved under equivalence relation  $\equiv$). As we mentioned, $\alpha\equiv \beta$ if and only if $p_\alpha=p_\beta$. We also note that for any $\alpha$, with corresponding distribution function $p_{\alpha}$, the sum $\sum_{k\in \N} p_\alpha(k)\leq n$ is the number of non-zero elements of $\alpha$, and the sum $\sum_{k\in \N} k\cdot p_\alpha(k)=|\alpha|$ is the degree of the polynomial $H_\alpha$.
Thus, equivalence classes of the form $([\alpha])_{\alpha\in \N^n}$, $|\alpha|=m$, are in  a one-to-one correspondence with integer distribution functions $p:\N\to \N$ which satisfy
\begin{align}
    \sum_{k\in \N} k\cdot p(k)=m\quad\text{and}\quad\sum_{k\in \N}p(k)\leq n. \label{eq:condDist}
\end{align} 
Such integer distributions represent integer partitions, which are defined as follows.
\begin{definition}\label{def:ppp}
    An integer partition of the number $m$ is an unordered  multiset of positive integers $\calp =\set{a_1,\dots,a_k}$ such that  $\sum_{i=1}^k a_i=m$. We denote the set of integer partitions of $m$ to exactly $n$ elements by $\s{Par}(m,n)$ and the set of integer partitions of $m$ to at most $n$ elements by $\s{Par}(m,\leq_n)$.
\end{definition}
Indeed, any distribution function which satisfy the conditions in \eqref{eq:condDist} correspond to a unique partition $\calp$ of $m$ to at most $n$ elements given by the multiset in which any integer $k\in\N$ has exactly $p(k)$ copies. Hence, we conclude that,
\begin{align}
    |\ppp{[\alpha]: |\alpha|=m}|
    &=\abs{\ppp{p:\N\to \N ~\big|~ \sum_{k\in \N} k\cdot p(k)=m, \sum_{k\in \N}p(k)\leq n}}\\
    &=\abs{\s{Par}(m,\leq_n)}.\label{eq:partitionID}
\end{align}

We are now in a position to prove our lower bounds. Combining \eqref{eq:parsevalDec} and \eqref{eq:partitionID}, we bound the second moment of the likelihood ratio as follows,
\begin{align}
    \E_{\calH_0}[\calL(\s{X},\s{Y})^2]&=\sum_{m=0}^\infty\abs{\ppp{[\alpha]:|\alpha|=m}}\cdot \rho^{2m}\\
    &=\sum_{m=0}^\infty \abs{\s{Par}(m,\leq_n) }\rho^{2m}.\label{eq:ExactCalcL}
\end{align}
Since $\abs{\s{Par}(m,\leq_n)}$ is a monotone non-decreasing function of $n$ we have
\begin{align}
    \E_{\calH_0}[\calL(\s{X},\s{Y})^2]\leq\sum_{m=0}^\infty \abs{\s{Par}(m,\leq_\infty) }\rho^{2m},\label{eq:powersumbound}
\end{align}
where $|\s{Par}(m,\leq_{\infty})|$ is the number of integer partitions of the number $m$ (with no bound on the number of parts). We comment that bounding $|\s{Par}(m,\leq_{n})|$ by $|\s{Par}(m,\leq_{\infty})|$ (and thereby loosing the decency with $n$) does not compromise our analysis. In Remark~\ref{rem:cannotimprove} we prove a lower bound on $\E_{\calH_0}[\calL(\s{X},\s{Y})^2]$ which shows that the thresholds obtained cannot be improved by refining the analysis of \eqref{eq:ExactCalcL}.

In the following, we will use the well-known Hardy-Ramanujan Formula (see, e.g., \cite[Chapter 1.3]{flajolet2009analytic}) for the number of integer partitions, which states that there exists some universal constant $c\in \R$ such that for all $m\in \N\setminus\ppp{0}$,
\begin{align}
    |\s{Par}(m,\leq_\infty)|\leq c \cdot\frac{1}{4\sqrt{3}m}\exp\p{\pi\sqrt{\frac{2m}{3}}}.\label{eq:Ramanujan1}
\end{align}
\begin{proof}[Proof of Theorem~\ref{th:lowerWeak}] 
    By Hardy-Ramanujan Formula in \eqref{eq:Ramanujan1}, $|\s{Par}(m,\leq_\infty)|$ is sub-exponential in $m$, and the infinite power series in \eqref{eq:powersumbound} converges to a finite number, for any $\rho^2<1$, which we denote by $g(\rho)$. Since the limit function of a power series is continuous within the region of convergence (which in our case is the interval $(-1,1)$), for any sequence $(\rho_k)_k$, such that $\rho_k\to 0$, we have $g(\rho_k)\to g(0)=1$. When combined with \eqref{eq:powersumbound} and \eqref{eq:CauchyS}, we conclude that if $\rho^2=o(1)$ we have,
\begin{align}
    \s{R}^\star&\geq 1-\frac{1}{2}\sqrt{\E_{\calH_0}[\calL(\s{X},\s{Y})^2]-1}\\
    &\geq 1-\frac{1}{2}\sqrt{g(\rho)-1}\\
    &=1-\frac{1}{2}\sqrt{1+o(1)-1}=1-o(1).
\end{align}
That is, weak detection is impossible if $\rho^2=o(1)$.
\end{proof}
For the impossibility of strong detection, we recall the following well-known fact regarding the total-variation distance.
\begin{lemma}\cite[Lemma 2.6]{tsybakov2004introduction})
    For any sequence of measures $(\mu_n,\nu_n)_{n\in \N}$ on a measurable space such that $\nu_n$ is absolutely continuous w.r.t. $\mu_n$, if 
\begin{align}
    \E_{\mu_n}\pp{\p{\frac{\mathrm{d}\nu_n}{\mathrm{d}\mu_n}}^2}=O(1),
\end{align}
then 
\begin{align}
    d_{\s{TV}}(\mu_n,\nu_n)=1-\Omega(1).
\end{align} \label{lem:Omega(1)}
\end{lemma}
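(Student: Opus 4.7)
The plan is to lower bound $1 - d_{\s{TV}}(\mu_n,\nu_n)$ by a positive constant depending only on the uniform bound $M$ on $\E_{\mu_n}[L_n^2]$ (where $L_n \eqdef \frac{\mathrm{d}\nu_n}{\mathrm{d}\mu_n}$) guaranteed by the hypothesis. I would start from the dual representation $1 - d_{\s{TV}}(\mu_n,\nu_n) = \E_{\mu_n}\pp{\min(1, L_n)}$, which reduces the task to showing that this expectation is bounded away from $0$ uniformly in $n$.

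The core step is a truncation at a level $t$ of order $M$. Two observations combine: (i) by Cauchy--Schwarz followed by Markov's inequality, $\E_{\mu_n}\pp{L_n \Ind\ppp{L_n > t}} \leq \sqrt{\E_{\mu_n}[L_n^2]\cdot \mu_n(L_n > t)} \leq \sqrt{M/t}$, so choosing $t$ a sufficiently large constant multiple of $M$ (e.g.\ $t = 4M$) leaves at least half of the $L_n$-mass (measured under $\mu_n$) on the set $\ppp{L_n \leq t}$; and (ii) on that set, the elementary pointwise bound $\min(1, L_n) \geq L_n/t$ holds (it is immediate for $L_n \leq 1$, and for $1 < L_n \leq t$ reduces to $1 \geq L_n/t$). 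Combining (i) and (ii) yields $\E_{\mu_n}\pp{\min(1, L_n)} = \Omega(1/M)$, and hence $d_{\s{TV}}(\mu_n,\nu_n) \leq 1 - \Omega(1/M) = 1 - \Omega(1)$, which is the desired conclusion.

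The main pitfall I would avoid is applying the naive Cauchy--Schwarz bound $d_{\s{TV}} \leq \tfrac{1}{2}\sqrt{\chi^2}$ directly (as in the derivation of \eqref{eq:CauchyS}): this only separates $d_{\s{TV}}$ from $1$ when the $\chi^2$-divergence is strictly less than $4$, whereas the hypothesis here only guarantees an arbitrary $O(1)$ bound. The truncation is the essential mechanism that converts a bounded second moment into an overlap bounded below, and is the only non-routine step in the argument.
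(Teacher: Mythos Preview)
Your proof is correct. The dual representation $1 - d_{\s{TV}}(\mu_n,\nu_n) = \E_{\mu_n}[\min(1,L_n)]$ is valid (since $\min(1,L) = \tfrac{1}{2}(1+L-|L-1|)$ and $\E_{\mu_n}[L_n]=1$), the truncation step (i) is sound (Markov on $L_n$ gives $\mu_n(L_n>t)\le 1/t$, and Cauchy--Schwarz then yields $\sqrt{M/t}$), the pointwise bound (ii) holds because $t=4M\ge 4$ (recall $M\ge \E_{\mu_n}[L_n^2]\ge 1$ by Jensen), and the combination delivers $1-d_{\s{TV}}\ge \tfrac{1}{8M}=\Omega(1)$.

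However, your route differs from the paper's. The paper does not give a self-contained argument; it simply invokes the known inequality
\[
1-d_{\s{TV}}(\mu_n,\nu_n)\;\ge\;\frac{1/2}{1+\chi^2(\mu_n,\nu_n)},
\]
which is precisely \cite[Lemma~2.6]{tsybakov2004introduction} (with $\chi^2 = \E_{\mu_n}[L_n^2]-1$), and reads off the conclusion in one line. Your truncation argument is essentially an ab initio proof of a slightly weaker version of this same inequality (constant $\tfrac{1}{8}$ in place of $\tfrac{1}{2}$). What your approach buys is self-containment and transparency about why a bounded second moment forces nontrivial overlap; what the paper's approach buys is brevity, since the inequality is standard. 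Your closing remark about the naive bound $d_{\s{TV}}\le\tfrac12\sqrt{\chi^2}$ being insufficient for arbitrary $O(1)$ second moment is exactly the right diagnosis of why something beyond \eqref{eq:CauchyS} is needed here.
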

Note that Lemma~\ref{lem:Omega(1)} follows from the inequality $1-d_{\s{TV}}(\mu_n,\nu_n)\geq\frac{1/2}{1+\chi^2(\mu_n,\nu_n)}$. We are now in a position to prove Theorem~\ref{th:lowerStrong}.
\begin{proof}[Proof of Theorem~\ref{th:lowerStrong}] 
    By \eqref{eq:powersumbound} and \eqref{eq:Ramanujan1}, if $\rho^2=1-\Omega(1)$, then there exists $\varepsilon\geq 0$ such that $\rho_k^2\leq 1-\varepsilon $, for all $k\in \N$, and  we have,
\begin{align}
    \E_{\calH_0}[\calL(\s{X},\s{Y})^2]&\leq\sum_{m=0}^\infty \abs{\s{Par}(m,\leq_\infty) }\rho_k^{2m}
    \\&\leq \sum_{m=0}^\infty \abs{\s{Par}(m,\leq_\infty)}(1-\varepsilon)^{2m}\\
    &\leq \sum_{m=0}^\infty c \cdot\frac{1}{4\sqrt{3}m}e^{\pi\sqrt{2m/3}}(1-\varepsilon)^{2m}=O(1),\label{eq:ramnusum}
\end{align}
where the last equality follows since the sum in \eqref{eq:ramnusum} converges (for example, by the ratio test). Using Lemma~\ref{lem:Omega(1)} and $\eqref{eq:Rstar}$ we conclude that 
\[\s{R}^\star =1-d_{\s{TV}}(\P_{\calH_0},\P_{\calH_1})=\Omega(1).\]
\end{proof}
\begin{remark}\label{rem:cannotimprove}
    We observe that in \eqref{eq:ExactCalcL} we have an exact expression for $\E_{\calH_0}[\calL(\s{X},\s{Y})^2]$,  which we may bound from bellow by
    \begin{align}
        \E_{\calH_0}\pp{\calL(\s{X},\s{Y})^2}\geq \sum_{m\in \N}\rho^{2m}=\frac{1}{1-\rho^2},
    \end{align}
    which is bounded whenever $\rho^2=1-\Omega(1)$, and approaches $1$ whenever $\rho=o(1)$. This shows that the thresholds for the impossibility of weak and strong detection stated in Theorems~\ref{th:lowerStrong}--\ref{th:lowerWeak} cannot be further improved by a tighter analysis of $\E_{\calH_0}[\calL(\s{X},\s{Y})^2]$. This implies that the only scenario in which our lower bound can be improved is if the Cauchy-Schwartz inequality used in \eqref{eq:CauchyS} is not asymptotically tight.
\end{remark}

\section{Partial Correlation in High Dimensions}\label{sec:extension}

We now turn to extend our techniques and results to a generalized version of the one-dimensional setting in Section~\ref{sec:1D}. Specifically, we consider the problem of detecting the correlation between two Gaussian random databases, where the database elements are $d$-dimensional Gaussian random vectors (rather than a random variable), and further consider the case in which the databases are only partially correlated. Let us formulate this mathematically.

Our generalized model is formulated again as a binary hypothesis testing problem. Under the null hypothesis $\calH_0$, the Gaussian databases $\s{X}\in\mathbb{R}^{n\times d}$ and $\s{Y}\in\mathbb{R}^{n\times d}$ are generated independently at random with $X_1,\ldots,X_n,Y_1,\ldots,Y_n\sim \calN(0_d,\mathbf{I}_d)$. Under the alternative hypothesis $\calH_1$, $k$ out of the $n$ elements from the first database $\s{X}$ are correlated with some other $k$ elements in a randomly and uniformly permuted version of $\s{Y}$. Specifically, fix a parameter $k\leq n$, and let $\calK$ be a set of $k$ indices chosen uniformly at random from $\{1,2,\dots,n\}$, and let $\sigma$ be a permutations of $[n]$, chosen uniformly at random as well. Then, for each $i\in\mathcal{K}$ and $1\leq j\leq d$, under $\calH_1$, the random variables $X_{ij}$ and $Y_{\sigma(i)j}$ are jointly Gaussian with correlation $\rho\neq 0$. To conclude, we deal with the following decision problem,
\begin{equation}
\begin{aligned}\label{eqn:decproblem2}
    &\calH_0: (X_1,Y_1),\ldots,(X_n,Y_n)\stackrel{\mathrm{i.i.d}}{\sim} \calN^{\otimes d}(\mathbf{0},\mathbf{I}_{2\times 2})\\
& \calH_1: \begin{cases}
\ppp{(X_i,Y_{\sigma(i)})}_{i\in \calK} \overset{\mathrm{i.i.d}}{\sim} \calN^{\otimes d}(\mathbf{0},\Sigma_\rho)\\
\ppp{(X_i,Y_{\sigma(i)})}_{i\notin \calK} \overset{\mathrm{i.i.d}}{\sim}\calN^{\otimes d}(\mathbf{0},\mathbf{I}_{2\times 2})\\
\ppp{(X_i,Y_{\sigma(i)})}_{i\notin \calK}\indep \ppp{(X_i,Y_{\sigma(i)})}_{i\in \calK},
\end{cases}
\end{aligned}
\end{equation}
where $\sigma\sim \s{Unif}(\mathbb{S}_n)$ independently of the random set $\calK\sim \s{Unif}\binom{[n]}{k}$, and we recall that,
\begin{align}
    \Sigma_\rho\triangleq\begin{bmatrix}
1 & \rho\\
\rho & 1
\end{bmatrix}.
\end{align}
The risk of a test $\phi:\R^{n\times d}\times\R^{n\times d}\to \{0,1\}$ is defined as \begin{align}
\s{R}(\phi)\triangleq \P_{\calH_0}[\phi(\s{X},\s{Y})=1]+\P_{\calH_1}[\phi(\s{X},\s{Y})=0],
\end{align}
and the Bayesian risk is 
\begin{align}
\s{R}^\star\triangleq\inf_{\phi:\R^{n\times d}\times\R^{n\times d}\to \{0,1\}}\s{R}(\phi).
\end{align}
As in the one-dimensional case, a   sequence $(\rho,n,k)=(\rho_\ell,n_\ell,k_\ell)_\ell$ is said to be:
\begin{enumerate}
    \item Admissible for strong detection if 
    $\lim_{\ell\to \infty}\s{R}^\star=0$.
    \item Admissible for weak detection if
    $\limsup_{\ell\to \infty} \s{R}^\star<1$.
\end{enumerate}

\begin{figure}[t]
\centering
\begin{overpic}[scale=0.09]
    {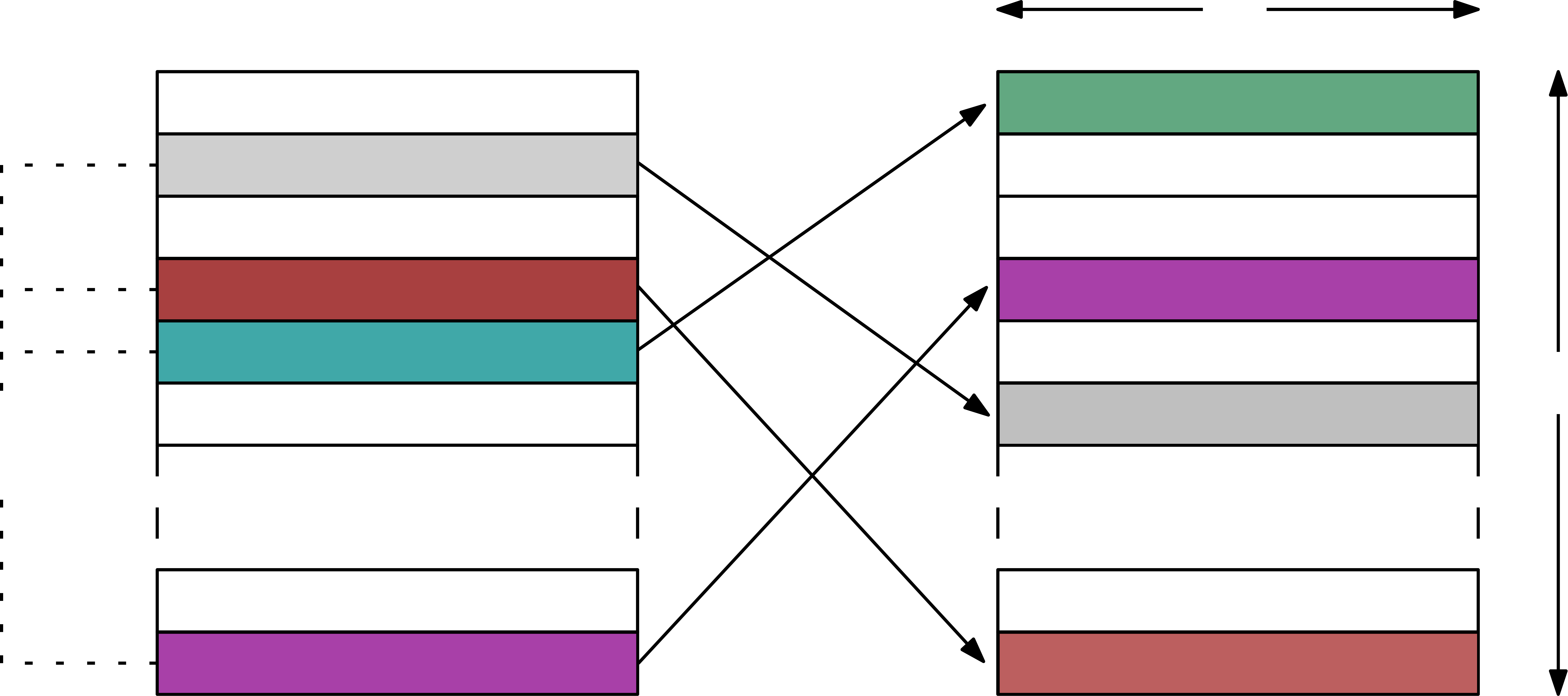}
      \put(-1,14.5){\large{$k$}} 
    \put(98.2,19.2){\large{$n$}} 
    \put(77.7,43){\large{$d$}} 
    \put(77.6,-4){\large{$\s{Y}$}} 
    \put(24,-4){\large{$\s{X}$}} 
    
\end{overpic}
\vspace{0.7cm}
\caption{
An illustration of the correlation structure under the alternative hypothesis $\calH_1$. Correlated elements are marked with a similar color, where the remaining mutually independent elements are colorless.
}
\label{fig:partial}
\end{figure}
\vspace{0.2cm}
\noindent\textbf{Lower bounds.} As in Section~\ref{sec:1D}, we start with our lower bounds. We have the following impossibility guarantees for strong and weak detection, respectively. 
\begin{theorem}[Impossibility of strong detection]
Consider the detection problem in \eqref{eqn:decproblem2}.
\label{th:lowerStrong2}
For any sequence $(\rho,n,k,d)=(\rho_\ell,n_\ell,k_\ell,d_\ell)_\ell$ such that $\rho^2<\frac{1}{d}$ and 
\begin{align}
    \p{\frac{k}{n}}^2\p{\prod_{i=1}^k\frac{1}{1-(d\rho^2)^i}-1}=O(1),\label{eqn:strongDec2}
\end{align} we have 
\begin{align}
   \s{R}^\star=\Omega(1).\label{eqn:kcondstrong}
\end{align}
Namely, strong detection is impossible.
\end{theorem}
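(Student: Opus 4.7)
The plan is to imitate the four-step strategy of Section~\ref{sec:LBPOL}---Cauchy--Schwarz, Hermite expansion, computation of the projection coefficients, and reduction to an integer-partition generating function---while adapting each step to absorb both the dimension $d$ and the random subset $\calK$. As in \eqref{eq:CauchyS}, $\s{R}^\star\geq 1-\tfrac{1}{2}\sqrt{\E_{\calH_0}[\calL^2]-1}$, and Parseval gives $\E_{\calH_0}[\calL^2]=\sum_{\alpha,\beta}|\innerP{H_{\alpha,\beta},\calL}_{\calH_0}|^2$ for the Hermite basis indexed by $(\alpha,\beta)\in\N^{n\times d}\times\N^{n\times d}$. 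The analogue of Lemma~\ref{lem:OrthCeoff} is obtained by conditioning on $(\sigma,\calK)$: for any $i\notin\calK$ the independence of $X_i$ and $Y_{\sigma(i)}$ forces the corresponding row of $\alpha$ (resp.\ $\beta$) to vanish, while for $i\in\calK$ the coordinatewise Mehler identity \eqref{eq:prod3} contributes $\rho^{|\alpha_i|}\Ind[\alpha_i=\beta_{\sigma(i)}]$. Writing $S_\alpha,S_\beta$ for the row-supports of $\alpha,\beta$, this yields
\begin{align*}
\innerP{H_{\alpha,\beta},\calL}_{\calH_0}=\rho^{|\alpha|}\,\P_{\sigma,\calK}\!\bigl[\calK\supseteq S_\alpha,\ \sigma(S_\alpha)=S_\beta,\ \alpha_i=\beta_{\sigma(i)}\ \forall\,i\in S_\alpha\bigr].
\end{align*}

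Plugging this into Parseval and grouping pairs by $a=|S_\alpha|=|S_\beta|$, the sums over ordered realisations of the row-multisets collapse via the same multinomial cancellations used in \eqref{eq:parsevalDec}, leaving
\begin{align*}
\E_{\calH_0}[\calL^2]=\sum_{a=0}^{k}\Bigl(\prod_{i=0}^{a-1}\tfrac{k-i}{n-i}\Bigr)^{\!2} f_a(\rho^2),\qquad f_a(q)\eqdef\sum_{M}q^{|M|},
\end{align*}
where $M$ ranges over multisets of exactly $a$ nonzero vectors in $\N^d$ and $|M|$ denotes the total $L_1$-mass. The elementary estimate $\prod_{i=0}^{a-1}\tfrac{k-i}{n-i}\leq k/n$ for every $a\geq 1$ separates the $(k/n)^2$ prefactor appearing in \eqref{eqn:strongDec2}, so what remains is to control $\sum_{a=0}^{k}f_a(\rho^2)$.

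The key estimate is $\sum_{a=0}^{k} f_a(\rho^2)\leq\prod_{i=1}^{k}(1-(d\rho^2)^i)^{-1}$, which I would prove by mapping each multiset $M$ to the integer partition $P(M)$ whose parts are the $L_1$-norms of the vectors of $M$. This map preserves both the number of elements and the total mass, and the number of preimages of a partition $P$ with multiplicity $m_j$ of the part $j$ is $\prod_j\binom{N_j+m_j-1}{m_j}$ where $N_j=\binom{j+d-1}{d-1}$ counts vectors in $\N^d$ of $L_1$-mass $j$. Applying the multiset-versus-sequence bound $\binom{N+m-1}{m}\leq N^m$ together with $N_j\leq d^j$ bounds this preimage count by $d^{|P|}$, and summing over partitions with at most $k$ parts and invoking the classical identity $\prod_{i=1}^{k}(1-x^i)^{-1}$ for the generating function of such partitions (via conjugation) yields the claim. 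Combining everything,
\begin{align*}
\E_{\calH_0}[\calL^2]-1\leq\Bigl(\tfrac{k}{n}\Bigr)^{\!2}\biggl[\prod_{i=1}^{k}\frac{1}{1-(d\rho^2)^i}-1\biggr],
\end{align*}
which is $O(1)$ under \eqref{eqn:strongDec2}, so Lemma~\ref{lem:Omega(1)} together with \eqref{eq:Rstar} delivers $\s{R}^\star=\Omega(1)$. The main obstacle is arranging the combinatorics so that the truncation $a\leq k$ in the outer sum translates exactly into the truncation to partitions with at most $k$ parts that produces the product in \eqref{eqn:strongDec2}, while the hypothesis $\rho^2<1/d$ is what keeps each factor $(1-(d\rho^2)^i)^{-1}$ finite so that the bound is non-vacuous.
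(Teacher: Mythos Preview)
Your proposal is correct and follows essentially the same approach as the paper: the projection coefficients you state coincide with Lemma~\ref{lem:OrthCeoffD} (the joint probability factors as $\P[\sigma(\beta)=\alpha]\cdot\P[\calK\supseteq S_\alpha]$ by independence), your multiset formulation of $f_a$ is exactly the equivalence-class count in \eqref{eq:sumwithequiv}, and your preimage bound $\prod_j\binom{N_j+m_j-1}{m_j}\leq d^{|P|}$ is the content of Lemma~\ref{lem:ICQ}. The only cosmetic difference is that the paper phrases the combinatorics in terms of equivalence classes of matrices rather than multisets of row-vectors, and records the cruder sequence bound $N_j^{m_j}$ directly; the final inequality and the appeal to Lemma~\ref{lem:Omega(1)} are identical.
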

\begin{theorem}[Impossibility of weak detection]
Consider the detection problem in \eqref{eqn:decproblem2}. For any sequence $(\rho,n,k,d)=(\rho_\ell,n_\ell,k_\ell,d_\ell)_\ell$ such that $\rho^2<\frac{1}{d}$ and 
\begin{align}
   \p{\frac{k}{n}}^2\p{\prod_{i=1}^k\frac{1}{1-(d\rho^2)^i}-1}=o(1),\label{eqn:kcondweak}
\end{align} we have 
\begin{align}
   \s{R}^\star=1-o(1).
\end{align}
Namely, weak detection is impossible.
\label{th:lowerWeak2}
\end{theorem}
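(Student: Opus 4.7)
The plan is to follow the roadmap of the proof of Theorem~\ref{th:lowerWeak}, adapted to the high-dimensional partial-correlation setting. First, the Cauchy--Schwartz argument leading to \eqref{eq:CauchyS} reduces the task to showing that $\E_{\calH_0}[\calL(\s{X},\s{Y})^2] = 1 + o(1)$ under the theorem's hypothesis. Expanding the likelihood ratio in the Hermite orthonormal basis of $L^2(\P_{\calH_0})$, now indexed by pairs of integer matrices $(\alpha,\beta) \in \N^{n\times d} \times \N^{n\times d}$ (each row is a $d$-dimensional multi-index), Parseval's identity gives $\E_{\calH_0}[\calL^2] = \sum_{\alpha,\beta}|\langle H_{\alpha,\beta}, \calL\rangle_{\calH_0}|^2$.

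The next step generalizes Lemma~\ref{lem:OrthCeoff} to compute the Hermite coefficients. Conditioning on both $\sigma$ and $\calK$ and exploiting independence across the $n$ rows, the factor contributed by a row $i \notin \calK$ is $\prod_j \E[h_{\alpha_{ij}}(X_{ij})]\E[h_{\beta_{\sigma(i),j}}(Y_{\sigma(i),j})]$, which vanishes unless $\alpha_i = 0$ and $\beta_{\sigma(i)} = 0$; the factor contributed by a row $i \in \calK$ equals $\rho^{|\alpha_i|}\mathbf{1}[\alpha_i = \beta_{\sigma(i)}]$ by the same Hermite identity \eqref{eq:prod3} used in the 1D case, where $|\alpha_i| := \sum_j \alpha_{ij}$. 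Together this yields
\[
\langle H_{\alpha,\beta}, \calL\rangle_{\calH_0} = \rho^{|\alpha|}\cdot \P_{\sigma,\calK}\bigl[\calK \supseteq S_\alpha,\ \sigma(S_\alpha) = S_\beta,\ \alpha_i = \beta_{\sigma(i)}\ \forall i \in S_\alpha\bigr],
\]
with $S_\alpha := \{i \in [n] : \alpha_i \neq 0\}$ the set of nonzero rows.

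Squaring and summing, we group $(\alpha,\beta)$ pairs by their common multiset $M \subset \N^d\setminus\{0\}$ of nonzero rows (which must coincide and have cardinality $s \leq k$, else the event $\calK \supseteq S_\alpha$ has probability zero). A combinatorial cancellation analogous to the one in the 1D proof---the $\binom{n}{s}^2 n_M^2$ placements of the multiset on the two supports, with $n_M = s!/\prod_v r_v!$, combined with the squared conditional permutation probability $\bigl((n-s)!\prod_v r_v!/n!\bigr)^2$, telescope exactly to $1$---yields
\[
\E_{\calH_0}[\calL^2] = \sum_{s=0}^k \left(\frac{\binom{n-s}{k-s}}{\binom{n}{k}}\right)^{\!2} \sum_{\substack{M \subset \N^d\setminus\{0\}\\ |M|=s}} \rho^{2|M|_F},
\]
where $|M|_F$ denotes the sum of $L^1$-norms of the elements of $M$. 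The ratio $\binom{n-s}{k-s}/\binom{n}{k}$ is nonincreasing in $s$ and equals $k/n$ at $s = 1$, so it is bounded by $k/n$ for every $s \geq 1$, allowing us to pull out a prefactor $(k/n)^2$ from all $s \geq 1$ terms.

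The main obstacle---and the place where the product $\prod_{i=1}^k (1-(d\rho^2)^i)^{-1}$ of the theorem emerges---is bounding the inner sum over multisets. The plan is to parameterize $M$ by the counts $b_j := |\{v \in M : |v| = j\}|$, so that $\sum_j b_j = s \leq k$ and $\sum_j j b_j = |M|_F$. Using the two standard inequalities $\binom{b+c-1}{b} \leq c^b$ (multisets versus sequences) and $\binom{j+d-1}{d-1} \leq d^j$ (vectors of norm $j$ in $\N^d$ versus length-$j$ sequences from $[d]$), the number of multisets with prescribed counts $(b_j)$ is at most $\prod_j d^{jb_j}$, whence
\[
\sum_{s=0}^k \sum_{M:|M|=s} \rho^{2|M|_F} \leq \sum_{(b_j):\, \sum_j b_j \leq k}\prod_j (d\rho^2)^{jb_j} = \sum_m |\s{Par}(m,\leq_k)|\,(d\rho^2)^m = \prod_{i=1}^k \frac{1}{1-(d\rho^2)^i},
\]
where the last equality is the classical Euler identity for the generating function of integer partitions with at most $k$ parts, valid because $d\rho^2 < 1$ by assumption. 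Subtracting the $s=0$ term (equal to $1$) and combining with the $(k/n)^2$ prefactor yields $\E_{\calH_0}[\calL^2] - 1 \leq (k/n)^2\bigl(\prod_{i=1}^k (1-(d\rho^2)^i)^{-1} - 1\bigr) = o(1)$ by hypothesis, which through \eqref{eq:CauchyS} gives $\s{R}^\star = 1 - o(1)$, establishing the impossibility of weak detection.
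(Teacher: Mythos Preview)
Your proof is correct and follows essentially the same approach as the paper: the same Cauchy--Schwartz reduction \eqref{eq:CauchyS}, the same Hermite expansion with coefficients computed as in Lemma~\ref{lem:OrthCeoffD}, the same collapse to a sum over equivalence classes (your multisets $M$ of nonzero rows) indexed by $s=|\s{RS}(\alpha)|$ and $m=|\alpha|$, the same bounds $\prod_{i=0}^{s-1}\frac{k-i}{n-i}\le \frac{k}{n}$ for $s\ge1$ and $\binom{j+d-1}{d-1}\le d^{j}$ leading to $|\s{Par}(m,\le_k)|\,d^{m}$ as in Lemma~\ref{lem:ICQ}, and the same Euler product identity. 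Your write-up just packages the combinatorics via multisets and the explicit inequality $\binom{b+c-1}{b}\le c^{b}$, which is exactly the over-count of multisets by ordered tuples that the paper uses verbally.
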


Let us simplify the above results. Consider first the case where $k=n$, in which the left-hand-side of \eqref{eqn:kcondstrong}--\eqref{eqn:kcondweak} above becomes,
\begin{align}
    \prod_{i=1}^n\frac{1}{1-(d\rho^2)^i}-1.
\end{align}
We also observe that 
\begin{align}
    \prod_{i=1}^n\frac{1}{1-(d\rho^2)^i}\leq \prod_{i=1}^\infty\frac{1}{1-(d\rho^2)^i}\eqdef g(d\rho^2).\label{eq:thresanal}
\end{align}
The function $g$ at the right-hand-side of  \eqref{eq:thresanal} is the generating function of integer partitions (see, e.g., \cite[Chapter 1.3]{flajolet2009analytic}), and is given by,
\begin{align}
    g(x)=\sum_{m=0}^\infty \abs{\s{Par}(m,\leq_\infty)}x^m,
\end{align}
where $\s{Par}(m,\leq_\infty)$ is defined in Definition~\ref{def:ppp}. Note that for any $|x|<1$, $g(x)$ is a well-defined finite number. Hence, by \eqref{eq:thresanal}, if $d \rho^2\leq (1-\varepsilon)$, for some constant $\varepsilon>0$, then 
\begin{align}
    \prod_{i=1}^n\frac{1}{1-(d\rho^2)^i}-1\leq g(1-\varepsilon)=O(1).\label{eqn:fixeddStrongNew}
\end{align}
Similarly, if $d \rho^2=o(1)$ then
\begin{align}
    \prod_{i=1}^n\frac{1}{1-(d\rho^2)^i}-1\leq g(o(1))=1+o(1).\label{eqn:fixeddWeakNew}
\end{align}
Subsequently, \eqref{eqn:fixeddStrongNew} and \eqref{eqn:fixeddWeakNew} imply that strong and weak detection are impossible if $d \rho^2\leq (1-\varepsilon)$ and $d \rho^2=o(1)$ respectively. These conclusions coincide with the sharp thresholds in \cite{HuleihelElimelech}, in the regime where $d\to\infty$. Now, when $d$ is fixed, it was proved in \cite{HuleihelElimelech} that strong detection is impossible as long as $\rho^2<\rho^\star(d)$, for some function $\rho^\star(d)$. Accordingly, our result in \eqref{eqn:fixeddStrongNew} gives a new threshold for the impossibility of strong detection, suggesting that strong detection is impossible when $\rho^2 < (1-\varepsilon)d^{-1}$, for some $\varepsilon>0$, even when $d$ is fixed. This threshold, however, is strictly larger than $\rho^\star(d)$, for $d<4$, and thus improves upon  \cite{HuleihelElimelech}. For fixed $d\geq 4$, the threshold $\rho^{\star}$ is strictly larger then $d^{-1}$, and our bound does not improves upon the one in \cite{HuleihelElimelech}. For $k=n$, we only slightly improve existing results; here, our main contribution is our proof technique, which applies for all asymptotic regimes at once, while in \cite{HuleihelElimelech} a separate analysis for each asymptotic regime is required. For general values of $k$, we show in Section~\ref{sec:kLower} that the analysis and techniques in \cite{HuleihelElimelech} do not extend to the partly correlated regime, as it is unclear how to evaluate the obtained expression for the likelihood's second moment (see, Proposition~\ref{prop:nastybound}). This indicates that in addition to the fact that our analysis arguably simpler as compared to \cite{HuleihelElimelech}, our new technique is also essential for analyzing more complex scenarios, which seem quite complicated to analyze using the standard tools.

\vspace{0.2cm}
\noindent\textbf{Upper bounds.} Next, we move to our upper bounds. From the algorithmic point of view, in the high-dimensional case, we propose the following almost straightforward extension of \eqref{eqn:testcount}, 
\begin{align}
    \phi_{\s{count}}(\s{X},\s{Y})\triangleq\Ind\ppp{\sum_{i,j=1}^n\bar{\calI}(X_i,Y_j)\geq \frac{1}{2} k\calP_{\rho,d}},\label{eqn:testcount_high}
\end{align}
where $\tau_{\s{count}}\in\mathbb{R}$, and
\begin{align}
    \bar\calI(X_i,Y_j)\triangleq\Ind\ppp{\frac{1}{d}\sum_{\ell=1}^d\calL_{\s{I}}(X_{i\ell},Y_{j\ell})\geq\tau_{\s{count}}},
\end{align} 
and finally,
\begin{subequations}
\begin{align}
    \calP_{\rho,d}&\triangleq\pr_{P_{XY}^{\otimes d}}\pp{\sum_{\ell=1}^d\calL_{\s{I}}(A_\ell,B_\ell)\geq d\cdot\tau_{\s{count}}},\\
    \calQ_{\rho,d}&\triangleq\pr_{Q_{XY}^{\otimes d}}\pp{\sum_{\ell=1}^d\calL_{\s{I}}(A_\ell,B_\ell)\geq d\cdot\tau_{\s{count}}},
\end{align}
\end{subequations}
where we recall that $Q_{XY} = \calN(\mathbf{0},\Sigma_0)$ and $P_{XY} = \calN(\mathbf{0},\Sigma_\rho)$. We can then show the following result. Applying Chernoff's bound on these tails yield that if the threshold $\tau_{\s{count}}$ is such that $\tau_{\s{count}}\in(-d_{\s{KL}}(Q_{XY}||P_{XY}),d_{\s{KL}}(P_{XY}||Q_{XY}))$, then
\begin{subequations}\label{eqn:chen_high}
\begin{align}
    \calQ_{\rho,d}&\leq \exp\pp{- d\cdot E_Q(\tau_{\s{count}})},\label{eqn:chen_high1}\\
    \calP_{\rho,d}&\geq 1-\exp\pp{-d\cdot E_P(\tau_{\s{count}})},\label{eqn:chen_high2}
\end{align}
where $E_Q$ and $E_P$ are defined in \eqref{eq:chernoff}.
\end{subequations}
\begin{theorem}[Count test strong detection]\label{thm:IndSum}
Fix $d\in\mathbb{N}$. Suppose there is a threshold $\tau_{\s{count}}\in(-d_{\s{KL}}(Q_{XY}||P_{XY}),d_{\s{KL}}(P_{XY}||Q_{XY}))$ with
\begin{subequations}\label{eqn:CountcondEx}
	\begin{align}
    E_Q(\tau_{\s{count}}) &= \omega\p{d^{-1}\log \frac{n^2}{k}},\label{eqn:Countcond1Ex}\\
    E_P(\tau_{\s{count}}) &= \omega(k^{-1}d^{-1}).\label{eqn:Countcond2Ex}
\end{align}
\end{subequations}
Then, $\s{R}(\phi_{\s{count}})\to0$, as $n\to\infty$. In particular, $\s{R}(\phi_{\s{count}})\to 0$, as $n\to\infty$, if $\rho^2 = 1-o((n^2/k)^{-4/d})$. 
\end{theorem}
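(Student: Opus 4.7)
The plan is to control the Type-I and Type-II errors of $\phi_{\s{count}}$ separately, using the Chernoff tail bounds \eqref{eqn:chen_high1}--\eqref{eqn:chen_high2} for the single-pair statistic $\bar\calI(X_i,Y_j)$, together with Markov (under $\calH_0$) and Chebyshev (under $\calH_1$). The key observation under $\calH_1$ is that, conditional on $(\sigma,\calK)$, the $k$ ``true'' pairs $\{(X_i,Y_{\sigma(i)})\}_{i\in\calK}$ are i.i.d.\ $\calN^{\otimes d}(\mathbf{0},\Sigma_\rho)$. Thus the partial sum $S_1\eqdef\sum_{i\in\calK}\bar\calI(X_i,Y_{\sigma(i)})$ is $\mathrm{Binomial}(k,\calP_{\rho,d})$ unconditionally, while the remaining $n^2-k$ summands of $\sum_{i,j}\bar\calI(X_i,Y_j)$ are nonnegative and involve pairs $(X_i,Y_j)$ whose marginal law is $\calN^{\otimes d}(\mathbf{0},\Ib_{2\times 2})$.

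For Type-I, $\E_{\calH_0}\bigl[\sum_{i,j}\bar\calI(X_i,Y_j)\bigr]=n^2\calQ_{\rho,d}$, so Markov's inequality together with \eqref{eqn:chen_high1} gives
\[
\Pr_{\calH_0}[\phi_{\s{count}}=1] \;\le\; \frac{2n^2\calQ_{\rho,d}}{k\calP_{\rho,d}} \;\le\; \frac{2}{\calP_{\rho,d}}\cdot\frac{n^2}{k}\cdot e^{-dE_Q(\tau_{\s{count}})}.
\]
The $\omega$-condition \eqref{eqn:Countcond2Ex} combined with \eqref{eqn:chen_high2} forces $\calP_{\rho,d}=\omega(1/k)$, so $k\calP_{\rho,d}\to\infty$ and in particular $\calP_{\rho,d}$ cannot be too small; then condition \eqref{eqn:Countcond1Ex} makes $(n^2/k)\,e^{-dE_Q(\tau_{\s{count}})}=o(1)$, so the bound is $o(1)$. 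For Type-II, since the non-$\calK$ terms of the count are nonnegative, it suffices to show $\Pr_{\calH_1}[S_1<\tfrac12 k\calP_{\rho,d}]=o(1)$; Chebyshev on the binomial $S_1$ gives
\[
\Pr_{\calH_1}\!\bigl[S_1 < \tfrac12 k\calP_{\rho,d}\bigr] \;\le\; \frac{4\,k\calP_{\rho,d}(1-\calP_{\rho,d})}{(k\calP_{\rho,d})^2} \;\le\; \frac{4}{k\calP_{\rho,d}} \;=\; o(1),
\]
again by \eqref{eqn:Countcond2Ex}. Together these bounds give $\s{R}(\phi_{\s{count}})\to 0$.

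For the ``in particular'' claim, one fixes an explicit threshold (e.g.\ $\tau_{\s{count}}$ a constant fraction of $d_{\s{KL}}(P_{XY}\|Q_{XY})$), substitutes the closed-form MGFs $\psi_P,\psi_Q$ from Lemma~\ref{lem:simplecalc} into the Legendre transforms \eqref{eq:chernoff}, and verifies that as $\rho^2\to1$ both $E_P(\tau_{\s{count}})$ and $E_Q(\tau_{\s{count}})$ scale like $\tfrac14\log\!\bigl(1/(1-\rho^2)\bigr)$, up to multiplicative constants. Then \eqref{eqn:Countcond1Ex} becomes $d\log\!\bigl(1/(1-\rho^2)\bigr)=\omega(\log(n^2/k))$, i.e.\ $(1-\rho^2)^{d/4}=o((n^2/k)^{-1})$, which is exactly $\rho^2=1-o((n^2/k)^{-4/d})$, and \eqref{eqn:Countcond2Ex} is dominated in this regime.

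The main obstacle is precisely this last Chernoff-exponent calculation in the singular limit $\rho^2\to 1$: although $\psi_P,\psi_Q$ are MGFs of quadratic forms in jointly Gaussian variables and hence explicit, their domains of finiteness shrink with $\rho$, and the optimizing $\lambda^\star(\tau_{\s{count}})$ must be tracked carefully to pin down the correct multiplicative constant in the exponent (the one producing the factor $4/d$). This reduces to a one-dimensional optimization that is the $d$-fold tensorization of the $d=1$ calculation already used to prove Theorem~\ref{thm:upper}, so no new ideas beyond careful bookkeeping are needed.
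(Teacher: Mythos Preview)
Your proposal is correct and follows essentially the same approach as the paper: Markov's inequality for the Type-I error, Chebyshev on the $k$ correlated pairs (after discarding the nonnegative remainder) for the Type-II error, and then plugging in the Chernoff bounds \eqref{eqn:chen_high}. The only practical difference is in the ``in particular'' step: rather than optimizing over $\lambda$ or choosing $\tau_{\s{count}}$ as a fraction of $d_{\s{KL}}(P_{XY}\|Q_{XY})$, the paper simply takes $\tau_{\s{count}}=0$ and lower-bounds $E_Q(0)\ge -\psi_Q(1/2)$ and $E_P(0)\ge -\psi_P(-1/2)$ using Lemma~\ref{lem:simplecalc}, both of which equal $-\tfrac14\log(1-\rho^2)+\tfrac12\log(1-\rho^2/4)$; this sidesteps entirely the shrinking-domain and optimizer-tracking issues you flag as an obstacle.
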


As mentioned right above Theorem~\ref{thm:upper}, the seemingly similar normalized counting procedure in \cite{HuleihelElimelech}, excludes the $d=1$ case, and in fact requires that $d\geq d_0$, for some fixed $d_0\in\mathbb{N}$. Our result above holds for any natural $d\geq1$. Also, interestingly, the threshold $\rho^2 = 1-o(n^{-4/d})$ coincides with the threshold for the recovery problem \cite{pmlr-v89-dai19b}, achieved by the exhaustive maximum-likelihood estimator, while the count test is clearly efficient. 

The count test is not optimal. Consider the regime where $d\to\infty$. In this case, when $d=\omega(\log n)$, we have $\s{R}(\phi_{\s{count}})\to0$, if $d\rho^2 = \omega(\log (n^2/k)) = \omega(\log n)$, almost independently of the value of $k$. The following simple test, however, exhibits better statistical guarantees in the regime where $k = \Omega(n/\sqrt{\log n})$. Define,
\begin{align}
\phi_{\s{sum}}(\s{X},\s{Y})\triangleq\Ind\ppp{\s{sign}(\rho)\sum_{i,j=1}^nX_i^TY_j>\frac{kd|\rho|}{2}}.\label{eqn:sumtest}
\end{align}
Intuitively, $\phi_{\s{sum}}$ sums all $n^2$ possible inner products of feature pairs, and compares this sum to the threshold $\frac{kd|\rho|}{2}$, which is simply the average between the expected values of the sum under the null and alternative hypotheses. The idea behind this test is that under the alternative hypothesis, where the databases are correlated, this sum is larger as compared to the null hypothesis, where the databases are uncorrelated. Note that this test is by no means novel, and was proposed and analyzed for the case where $k=n$ already in \cite{nazer2022detecting}. Here, we analyze its performance for any value of $k$, and use a simpler first and second moment analysis as compared to \cite{nazer2022detecting}. We have the following result.
\begin{theorem}\label{th:sumtest}
Consider the test in \eqref{eqn:sumtest}. Then, $\s{R}(\phi_{\s{sum}})\to0$, as $d\to\infty$, if $\rho^2=\omega(\frac{n^2}{dk^2})$.
\end{theorem}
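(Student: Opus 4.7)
\textbf{Proof proposal for Theorem~\ref{th:sumtest}.} The plan is to carry out a straightforward first and second moment (Chebyshev) analysis of the test statistic $S \triangleq \sum_{i,j=1}^{n} X_i^T Y_j$ under both hypotheses. Without loss of generality I assume $\rho>0$ (the case $\rho<0$ follows by flipping signs throughout). The threshold $kd|\rho|/2$ is the midpoint between $E_{\calH_0}[S]$ and $E_{\calH_1}[S]$, so controlling the variance on each side will suffice.

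The key algebraic simplification is to factor $S = U^T V$ where $U \triangleq \sum_{i=1}^n X_i$ and $V \triangleq \sum_{j=1}^n Y_j$. Under $\calH_0$, the vectors $U,V$ are independent $\calN(\mathbf{0},n\mathbf{I}_d)$, giving $E_{\calH_0}[S]=0$ and (since the coordinates decouple) $\Var_{\calH_0}(S) = \sum_{\ell=1}^d \Var(U_\ell)\Var(V_\ell) = dn^2$. Under $\calH_1$, by marginalizing over $\sigma$ and $\calK$ one checks that $(U,V)$ is jointly centered Gaussian with
\begin{align}
\Cov(U,U) = \Cov(V,V) = n\mathbf{I}_d, \qquad \Cov(U,V) = \sum_{i \in \calK}\Cov(X_i,Y_{\sigma(i)}) = k\rho \, \mathbf{I}_d,
\end{align}
so the $d$ pairs $(U_\ell, V_\ell)$ are i.i.d.\ bivariate Gaussians with covariance $\bigl(\begin{smallmatrix} n & k\rho \\ k\rho & n \end{smallmatrix}\bigr)$. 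Hence $E_{\calH_1}[S] = \sum_\ell E[U_\ell V_\ell] = kd\rho$, and using Isserlis' theorem $E[U_\ell^2 V_\ell^2] = n^2 + 2(k\rho)^2$, so
\begin{align}
\Var_{\calH_1}(S) \;=\; d\bigl(n^2 + 2k^2\rho^2 - k^2\rho^2\bigr) \;=\; d\bigl(n^2 + k^2\rho^2\bigr).
\end{align}

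Chebyshev's inequality then yields
\begin{align}
\P_{\calH_0}\!\left[S \geq \tfrac{kd\rho}{2}\right] \leq \frac{4\,dn^2}{k^2 d^2 \rho^2} = \frac{4 n^2}{d k^2 \rho^2}, \qquad
\P_{\calH_1}\!\left[S \leq \tfrac{kd\rho}{2}\right] \leq \frac{4\,d(n^2 + k^2\rho^2)}{k^2 d^2 \rho^2} = \frac{4n^2}{d k^2 \rho^2} + \frac{4}{d}.
\end{align}
Summing and invoking $\rho^2 = \omega(n^2/(dk^2))$ together with $d\to\infty$ makes both terms $o(1)$, which gives $\s{R}(\phi_{\s{sum}})\to 0$, completing the proof.

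I do not expect a significant obstacle: the problem is essentially a variance calculation, and the trick $S = U^T V$ reduces it to a sum of $d$ i.i.d.\ products of bivariate Gaussians, after which Isserlis and Chebyshev are mechanical. The only place one must be slightly careful is verifying that under $\calH_1$ the vector $U$ remains exactly $\calN(\mathbf{0},n\mathbf{I}_d)$ (which follows because the marginal law of each $X_i$ is unchanged, and the $X_i$'s remain mutually independent even after conditioning on $\sigma,\calK$), and similarly for $V$; this is what allows us to avoid a separate averaging over $\sigma$ and $\calK$.
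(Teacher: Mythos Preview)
Your proof is correct and arrives at exactly the same variance bounds as the paper, but via a cleaner route. The paper fixes $\sigma=\mathrm{Id}$, $\calK=[k]$, writes $X_i=\rho Y_i+\sqrt{1-\rho^2}\,Z_i$ for $i\in\calK$ (and $X_i=Z_i$ otherwise), expands $T$ into four terms involving the partial sums $\bar Y_k,\bar Y^{(k+1)},\bar Z_k,\bar Z^{(k+1)}$, and computes each second moment separately before reassembling $\mathrm{Var}_{\calH_1}(T)=\rho^2(dk^2+dkn)+(1-\rho^2)knd+(n-k)nd$, which simplifies to your $d(n^2+k^2\rho^2)$. Your observation that $S=U^TV$ with $(U,V)$ jointly Gaussian of covariance $\bigl(\begin{smallmatrix}n&k\rho\\k\rho&n\end{smallmatrix}\bigr)\otimes \mathbf{I}_d$ --- and that this law is the same for every realization of $(\sigma,\calK)$ --- bypasses that decomposition entirely and reduces the variance calculation to a single application of Isserlis' theorem. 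The payoff is brevity and transparency; the paper's approach has the minor advantage of making the contribution of each ``block'' of pairs (correlated vs.\ uncorrelated) explicit, but both yield identical Chebyshev bounds and the same sufficient condition $\rho^2=\omega(n^2/(dk^2))$.
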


Thus, if for example $\frac{k}{n}\to \kappa\in(0,1]$, then we see that $\s{R}(\phi_{\s{sum}})\to0$ if $\kappa^2d\rho^2 = \omega(1)$, strictly better than the guarantees we obtained for the count test. When $\omega(1)=d = O(\log n)$ the count test is successful only if (at best) $\rho^2 = 1-\Omega(1)$, while the sum test allows for vanishing correlations as $\kappa^2d\rho^2 = \omega(1)$. When $d$ is fixed, however, the sum test is completely useless, and the count test is useful. Also, in the regime where $k = O(n/\sqrt{\log n})$, it can be seen that the count test have better statistical guarantees compared to the sum test

Next, we consider weak detection. As mentioned right above Theorem~\ref{thm:upper_comp}, by definition, any test that achieves strong detection achieves weak detection automatically. As so, the count and sum tests above achieve weak detection under the same conditions stated in Theorems~\ref{thm:IndSum} and \ref{th:sumtest}. Nonetheless, as for the $d=1$ case, in the regime where  $d$ is fixed, we can obtain stronger performance guarantees under the weak detection criterion. Specifically, consider the (straightforward) generalization of the comparison test in \eqref{eqn:testcomp}. We define,
\begin{align}
    \phi_{\s{comp}}(\s{X},\s{Y})\triangleq\Ind\ppp{\abs{\sum_{i,j}(X_{ij}-Y_{ij})}\leq\theta},\label{eqn:testcomp_high}
\end{align}
if $\rho\in(0,1]$, and we flip the direction of the inequality in \eqref{eqn:testcomp_high} if $\rho\in[-1,0)$, and $\theta\in\mathbb{R}_+$. Let $\s{G}\sim\calN(0,1)$ and $\s{G}'\sim\calN\p{0,1-\frac{k}{n}|\rho|}$. We define the threshold $\theta$ as the value for which 
\begin{align}
    d_{\s{TV}}\p{\calN(0,1),\calN\p{0,1-\frac{k}{n}|\rho|}}= \pr\p{|\s{G}|\geq\frac{\theta}{\sqrt{2nd}}}-\pr\p{|\s{G}'|\geq \frac{\theta}{\sqrt{2nd}}}.\label{def:thetaTV}
\end{align}
Again, such a value exists by the definition of the total-variation distance for centered Gaussian random variables (see, e.g., \cite[pg. 10]{devroye2023total}). We then have the following result.
\begin{theorem}[Comparison test weak detection]\label{thm:upper_comp_high}
Consider the detection problem in \eqref{eqn:decproblem}, and the comparison test in \eqref{eqn:testcomp_high}, with $\theta$ given by \eqref{eq:thetavalues}. If $\rho^2 = \Omega(1)$ and $k=\Theta(n)$, then $\lim_{n\to\infty}\s{R}(\phi_{\s{comp}})<1$, for any $d\geq1$.
\end{theorem}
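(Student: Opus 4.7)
The plan is to adapt the one-dimensional argument from the commented proof of Theorem~\ref{thm:upper_comp}. Assume without loss of generality that $\rho\in(0,1]$, as the negative case is symmetric. Let $S\triangleq\sum_{i,j}(X_{ij}-Y_{ij})$ be the test statistic. The first step is to compute the distribution of $S$ under each hypothesis. Under $\calH_0$, all entries are independent standard normals, so $S\sim \calN(0,2nd)$. For $\calH_1$, the crucial observation is that $\sum_i Y_{ij} = \sum_i Y_{\sigma(i)j}$ for every $j\in[d]$, so
\begin{align}
S = \sum_{i,j}(X_{ij} - Y_{\sigma(i)j}),
\end{align}
and by conditioning on $(\sigma,\calK)$ each summand is an independent centered Gaussian: variance $2(1-\rho)$ for the $k d$ pairs with $i\in\calK$, and variance $2$ for the remaining $(n-k)d$ pairs. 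Hence, under $\calH_1$, $S\sim\calN(0,2nd - 2kd\rho) = \calN\p{0, 2nd\p{1 - \tfrac{k}{n}\rho}}$, and this distribution does not depend on $\sigma$ or $\calK$, so averaging over them is trivial.

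The second step converts the risk into a total-variation expression. After normalizing by $\sqrt{2nd}$,
\begin{align}
1 - \s{R}(\phi_{\s{comp}}) &= \pr_{\calH_0}(|S|\geq\theta) - \pr_{\calH_1}(|S|\geq\theta) \\
&= \pr\p{|\s{G}|\geq \tfrac{\theta}{\sqrt{2nd}}} - \pr\p{|\s{G}'|\geq \tfrac{\theta}{\sqrt{2nd}}},
\end{align}
with $\s{G}\sim\calN(0,1)$ and $\s{G}'\sim\calN(0,1-\tfrac{k}{n}\rho)$. Invoking the definition of $\theta$ in \eqref{def:thetaTV}, the right-hand side equals $d_{\s{TV}}\p{\calN(0,1),\calN(0,1-\tfrac{k}{n}|\rho|)}$.

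The third and final step is to show that this total-variation distance is bounded away from zero. Under the hypotheses $\rho^2=\Omega(1)$ and $k=\Theta(n)$, the ratio $\tfrac{k}{n}|\rho|$ is bounded away from $0$ (and strictly below $1$), so the two centered Gaussians have variances whose ratio stays in a compact subset of $(0,1)$. The total-variation distance between two centered Gaussians is a continuous function of this ratio that vanishes only at ratio $1$, so it is $\Omega(1)$. Combining the three steps yields $\limsup_n \s{R}(\phi_{\s{comp}}) < 1$ for every fixed $d\geq 1$. Note that the argument is essentially distribution-agnostic in $d$, since the $d$ dimensions are treated as independent replicas that only inflate the variance linearly; there is no genuine obstacle here, the only subtlety being the permutation-invariance reduction $\sum_i Y_{\sigma(i)j} = \sum_i Y_{ij}$ that decouples $S$ from the unknown latent $\sigma$ and $\calK$.
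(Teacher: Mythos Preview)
Your proof is correct and follows essentially the same approach as the paper's: compute the Gaussian law of the statistic $S=\sum_{i,j}(X_{ij}-Y_{ij})$ under each hypothesis, invoke the defining property \eqref{def:thetaTV} of $\theta$ to identify $1-\s{R}(\phi_{\s{comp}})$ with $d_{\s{TV}}(\calN(0,1),\calN(0,1-\tfrac{k}{n}|\rho|))$, and conclude that this is $\Omega(1)$ when $\rho^2=\Omega(1)$ and $k=\Theta(n)$. You are in fact slightly more explicit than the paper in justifying why the law of $S$ under $\calH_1$ is free of $(\sigma,\calK)$, via the permutation-invariance identity $\sum_i Y_{ij}=\sum_i Y_{\sigma(i)j}$.
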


Comparing Theorems~\ref{thm:IndSum} and \ref{thm:upper_comp_high}, we see that when $d$ is fixed, then weak detection is possible if $\rho^2$ is of order constant, while for strong detection our test requires the correlation to converge to unity sufficiently fast. On the other hand, if $d\to\infty$, then the count and sum test clearly have better performance guarantees. Finally, we mention here that Theorem~\ref{thm:upper_comp_high} holds for any natural $d\geq1$, while the weak detection guarantee (achieved by the sum test) in \cite{nazer2022detecting,HuleihelElimelech} holds under the assumption that $d> 60\log 2$. For clarity, we emphasize the detection criterion and asymptotic regime where each one the test above are most relevant/operate the best:
\begin{itemize}
    \item Count test: strong detection and $d\in\mathbb{N}$.
    \item Comparison test: weak detection and $d\in\mathbb{N}$.
    \item Sum test: strong detection and $d\to\infty$.
\end{itemize}

\begin{table}[t!]
\small
\begin{center}
\renewcommand{\arraystretch}{2}
\begin{tabular}{ |p{2.7cm}||p{2cm}|p{2.85cm}|p{3.6cm}| p{2.8cm}|}
 \hline
   & \multicolumn{2}{|c|}{\textbf{Weak Detection}} &  \multicolumn{2}{|c|}{\textbf{Strong Detection}}  \\
 \hline
  \textbf{Asymptotics} & \textbf{Possible} & \textbf{Impossible} &\textbf{Possible}& \textbf{Impossible}\\
 \hline 
 $n,d\to \infty$, $k=\Theta(n)$   & $\underset{\text{Theorem~\ref{th:sumtest}, \cite{nazer2022detecting}}}{\Omega\parenv*{d^{-1}}} $    & $\underset{\text{Theorem~\ref{th:lowerWeak2},\cite{HuleihelElimelech}}}{o\parenv*{d^{-1}} }$ &  $\underset{\text{Theorem~\ref{th:lowerStrong2},\cite{nazer2022detecting}}}{\omega\parenv*{d^{-1}}}$    & $\underset{\text{Theorem~\ref{th:lowerStrong2},\cite{HuleihelElimelech}}}{(1-\Omega(1))\cdot d^{-1} }$\\ \hline

 $d\to \infty$, $n$ constant, $k=\Theta(n)$  & $\underset{\text{Theorem~\ref{th:sumtest}, \cite{nazer2022detecting}}}{\Omega\parenv*{d^{-1}}} $     & $\underset{\text{Theorem~\ref{th:lowerWeak2},\cite{HuleihelElimelech}}}{o\parenv*{d^{-1}} }$ &  $\underset{\text{Theorem~\ref{th:sumtest},\cite{nazer2022detecting}}}{\omega\parenv*{d^{-1}}}$    & $\underset{\text{\cite{HuleihelElimelech}}}{O\parenv*{d^{-1}} }$\\ \hline

 $n\to\infty$, $d$ constant, $k=\Theta(n)$ &$\underset{\text{Theorem~\ref{thm:upper_comp_high}}}{\Omega(1)}$ & $\underset{\text{Theorem~\ref{th:lowerWeak2},\cite{HuleihelElimelech}}
 }{o\parenv*{1}}$&  $\underset{\text{Theorem~\ref{thm:upper_comp},\cite{HuleihelElimelech}}}{1-o\p{n^{-\frac{2}{d-1}}\wedge n^{-\frac{4}{d}}}^*}$ &$\underset{\text{Theorem~\ref{th:lowerStrong2},\cite{HuleihelElimelech}}}{\min\p{\rho^\star(d),d^{-1}}}$\\ \hline

    $n\to \infty ,d$ constant, $k=O(\log(n))$ &$\underset{\text{Theorem~\ref{thm:IndSum}}}{ 1-\s{C} (\frac{n^2}{k})^{-\frac{d}{4}}}$ &  $\underset{\text{Theorem~\ref{th:lowerWeak2}}}{d^{-1}[1- \omega((\frac{k}{n})^{\frac{2}{k}})]}$ &  
 $\underset{\text{Theorem~\ref{thm:IndSum}}}{ 1-o\p{(n^2/k)^{-d/4}}}$  & $\underset{\text{Theorem~\ref{th:lowerStrong2}}}{d^{-1}[1- \Omega((\frac{k}{n})^{\frac{2}{k}})]}$\\
\hline
\end{tabular}
\caption{A summary of our bounds on $\rho^2$, for weak and strong detection, as a function of the asymptotic regime. The one marked with $*$ is for $d>1$, where for $d=1$ only $n^{-d/4}$ is defined. The referenced papers contains result matching those presented for the special case $k=n$.}
\label{tab:exact23}
\end{center}
\end{table}

\vspace{0.2cm}
\noindent\textbf{Discussion.} In the regime where $k=o(n)$, our results shows that the detection problem gets statistically harder in the sense that strong detection becomes impossible even if $\rho^2<(1-\epsilon(n,k))d^{-1}$, where $\epsilon(n,k)\to0$, as $n\to\infty$. For example, if $k=O(\log n)$, then
\begin{align}
   \prod_{i=1}^k\frac{1}{1-(d\rho^2)^i}&\leq \pp{1+\frac{d\rho^2}{1-d\rho^2}}^k,
\end{align}
and so \eqref{eqn:strongDec2} is satisfied if
\begin{align}
    \rho^2<\frac{1}{d}\pp{1-\p{\s{C}\frac{k^2}{n^2}}^{\frac{1}{k}}},
\end{align}
for some $\s{C}>0$. We note that 
\begin{equation}\label{eq:klogn}
    \p{\frac{k}{n}}^{\frac{2}{k}}=\exp\p{-\Omega\p{\frac{n}{\log(n)
}}}=o(1),
\end{equation} which proves that the problem is \textit{strictly} harder. To see that, consider, for example, the one-dimensional case $d=1$. For any $\varepsilon>0$, by \eqref{eqn:strongDec2} and \eqref{eq:klogn} we have that detection is impossible if 
\begin{equation}
   1-n^{-(4+\varepsilon)} < \rho^2 \leq 1-\exp\p{-\Omega\p{\frac{n}{\log(n)
}}},
\end{equation}
where such $\rho$ exist for sufficiently large $n$ since $e^{n/\log(n)}$ decays faster than $n^{-\alpha}$, for all $\alpha$. In particular, there exists a sequence $\rho_n$ that satisfies the strong detection condition of the count test in Theorem~\ref{thm:upper} for $k=n$, for which strong detection is theoretically impossible when $k=O(\log n)$. The discussion above is similar for the weak detection criterion as well. When $k=\Theta(n)$, our lower and upper bounds suggest that the detection problem is roughly statistically the same at least asymptotically. %On the other hand, we have the case where $k=\Theta(n)$, in which, the statistical hardness of the problem is the same as in the fully correlated case. Indeed, when the the databases are only partly correlated, the problem is only harder, and in particular, the impossibility thresholds for the fully correlated case lower bounds those of the partly correlated case. For the upper bounds,  first consider the case where $d$ is constant. note that the test of Theorem~\ref{thm:upper_comp_high} for weak detection applies under with the same threshold $\rho^2=\Omega(1)$ as in the fully correlated case. For strong detection the, the threshold for the test of Theorem~\ref{thm:IndSum}, $\rho=1-o((n^2/k)^{-d/4})$ is equivalent to the fully correlated case as well as $n^2/k=\Theta(n)$. In the regime where $d\to \infty$ (and general $n$), a similar argument shows that the threshold of Theorem~\ref{th:sumtest} remains the same as in the fully correlated case. 
Finally, for general values of $k$, our lower and upper bounds are inconclusive on the exact threshold at which a phase transition for weak/strong detection occurs. However, as shown above, the lower bounds in Theorems~\ref{th:lowerStrong2} and \ref{th:lowerWeak2} show that in certain regimes, the number $k$ of correlated pairs has a non-trivial influence on the statistical hardness of the problem. This could only be proved using our new method for analysing the likelihood's second moment. We summarize the above results in Table~\ref{tab:exact23}, next to previously known bounds in the literature. 

%In fact, in Section~\ref{sec:kLower}, we show that the analysis and techniques in \cite{HuleihelElimelech} do not extend to the partly correlated regime, as it is unclear how to evaluate the obtained expression for the likelihood's second moment (see, Proposition~\ref{prop:nastybound}). This indicates that in addition to the fact that our analysis arguably simpler as compared to \cite{HuleihelElimelech}, our new technique is also essential for analyzing more complicated scenarios, which seem quite complicated to analyze using the standard tools.}

\subsection{Lower bounds}\label{sec:kLower}
As in the proof for the $d=1$ case, throughout this entire section, with some abuse of notation, we use lower-case Greek letters (such as, $\alpha$) for integer matrices, and the dimensions of these matrices are mentioned of understood by context. The prove strategy of Theorems~\ref{th:lowerStrong2}--\ref{th:lowerWeak2} is the same as that of Theorems~\ref{th:lowerStrong}--\ref{th:lowerWeak}. We thereby repeat the exact same steps described in the proof outline presented in Section~\ref{sec:outline}. However, we would like to emphasize that despite this similarity, a more refined analysis is required for the more complicated generalized multidimensional partly-correlated scenario. Note that \eqref{eq:dTV} and \eqref{eq:CauchyS} are true regardless of the structure of our problem and therefore hold for the generalized model as well. Accordingly, we follow the same approach of evaluating the second moment of the likelihood ratio by representing it as a linear combination of basis functions, and then use Parseval's identity. We consider the Hilbert space $L^2(\calH_0)$ of all random variables over $\R^{n\times d}\times \R^{n\times d}$, with the inner product defined in \eqref{eq:innerPs}. As in the one-dimensional case, the set of multivariate Hermite polynomials $\ppp{H_{\alpha,\beta}}_{\alpha,\beta}$ form an orthonormal basis, but now $\alpha,\beta\in \N^{n\times d}$ are $n\times d$ coefficient matrices (rather than vectors), and 
\begin{align}
    H_{\alpha,\beta}(\s{X},\s{Y})=\prod_{i=1}^n\prod_{j=1}^d h_{\alpha_{i,j}}(\s{X}_{i,j})\prod_{i=1}^n\prod_{j=1}^d h_{\beta_{i,j}}(\s{Y}_{i,j}).
\end{align}
For an $n\times d$ integer matrix $\alpha$ define $|\alpha|=\sum_{i,j}\alpha_{i,j}$ to be its $L^1$ norm, and $\s{RS}(\alpha)$ to be the row support of $\alpha$, that is, the set of indices of the non-zero rows of $\alpha$,
\begin{align}
    \s{RS}(\alpha)=\ppp{1\leq i \leq n ~|~ (\alpha_{i,1},\dots,\alpha_{i,d})\neq \mathbf{0}}.
\end{align}
We also generalize the definition of the equivalence relation we defined in Section~\ref{sec:LBPOL} for integer vectors to integer matrices. We say that $\alpha\equiv \beta$ if there exists a permutation $\sigma\in \S_n$ such that $\beta$ is obtained by permuting the rows of $\alpha$ according to $\sigma$. We denote the equivalence class of $\alpha$ by $[\alpha]$. Now, using Parseval's identity, we obtain,
\begin{align}
\E_{\calH_0}\pp{\calL(\s{X},\s{Y})^2}&=\norm{\calL(\s{X},\s{Y})}^2_{\calH_0}\\
    &=\sum_{m=0}^\infty \sum_{\substack{(\alpha,\beta)\in (\N^{n\times d})^2\\ |\alpha|+|\beta|=m }}\innerP{H_{\alpha,\beta}(\s{X},\s{Y}),\calL(\s{X},\s{Y})}_{\calH_0}^2. \label{eq:parsevalD}
\end{align}
\sloppy
Our goal now is to give an exact expression for the orthogonal projection coefficients $\innerP{H_{\alpha,\beta}(\s{X},\s{Y}),\calL(\s{X},\s{Y})}_{\calH_0}$. This is done in a similar fashion to Lemma~\ref{lem:OrthCeoff}.
\begin{lemma}
    For any $\alpha,\beta\in \N^{n\times d}$, 
    \begin{align}
        \innerP{H_{\alpha,\beta}(\s{X},\s{Y}),\calL(\s{X},\s{Y})}_{\calH_0}&=\rho^{|\alpha|}\cdot \P[\sigma(\beta)=\alpha]\cdot \P[\s{RS}(\alpha)\subseteq \calK]\\
        & = \Ind\ppp{\alpha\equiv\beta, |\s{RS}(\alpha)|\leq k}\cdot \frac{\rho^{|\alpha|}}{\abs{[\alpha]}} \prod_{i=0}^{\abs{\s{RS}(\alpha)-1}}\p{\frac{k-i}{n-i}}.
    \end{align}
    \label{lem:OrthCeoffD}
\end{lemma}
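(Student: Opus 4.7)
The plan is to mirror the argument of Lemma~\ref{lem:OrthCeoff}, while accounting for the two pieces of randomness in \eqref{eqn:decproblem2} that were absent before: the uniformly random subset $\calK\sim\s{Unif}\binom{[n]}{k}$ of correlated rows, and the extra feature index $j\in [d]$. As in the one-dimensional case, I would first rewrite the Gaussian inner product as an expectation under $\calH_1$ via the Radon--Nikodym derivative, then condition on both $\sigma$ and $\calK$, obtaining
\begin{align}
\innerP{H_{\alpha,\beta}(\s{X},\s{Y}),\calL(\s{X},\s{Y})}_{\calH_0}
= \E_{\sigma,\calK}\pp{\prod_{i=1}^n \prod_{j=1}^d \E_{\calH_1|\sigma,\calK}\pp{h_{\alpha_{i,j}}(\s{X}_{i,j})\, h_{\beta_{\sigma(i),j}}(\s{Y}_{\sigma(i),j})}},
\end{align}
where the factorization uses the conditional independence of the pairs $\{(\s{X}_i,\s{Y}_{\sigma(i)})\}_i$ together with the product form of $H_{\alpha,\beta}$ after reindexing the $\s{Y}$-factors by $\sigma$.

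The main step is to evaluate each per-coordinate factor by splitting into two cases. For $i\in \calK$ the pair $(\s{X}_{i,j},\s{Y}_{\sigma(i),j})$ is $\calN(\mathbf{0},\Sigma_\rho)$, so the Hermite identity chain \eqref{eq:HerIden}--\eqref{eq:prod3} contributes the factor $\rho^{\beta_{\sigma(i),j}}\, \delta[\alpha_{i,j}-\beta_{\sigma(i),j}]$. For $i\notin\calK$ the pair consists of two independent standard Gaussians, so the expectation splits as $\E[h_{\alpha_{i,j}}(\s{X}_{i,j})]\,\E[h_{\beta_{\sigma(i),j}}(\s{Y}_{\sigma(i),j})]$, which by orthogonality of the Hermite polynomials against $h_0\equiv 1$ collapses to $\delta[\alpha_{i,j}]\,\delta[\beta_{\sigma(i),j}]$. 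Multiplying the $d$ column-factors and then the $n$ row-factors, the conditional expectation is non-zero precisely when $\sigma(\beta)=\alpha$ as full matrices \emph{and} every row index $i$ with $\alpha_i\neq\mathbf{0}$ lies in $\calK$, i.e.\ $\s{RS}(\alpha)\subseteq \calK$; on this event the surviving $\rho$-factors multiply to $\rho^{|\beta|}=\rho^{|\alpha|}$, the equality following because equivalent matrices share their $L^1$ norm.

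Taking the outer expectation over the \emph{independent} random objects $\sigma$ and $\calK$ then yields the first claimed form
\begin{align}
\innerP{H_{\alpha,\beta},\calL}_{\calH_0} = \rho^{|\alpha|}\cdot \P[\sigma(\beta)=\alpha]\cdot \P[\s{RS}(\alpha)\subseteq\calK].
\end{align}
To pass to the second (explicit) form, I would reuse the symmetry argument from \eqref{eq:probaper} to rewrite $\P[\sigma(\beta)=\alpha] = \Ind\{\alpha\equiv\beta\}/\abs{[\alpha]}$, and a standard hypergeometric count for a uniformly random $k$-subset of $[n]$ to obtain $\P[\s{RS}(\alpha)\subseteq\calK] = \binom{n-s}{k-s}/\binom{n}{k} = \prod_{i=0}^{s-1}(k-i)/(n-i)$ where $s=|\s{RS}(\alpha)|$, valid for $s\leq k$ and vanishing otherwise.

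The calculation is largely mechanical given Lemma~\ref{lem:OrthCeoff}; the only point that requires care, rather than a serious technical obstacle, is the bookkeeping for rows outside $\calK$. One must verify that the independent-Gaussian factors truly collapse via the mean-zero property of Hermite polynomials and not merely produce a reduced-correlation contribution. This is exactly what injects the row-support constraint $\s{RS}(\alpha)\subseteq\calK$ into the answer and produces the combinatorial prefactor $\prod_{i=0}^{|\s{RS}(\alpha)|-1}(k-i)/(n-i)$ that distinguishes the partially-correlated setting from the fully-correlated $k=n$ case.
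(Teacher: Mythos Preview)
Your proposal is correct and follows essentially the same approach as the paper's own proof: convert the inner product to an $\calH_1$-expectation via the Radon--Nikodym derivative, condition on $(\sigma,\calK)$ and factorize across the independent coordinates $(i,j)$, split each factor according to whether $i\in\calK$ (yielding $\rho^{\beta_{\sigma(i),j}}\delta[\alpha_{i,j}-\beta_{\sigma(i),j}]$ via the Hermite identity) or $i\notin\calK$ (yielding $\delta[\alpha_{i,j}]\delta[\beta_{\sigma(i),j}]$ via orthogonality to $h_0$), and then read off the product as $\rho^{|\alpha|}\Ind\{\sigma(\beta)=\alpha\}\Ind\{\s{RS}(\alpha)\subseteq\calK\}$ before averaging over the independent $\sigma$ and $\calK$. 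The paper's proof is organized identically, and your final evaluation of the two probabilities via the symmetry argument and the hypergeometric count matches the paper's computation line for line.
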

\begin{proof}
    We start the proof by repeating the steps of the proof of Lemma~\ref{lem:OrthCeoff}. Given $\sigma$ and $\calK$, the pairs $(\s{X_{i,j}},\s{Y}_{\sigma(i),j})_{i,j}$  are statistically independent random variables, and thus,
      \begin{align}
        \innerP{H_{\alpha,\beta}(\s{X},\s{Y}),\calL(\s{X},\s{Y})}_{\calH_0}
        & =\E_{\calH_0}\pp{H_{\alpha,\beta}(\s{X},\s{Y})\frac{\mathrm{d}\P_{\calH_1}}{\mathrm{d}\P_{\calH_0}}}\\
        &=\E_{\calH_1}\pp{H_{\alpha,\beta}(\s{X},\s{Y})}\\
        &=\E_{\sigma,\calK}\pp{\E_{\calH_1|\sigma,\calK}\pp{H_{\alpha,\beta}(\s{X},\s{Y})}}\\
        &=\E_{\sigma}\pp{\E_{\calH_1|\sigma,\calK}\pp{\prod_{i=1}^n\prod_{j=1}^d h_{\alpha_{i,j}}(\s{X}_{i,j}) 
        \cdot h_{\beta_{\sigma(i),j}}(\s{Y}_{\sigma(i),j})}}\\
        &=\E_{\sigma,\calK}\pp{\prod_{i=1}^n\prod_{j=1}^d \E_{\calH_1|\sigma,\calK}\pp{h_{\alpha_{i,j}}(\s{X}_{i,j}) 
        \cdot h_{\beta_{\sigma(i),j}}(\s{Y}_{\sigma(i),j})}}. \label{eq:prod1D}
    \end{align} 
    Now, recall that conditioned on $\sigma$ and $\calK$, for all $i\in [n]$, if $i\notin \calK$, then $\s{X}_{i,j}$ and $\s{Y}_{\sigma(i),j}$ are independent. Thus, if $\alpha_{i,j}$ or $\beta_{\sigma(i),j}$ are non-zero, we get
    \begin{align}
          \E_{\calH_1|\sigma,\calK}\pp{h_{\alpha_{i,j}}(\s{X}_{i,j}) 
        \cdot h_{\beta_{\sigma(i),j}}(\s{Y}_{\sigma(i),j})}
        &=\E_{\calH_1|\sigma,\calK}\pp{h_{\alpha_{i,j}}(\s{X}_{i,j})} 
        \cdot \E_{\calH_1|\sigma,\calK}\p{h_{\beta_{\sigma(i),j}}(\s{Y}_{\sigma(i),j})}\\
        &=0,\label{eq:prodcalc}
      \end{align}
      where the last equality follows since distinct Hermite polynomials are orthogonal under $\innerP{\cdot,\cdot}_{\calH_0}$, and therefore 
      \begin{align}
    \E_{\calH_1|\sigma,\calK}\pp{h_{\alpha_{i,j}}(\s{X}_{i,j})}=\innerP{h_{\alpha_{i,j}},h_0}_{\calH_{0}}=0,
      \end{align}
      if $\alpha_{i,j}\neq0$, and 
       \begin{align}
    \E_{\calH_1|\sigma,\calK}\pp{h_{\beta_{\sigma(i),j}}(\s{Y}_{\sigma(i),j})}=\innerP{h_{\beta_{\sigma(i),j}},h_0}_{\calH_{0}}=0,
      \end{align}
       if $\beta_{\sigma(i),j}\neq 0$.
       Thus, on one hand, given $\sigma$ and $\calK$, we obtain that the product 
       \begin{align}
           \prod_{i=1}^n\prod_{j=1}^d \E_{\calH_1|\sigma,\calK}\pp{h_{\alpha_{i,j}}(\s{X}_{i,j}) 
        \cdot h_{\beta_{\sigma(i),j}}(\s{Y}_{\sigma(i),j})}
       \end{align}
       can be non-zero only if $\alpha_{i,j}=\beta_{\sigma(i),j}=0$ for any $i\neq \calK$ and $j\in [d]$. On the other hand, given $\sigma$, $\calK$,  $i\in \calK$ and $j\in [d]$, by the proof of Lemma~\ref{lem:OrthCeoff}, we have 
\begin{align}
\E_{\calH_1|\sigma,\calK}\pp{h_{\alpha_{i,j}}(\s{X}_{i,j}) 
        \cdot h_{\beta_{\sigma(i),j}}(\s{Y}_{\sigma(i),j})}=\rho^{\alpha_{i,j}}\delta\pp{\alpha_{i,j}-\beta_{\sigma(i),j}}.\label{eq:prodcalc2}
      \end{align}
Therefore, combining \eqref{eq:prodcalc} and \eqref{eq:prodcalc2} we have
\begin{align}
           \prod_{i=1}^n\prod_{j=1}^d \E_{\calH_1|\sigma,\calK}\pp{h_{\alpha_{i,j}}(\s{X}_{i,j}) 
        \cdot h_{\beta_{\sigma(i),j}}(\s{Y}_{\sigma(i),j})}&=\rho^{\sum_{i=1}^n\sum_{j=1}^d \alpha_{i,j}}\Ind\ppp{\substack{\alpha_{i,j}=\beta_{\sigma(i),j} \\
        \forall i\in \calK, j\in [d]}}\cap\ppp{\substack{\alpha_{i,j}=\beta_{\sigma(i),j}=0 \\
        \forall i\notin \calK, j\in [d]}}\\
        &=\rho^{|\alpha|}\Ind\ppp{\sigma(\beta)=\alpha}\Ind\ppp{\s{RS}(\alpha)\subseteq \calK}.
       \end{align}
      We now conclude,
       \begin{align}
        \innerP{H_{\alpha,\beta}(\s{X},\s{Y}),\calL(\s{X},\s{Y})}_{\calH_0}     &=\E_{\sigma,\calK}\pp{\prod_{i=1}^n\prod_{j=1}^d \E_{\calH_1|\sigma,\calK}\pp{h_{\alpha_{i,j}}(\s{X}_{i,j}) 
        \cdot h_{\beta_{\sigma(i),j}}(\s{Y}_{\sigma(i),j})}}\\
        & =\E_{\calH_1|\sigma,\calK}\pp{\rho^{|\alpha|}\Ind\ppp{\sigma(\beta)=\alpha}\Ind\ppp{\s{RS}(\alpha)\subseteq \calK}}\\
        & =\rho^{|\alpha|}\P\pp{\sigma(\beta)=\alpha}\P\pp{\s{RS}(\alpha)\subseteq \calK}.
    \end{align} 
        Since $\sigma$ is uniform on $\S_n$, $\sigma(\beta)$ is uniformly distributed over all $[\beta]$, which is the set of matrices obtained by permutations on the rows of $\beta$. Thus, 
        \begin{align}
            \P\pp{\sigma(\beta)=\alpha}=\frac{1}{|[\beta]|}\Ind\ppp{\alpha\equiv \beta}=\frac{1}{|[\alpha]|}\Ind\ppp{\alpha\equiv \beta}.
        \end{align}
        For a fixed $\alpha\equiv\beta$, note that $\P\pp{\s{RS}(\alpha)\subseteq \calK}$ is non-zero only if $|\s{RS}(\alpha)|\leq k$, in which case, since $\calK$ is uniform over $\binom{[n]}{k}$, 
        \begin{align}
            \P\pp{\s{RS}(\alpha)\subseteq \calK}=\frac{\binom{n-|\s{RS}(\alpha)|}{k-|\s{RS}(\alpha)|}}{\binom{n}{k}}=\prod_{i=0}^{|\s{RS}(\alpha)|-1}\frac{k-i}{n-i}. 
        \end{align}
        This completes the proof.
\end{proof}

We carry on with our analysis of $\E_{\calH_0}[\calL(\s{X},\s{Y})^2]$ in \eqref{eq:parsevalD}. Plugging in the orthogonal projection coefficients calculated in Lemma~\ref{lem:OrthCeoffD}, we obtain,
\begin{align}
     \E_{\calH_0}\pp{\calL(\s{X},\s{Y})^2}
    &=\sum_{m=0}^\infty \sum_{\substack{(\alpha,\beta)\in (\N^{n\times d})^2\\ |\alpha|+|\beta|=m }}\innerP{H_{\alpha,\beta}(\s{X},\s{Y}),\calL(\s{X},\s{Y})}_{\calH_0}^2\\
        &=1+\sum_{m=1}^\infty \sum_{\substack{(\alpha,\beta)\in (\N^{n\times d})^2\\ |\alpha|+|\beta|=m }}\innerP{H_{\alpha,\beta}(\s{X},\s{Y}),\calL(\s{X},\s{Y})}_{\calH_0}^2\\
    &=1+\sum_{m=1}^\infty \sum_{\substack{(\alpha,\beta)\in (\N^{n\times d})^2\\ |\alpha|+|\beta|=m }} \Ind\ppp{\alpha\equiv\beta, |\s{RS}(\alpha)|\leq k}\frac{\rho^{2|\alpha|}}{\abs{[\alpha]}^2} \prod_{i=0}^{\abs{\s{RS}(\alpha)-1}}\p{\frac{k-i}{n-i}}^2\\
    &=1+\sum_{m=1}^\infty \sum_{\ell=0}^k\sum_{\substack{\alpha \in\N^{n\times d}\\ |\alpha|=m\\|\s{RS}(\alpha)|=\ell }}\sum_{\beta\equiv\alpha} \frac{\rho^{2m}}{\abs{[\alpha]}^2} \prod_{i=0}^{\ell-1}\p{\frac{k-i}{n-i}}^2\\
    &=1+\sum_{m=1}^\infty \rho^{2m} \sum_{\ell=0}^k \prod_{i=0}^{\ell-1}\p{\frac{k-i}{n-i}}^2\sum_{\substack{\alpha \in\N^{n\times d}\\ |\alpha|=m\\|\s{RS}(\alpha)|=\ell }}\sum_{\beta\equiv\alpha} \frac{1}{\abs{[\alpha]}^2} \\
    &=1+\sum_{m=1}^\infty \rho^{2m} \sum_{\ell=0}^k \prod_{i=0}^{\ell-1}\p{\frac{k-i}{n-i}}^2\sum_{\substack{\alpha \in\N^{n\times d}\\ |\alpha|=m\\|\s{RS}(\alpha)|=\ell }}\frac{1}{\abs{[\alpha]}} \\
    &=1+\sum_{m=1}^\infty \rho^{2m} \sum_{\ell=0}^k \prod_{i=0}^{\ell-1}\p{\frac{k-i}{n-i}}^2\abs{\ppp{[\alpha] ~\bigg| \substack{\alpha\in\N^{n\times d}\\ |\alpha|=m\\|\s{RS}(\alpha)|=\ell }}}.\label{eq:sumwithequiv}
\end{align}
Note that when $k=n$ and $d=1$, \eqref{eq:sumwithequiv} reduces to the expression in \eqref{eq:parsevalDec}, as expected. 

We now turn to estimate the number of equivalence classes $\s[\alpha]$, for $\alpha\in \N^{n\times d}$, $|\alpha|=m$ and $|\s{RS}(\alpha)|=\ell$, for a fixed $\ell\leq k$. To that end, as in the one-dimensional case, we notice to a correspondence between equivalence classes and $d$-dimensional integer distribution functions with certain properties. Specifically, for $\alpha \in \N^{n\times d}$, consider the function $p_{\alpha}:\N^{d}\to \N$, such that $p_\alpha(v)$ is the number of rows of $\alpha$, which are equal to $v\in\N^d$. We also note that $|\alpha|=m$ and $|\s{RS}(\alpha)|=\ell$ if and only if 
\begin{align}
    \sum_{v\in\N^d}p_\alpha(v)=\ell \quad \text{and}\quad \sum_{v\in\N^d}|v|p_\alpha(v)=m,\label{eq:distcond}
\end{align}
where $|v|=\sum_{i=1}^d v_i$. As in the one-dimensional case, we have that $\alpha\equiv \beta$ if and only if $p_\alpha=p_\beta$. Thus, we wish to bound the number of integer distribution functions 
which satisfy \eqref{eq:distcond}. 

\begin{lemma}\label{lem:ICQ}
For fixed $\ell\leq k$ and $m\in \N$, we have
\begin{align}
    \sum_{p_1\in \s{Par}(m,\ell)}\prod_{q\in \N}\frac{1}{p_1(q)!}\binom{q+d-1}{d-1}^{p_1(q)}&\leq \abs{ \ppp{p:\N^d\to \N ~|~ p \text{ satisfies \eqref{eq:distcond}}}},\label{eqn:low1}
    \end{align}
    where $p_1$ is a partition of the integer $m$ to exactly $\ell$ parts,\footnote{It is convenient to think about $p_1$ as a one-dimensional integer distribution function, such that $p_1(q)$ is the number of parts of size $q$ in a given partition.} and,
    \begin{align}
    \abs{ \ppp{p:\N^d\to \N ~|~ p \text{ satisfies \eqref{eq:distcond}}}}&\leq\sum_{p_1\in \s{Par}(m,\ell)}\prod_{q\in \N}\binom{q+d-1}{d-1}^{p_1(q)}\label{eqn:upp1}\\
    &\leq d^m|\s{Par}(m,\ell)|.\label{eqn:upp2}
\end{align}
\end{lemma}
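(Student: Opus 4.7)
The plan is to parameterise the integer distribution functions $p:\N^d\to\N$ satisfying \eqref{eq:distcond} by first grouping the nonzero vectors $v\in\N^d$ according to their $L^1$-norm, and then counting how many ways the constraints can be realised within each norm-level. Concretely, for such a $p$, define its marginal $p_1:\N_+\to\N$ by
\begin{equation*}
p_1(q)\triangleq\sum_{v\in\N^d:\,|v|=q}p(v).
\end{equation*}
The two conditions in \eqref{eq:distcond} translate precisely into $\sum_q p_1(q)=\ell$ and $\sum_q q\,p_1(q)=m$, i.e.\ $p_1\in\s{Par}(m,\ell)$. Thus every $p$ satisfying \eqref{eq:distcond} produces exactly one $p_1\in\s{Par}(m,\ell)$, and the total count splits as a sum over such $p_1$.

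Next, for a fixed $p_1\in\s{Par}(m,\ell)$, I would count the number of $p$'s with that marginal. For each $q\in\N_+$ the number of vectors $v\in\N^d$ with $|v|=q$ is $N_q\triangleq\binom{q+d-1}{d-1}$, and specifying $p$ restricted to $\{|v|=q\}$ amounts to choosing nonnegative integers $(p(v))_{|v|=q}$ summing to $p_1(q)$. The number of such choices is the stars-and-bars quantity $\binom{N_q+p_1(q)-1}{p_1(q)}$. Since these choices for different $q$ are independent, the number of $p$'s inducing a given $p_1$ is
\begin{equation*}
\prod_{q\in\N_+}\binom{N_q+p_1(q)-1}{p_1(q)}.
\end{equation*}

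The bounds \eqref{eqn:low1} and \eqref{eqn:upp1} then follow from the elementary two-sided estimate
\begin{equation*}
\frac{N_q^{p_1(q)}}{p_1(q)!}\;\leq\;\binom{N_q+p_1(q)-1}{p_1(q)}\;\leq\;N_q^{p_1(q)},
\end{equation*}
after taking the product over $q$ and summing over $p_1\in\s{Par}(m,\ell)$. The lower estimate comes from writing the binomial coefficient as $\tfrac{1}{p_1(q)!}\prod_{i=0}^{p_1(q)-1}(N_q+i)$ and using $N_q+i\geq N_q$; the upper estimate is the standard observation that every length-$p_1(q)$ multiset drawn from $N_q$ symbols is hit by at least one of the $N_q^{p_1(q)}$ ordered sequences.

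Finally, to obtain \eqref{eqn:upp2}, I would bound $N_q=\binom{q+d-1}{d-1}\leq d^q$ (the number of unordered distributions of $q$ indistinguishable units into $d$ bins is dominated by the number $d^q$ of ordered assignments of $q$ distinguishable units to the same bins), so that
\begin{equation*}
\prod_{q}N_q^{p_1(q)}\leq \prod_q d^{q\,p_1(q)}=d^{\sum_q q\,p_1(q)}=d^m,
\end{equation*}
and summing the resulting uniform bound $d^m$ over $p_1\in\s{Par}(m,\ell)$ yields $d^m|\s{Par}(m,\ell)|$. No real obstacle is anticipated; the argument is essentially a two-step refinement of the correspondence between equivalence classes and integer distributions already exploited in the one-dimensional proof (Section~\ref{sec:LBPOL}), followed by elementary estimates on the Vandermonde/stars-and-bars quantities. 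The only bookkeeping point to be careful with is that $p_1$ is a partition into \emph{exactly} $\ell$ parts, which is enforced by the constraint $|\s{RS}(\alpha)|=\ell$ and is what allows the prefactor $\prod_{i=0}^{\ell-1}\p{\frac{k-i}{n-i}}^2$ in \eqref{eq:sumwithequiv} to be summed cleanly downstream.
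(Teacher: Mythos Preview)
Your proof is correct and follows essentially the same approach as the paper: both arguments decompose the count according to the norm-profile partition $p_1\in\s{Par}(m,\ell)$ and then count, for each $q$, the number of ways to realise $p_1(q)$ vectors of norm $q$ among the $N_q=\binom{q+d-1}{d-1}$ possibilities. Your version is marginally cleaner in that you first write the exact per-level count as the multiset coefficient $\binom{N_q+p_1(q)-1}{p_1(q)}$ and then apply the two-sided elementary bound, whereas the paper obtains the same inequalities by describing an over-counting procedure (yielding \eqref{eqn:upp1}) and then dividing by $p_1(q)!$ (yielding \eqref{eqn:low1}); the content is identical.
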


\begin{proof}
     We note that any $d$-dimensional distribution function $p$ which satisfies \eqref{eq:distcond} corresponds uniquely to an unordered list of $\ell$ non-zero vectors in $\N^d$, such that the sum over all entries of all vectors together is $m$. For a given vector $v\in\N^d$, $p(v)$ represents the number of occurrences of the vector $v$ in this list. Thus, in order to prove the desired bounds on $\abs{ \ppp{p:\N^d\to \N ~|~ p \text{ satisfies \eqref{eq:distcond}}}}$, we will bound the number of such unordered lists.

\noindent\textbf{Proof of \eqref{eqn:upp1}.} We upper bound the number of the unordered lists described above by using a procedure that generates all possible lists, and then bounding the number of distinct lists generated by this procedure. We first go over all possible norms of vectors $v$ in that list. Since the summation of these weights is exactly $m$, we should go over all partitions of the integer $m$ to exactly $\ell$ elements. Having the norms of the vectors in the list fixed, we upper bound the number of ways to choose integer vectors with these exact norms. For a partition part with weight $q$, the number of vectors in $\N^d$ with weight $q$ is exactly the same as the number of ways to distribute $q$ elements into $d$ cells, which is known to be equal to $\binom{q+d-1}{d-1}$. For a fixed partition $p_1$, there are $p_1(q)$ parts with weight $q$, and therefore, there are at most $\binom{q+d-1}{d-1}^{p_1(q)}$ ways to generate $d$-dimensional integer vectors with such norm. Taking into account all of the distinct parts of the partition $p_1$, we upper bound the number ways to generate appropriate $d$-dimensional integer vectors by,
\begin{align}\label{eq:itemsincells}
    \prod_{q\in \N}\binom{q+d-1}{d-1}^{p_1(q)}.
\end{align}
This concludes the first upper-bound.

\noindent\textbf{Proof of \eqref{eqn:upp2}.} For the second upper bound, we note that the factor $\binom{q+d-1}{d-1}$, which is the number of ways to distribute $q$ elements into $d$ cells, can be upper bounded by $d^q$. If each one of the $q$ elements is a associated with a distinct item, they have $d$ optional cells each, and therefore, together, the $q$ elements have $d^q$ optional distributions along the $d$ cells. Hence, 
\begin{align}
     \ppp{p:\N^d\to \N ~|~ p \text{ satisfies \eqref{eq:distcond}}}&\leq\sum_{p_1\in \s{Par}(m,\ell)}\prod_{q\in \N}\binom{q+d-1}{d-1}^{p_1(q)}\\
     &\leq \sum_{p_1\in \s{Par}(m,\ell)}\prod_{q\in \N}d^{q\cdot p_1(q)}\\
     &\leq \sum_{p_1\in \s{Par}(m,\ell)}d^{\sum_{q\in \N}q\cdot p_1(q)}\\
     &= \sum_{p_1\in \s{Par}(m,\ell)}d^{m}\\
     &=\abs{\s{Par}(m,\ell)}d^{m}.
\end{align}

\noindent\textbf{Proof of \eqref{eqn:low1}.} For the lower bound, we note that for a fixed $q$, the expression in \eqref{eq:itemsincells} may over-count the number of possible ways to choose $p_1(q)$ \textit{unordered} $d$-dimensional vectors whose norms are $q$, because different orderings of such vectors are counted multiple times. Thus, we lower bound the number of possible ways to choose $p_1(q)$ \textit{unordered} $d$-dimensional vectors with norms equal to $q$, by dividing by $p_1(q)!$, the maximal number of multiple counts for each set of $p_1(q)$ vectors. The rest of the analysis is carried out similarly to the upper bound.
\end{proof}

We are now ready to prove our lower bounds.

\begin{proof}[Proofs of Theorems~\ref{th:lowerStrong2}--\ref{th:lowerWeak2}]

    We continue from \eqref{eq:sumwithequiv}, and use Lemma~\ref{lem:ICQ} to obtain,
\begin{align}
    \E_{\calH_0}\pp{\calL(\s{X},\s{Y})^2}&=1+\sum_{m=1}^\infty \rho^{2m} \sum_{\ell=0}^k \prod_{i=0}^{\ell-1}\p{\frac{k-i}{n-i}}^2\abs{\ppp{[\alpha] ~\bigg| \substack{\alpha\in\N^{n\times d}\\ |\alpha|=m\\|\s{RS}(\alpha)|=\ell }}}\\
    &=1+\sum_{m=1}^\infty \rho^{2m} \sum_{\ell=0}^k \prod_{i=0}^{\ell-1}\p{\frac{k-i}{n-i}}^2\abs{\ppp{[\alpha] ~\bigg| \substack{\alpha\in\N^{n\times d}\\ |\alpha|=m\\|\s{RS}(\alpha)|=\ell }}}\label{eq:precise}\\
    &=1+\sum_{m=1}^\infty \rho^{2m} \sum_{\ell=0}^k \prod_{i=0}^{\ell-1}\p{\frac{k-i}{n-i}}^2 d^{m}\abs{\s{Par}(m,\ell)}\\
     &\overset{(a)}{\leq}1+\sum_{m=1}^\infty \p{d\rho^{2}}^m \sum_{\ell=0}^k \p{\frac{k}{n}}^{2\ell} \abs{\s{Par}(m,\ell)}\\
     &\overset{(b)}{\leq}1+\sum_{m=1}^\infty \p{d\rho^{2}}^m \sum_{\ell=0}^k \p{\frac{k}{n}}^{2} \abs{\s{Par}(m,\ell)}\\
     &=1+\p{\frac{k}{n}}^{2} \sum_{m=1}^\infty \p{d\rho^{2}}^m \sum_{\ell=0}^k \abs{\s{Par}
     (m,\ell)}\\
     &=1+\p{\frac{k}{n}}^{2} \sum_{m=1}^\infty \p{d\rho^{2}}^m \abs{\s{Par}(m,\leq_k)},\label{eq:dkfinitebound}
\end{align}
where $(a)$ follows since the function $\frac{k-i}{n-i}$ is monotonically decreasing as a function of $i$, as long as $n>k$, and $(b)$ follows since $|\s{Par(m,0)}|=0$, for any positive integer, and $(k/n)^{2\ell}$ is decreasing with $\ell$.

Now, recall the well-known formula for the generating function of integer partitions to at most $k$ elements (see, for example, \cite[Chapter 1.3]{flajolet2009analytic}),
\begin{align}
    P_{\leq_k}(z)=\sum_{m=0}^\infty |\s{Par}(m,\leq_k)|z^i=\prod_{i=1}^k\frac{1}{1-z^i},
\end{align}
where the infinite sum converges when $|z|< 1$.
Thus, for any $\rho^2< \frac{1}{d}$, the infinite sum in \eqref{eq:dkfinitebound} converges, and we have,
\begin{align}
    \E_{\calH_0}[\calL(\s{X},\s{Y})^2]&\leq 
   1+ \p{\frac{k}{n}}^2\sum_{m=1}^\infty |\s{Par}(m,\leq_k)|(d\rho^2)^m\\
   &\leq 
   1-\p{\frac{k}{n}}^2 +\p{\frac{k}{n}}^2\sum_{m=0}^\infty |\s{Par}(m,\leq_k)|(d\rho^2)^m \\
    &= 1+\p{\frac{k}{n}}^2\p{\prod_{i=1}^k\frac{1}{1-(d\rho^2)^i}-1}.
\end{align}
The proof is concluded by repeating the same exact steps as in the proofs of Theorems~\ref{th:lowerStrong}--\ref{th:lowerWeak}, by using Cauchy-Schwartz inequality \eqref{eq:CauchyS} for the impossibility of weak detection, and Lemma~\ref{lem:ICQ} for the impossibility of strong detection.
\end{proof} 

Let us now compare our proof technique with the one in \cite{nazer2022detecting, HuleihelElimelech, paslev2023testing}, where a straightforward analysis of $\E_{\calH_0}[\calL^2]$ is carried out. Repeating the same steps in \cite{HuleihelElimelech}, while generalizing the analysis for the partly correlated scenario, one can show that  $\E_{\calH_0}[\calL(\s{X},\s{Y})^2]$ can be expressed using the following representation of random permutations cycle enumerators. The proof of the following result appear is relegated to Appendix~\ref{app:cyclesPer}.
\begin{prop}\label{prop:nastybound}
    The second moment of the likelihood ratio for the detection problem defined in \eqref{eqn:decproblem2} is given by 
    \begin{align}
        \E_{\calH_0}[\calL(\s{X},\s{Y})^2] =\bE_{\sigma}\bE_{\calK}\pp{\prod_{\ell=1}^k\frac{1}{(1-\rho^{2\ell})^{d\s{N}_{\ell}(\sigma,\calK)}}}, \label{eq:nastycalc}
    \end{align}
    where $\s{N}_{\ell}(\sigma,\calK)$ is the number of cycles of $\sigma$ whose intersection with the set $[k]\cap \calK$ is of size $\ell$. 
\end{prop}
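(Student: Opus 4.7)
The plan is to execute the classical ``two independent copies'' second-moment computation, and then collapse the result to a single permutation by symmetry. I first write $\calL(\s{X},\s{Y}) = \bE_{\sigma,\calK}[L_{\sigma,\calK}(\s{X},\s{Y})]$, where $L_{\sigma,\calK}$ is the conditional likelihood ratio of $\calH_1$ given $(\sigma,\calK)$. Squaring and using Fubini yields
\begin{align}
\E_{\calH_0}[\calL^2] = \bE_{(\sigma,\calK),(\sigma',\calK')}\pp{\bE_{\calH_0}\pp{L_{\sigma,\calK}L_{\sigma',\calK'}}},
\end{align}
with $(\sigma,\calK)$ and $(\sigma',\calK')$ i.i.d.\ copies of the planted parameters. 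Since the $\calH_0$ law is a $d$-fold product of i.i.d.\ standard Gaussian pairs, and each $L$ factors analogously over features, the inner null expectation is the $d$-th power of a single-feature integral, so it suffices to analyse $d=1$ and raise the outcome to the $d$-th power.

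Next, I associate to $(\sigma,\calK,\sigma',\calK')$ the bipartite multigraph $G$ on $\set{X_i}\cup\set{Y_j}$, with ``red'' edges $\set{X_i,Y_{\sigma(i)}}$ for $i\in\calK$ and ``blue'' edges $\set{X_i,Y_{\sigma'(i)}}$ for $i\in\calK'$. Each vertex carries at most one red and at most one blue edge, so every connected component of $G$ is either a path or a properly two-coloured (and therefore even) cycle. The quadratic form in $L_{\sigma,\calK}L_{\sigma',\calK'}/f_{\calH_0}$ splits into independent blocks indexed by these components, so the null integral factors over them. A direct Gaussian computation (equivalently, the identity $\int f_1f_2/f_0 = \det(\Sigma_1+\Sigma_2-\Sigma_1\Sigma_2)^{-1/2}$ applied component-wise) yields contribution $1$ for every path component and $(1-\rho^{2\ell})^{-1}$ for every bipartite cycle consisting of $\ell$ red and $\ell$ blue edges, giving
\begin{align}
\bE_{\calH_0}\pp{L_{\sigma,\calK}L_{\sigma',\calK'}} = \prod_{\ell\geq 1}\p{\frac{1}{1-\rho^{2\ell}}}^{d\cdot C_\ell(\sigma,\calK,\sigma',\calK')},
\end{align}
where $C_\ell$ counts bipartite $2\ell$-cycles of $G$.

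The final step collapses the double expectation into a single one. Relabeling $Y_j\mapsto Y_{(\sigma')^{-1}(j)}$ is a null-measure-preserving change of coordinates that turns the blue edges into $\set{X_i,Y_i}_{i\in\calK'}$ and the red edges into $\set{X_i,Y_{\tau(i)}}_{i\in\calK}$, where $\tau=(\sigma')^{-1}\sigma$ is uniform on $\S_n$ and independent of $(\calK,\calK')$. A further relabeling of $[n]$ by any permutation sending $\calK'$ to $[k]$ preserves the joint distribution of $(\tau,\calK)$ (conjugation keeps $\tau$ uniform, and the image of a uniform random $k$-subset under a fixed permutation is again uniform) while putting the blue edges into the canonical form $\set{X_i,Y_i}_{i\in[k]}$. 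Tracing a bipartite cycle from some $X_{i_0}$ then shows that it closes precisely when $i_0,\tau(i_0),\ldots,\tau^{\ell-1}(i_0)\in\calK\cap[k]$ and $\tau^\ell(i_0)=i_0$; consequently $C_\ell$ equals $\s{N}_\ell(\tau,\calK)$, the number of $\ell$-cycles of $\tau$ fully contained in $\calK\cap[k]$. Renaming $\tau\to\sigma$ delivers the advertised formula. The main technical obstacle is the cycle-level Gaussian integral producing the factor $(1-\rho^{2\ell})^{-1}$: one must exploit the circulant block structure of $\Sigma_1+\Sigma_2-\Sigma_1\Sigma_2$ around a $2\ell$-cycle to see the characteristic $\rho^{2\ell}$ appear. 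This is precisely the step whose explicit evaluation becomes unwieldy once cycle structure is expected over random permutations, and it is what motivates the Hermite-polynomial route adopted in the main body of the paper.
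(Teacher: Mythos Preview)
Your strategy coincides with the paper's: replicate the planted parameters, apply Fubini, reduce to a single feature via the product structure, collapse one copy to $(\mathrm{Id},[k])$ by symmetry, and then factor the remaining Gaussian integral along a combinatorial decomposition tied to the permutation. Your bipartite red/blue multigraph is a clean repackaging of this computation; the paper instead first integrates out the $X_i$ with $i\in\calK\triangle\calK'$ (which is exactly your ``paths contribute $1$'' observation applied at the leaves) and then splits $Z_{\sigma,\calK}$ over cycles of $\sigma$, invoking a generalization of \cite[Lemma~10]{nazer2022detecting} for each cycle.

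There is, however, a genuine mismatch at the final identification. You equate $C_\ell$ with ``the number of $\ell$-cycles of $\tau$ fully contained in $\calK\cap[k]$'', but the proposition defines $\s{N}_\ell(\sigma,\calK)$ as the number of cycles of \emph{any} length whose intersection with $[k]\cap\calK$ has cardinality $\ell$. These disagree whenever a cycle $C$ satisfies $[k]\cap\calK\cap C\subsetneq C$: such $C$ is counted by the paper's $\s{N}_{|[k]\cap\calK\cap C|}$, yet in your graph it breaks into paths and contributes $1$, not $(1-\rho^{2|[k]\cap\calK\cap C|})^{-d}$. So your argument does not establish the formula as literally stated. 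In fact your version is the correct one: for the $3$-cycle $C=(1,2,3)$ with $[k]\cap\calK\cap C=\{1,2\}$ one integrates the dangling variable $Y_3$ first and gets $\bE_{\calH_0}[Z_{C,\calK}]=1$, not $(1-\rho^4)^{-d}$, so the generalization recorded as Lemma~\ref{lem:CycleComp} is misstated once $[k]\cap\calK\cap C\subsetneq C$. You should flag this discrepancy explicitly rather than silently redefine $\s{N}_\ell$.
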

In the fully correlated scenario where $k=n$, our generalized model reduces to the one studied in \cite{HuleihelElimelech,nazer2022detecting}. In this case, $\s{N}_\ell(\sigma,\calK)$ becomes $N_\ell(\sigma)$, which is the number of cycles of length $\ell$ of a random permutation. The analysis of \eqref{eq:nastycalc} is already involved and requires a delicate case by case  analysis for the different asymptotic regimes of $n,d$, and $\rho$, using a certain Poisson approximation of the random variables $(N_\ell(\sigma))_\ell$. Our method for calculating $\E_{\calH_0}[\calL(\s{X},\s{Y})^2]$, however, is able to recover the results in \cite{HuleihelElimelech}, but for all regimes at once.

In the case where $k<n$, it becomes unclear how to evaluate the expression in \eqref{eq:nastycalc}. Conditioning on the size of the intersection $\calK\cap [k]$, which we denote here by $i$, it is possible to express \eqref{eq:nastycalc} as a weighted average of elements of the form 
\begin{align}
    \E_{\sigma}\pp{\prod_{\ell=1}^k\frac{1}{(1-\rho^{2})^{d\s{N}_\ell(\sigma,[i]
    )}}},
\end{align}
where $\s{N}_\ell(\sigma,[i])$ is the size of the number of cycles of $\sigma$ intersecting the interval $[i]$ in exactly $\ell$ points. It seems likely that a Poisson approximation for the variables $(\s{N}_\ell(\sigma,[i]))_{\ell,i}$ exists, and may be used in order to estimate $\E_{\calH_0}[\calL(\s{X},\s{Y})^2]$. However, proving such a Poisson approximation theorem is highly non-trivial, and has its own intellectual merit. We therefore leave it for future research.

\begin{remark}
    As discussed above, in the fully correlated scenario where $k=n$ and $d\to\infty$, the lower bounds in Theorems~\ref{th:lowerStrong2} and \ref{th:lowerWeak2} recover the sharp threshold $d^{-1}$, proved in \cite{HuleihelElimelech}. When $d$ is fixed, on the other hand, our lower bounds are not tight in general. To wit, recall that it was proved in \cite{HuleihelElimelech} that strong detection is impossible if $\rho<\rho^\star(d)$, and $\rho^\star(d)$ is a certain function of $d$. As it turns out, for $d\geq 4$, we have $\rho^\star(d)>1/d$, and  so as our results in this paper do not improve upon \cite{HuleihelElimelech}.   
    This suggests that a more delicate analysis of \eqref{eq:precise} is required. We suspect that the loose step in our proof, is use of the bound $|\s{Par}{(m,\ell)}|d^m$ on the number of equivalence classes in \eqref{eq:precise}. As proved in Lemma~\ref{lem:ICQ}, a sharper bound is, 
    \begin{align}
         \sum_{p_1\in \s{Par}(m,\ell)}\prod_{q\in \N}\binom{q+d-1}{d-1}^{p_1(q)}.\label{eq:sumpar}
    \end{align}
    The (almost) matching lower bound from Lemma~\ref{lem:ICQ} hints that \eqref{eq:sumpar} might be a strictly better candidate for analysing \eqref{eq:precise}. Unfortunately, evaluating \eqref{eq:sumpar} seems quite complicated, and calls for a fine analysis of high-order statistics of random integer partitions. 
\end{remark}

\subsection{Upper bounds}\label{swubsec:proofUpperCount} 
\vspace{0.2cm} \noindent\textbf{Count test.} We prove now Theorem~\ref{thm:IndSum}, and mention again that Theorem~\ref{thm:upper} is just the special case where $d=1$ and $k=n$. Consider the test in \eqref{eqn:testcount_high}. We start by bounding the Type-I error probability. Markov's inequality implies that
\begin{align}
    \pr_{\calH_0}\p{\phi_{\s{count}}=1} &= \pr_{\calH_0}\p{\sum_{i,j=1}^n\calI(X_i,Y_j)\geq \frac{1}{2} k\calP_{\rho}}\\
    &\leq \frac{2n^2\calQ_{\rho}}{k\calP_{\rho}}.\label{eqn:type1count}
\end{align}
On the other hand, we bound the Type-II error probability as follows. Under $\calH_1$, since our proposed test is invariant to reordering of $\s{X}$ and $\s{Y}$, we may assume without loss of generality that the latent permutation is the identity one, i.e., $\sigma=\s{Id}$. Then, Chebyshev's inequality implies that
\begin{align}
     \pr_{\calH_1}\p{\phi_{\s{count}}=0} &= \pr_{\calH_1}\p{\sum_{i,j=1}^n\calI(X_i,Y_j)< \frac{1}{2} k\calP_{\rho}}\\
     &\leq \pr_{\calH_1}\p{\sum_{i=1}^k\calI(X_i,Y_i)< \frac{1}{2} k\calP_{\rho}}\\
&\leq\frac{4\cdot\s{Var}_{\rho}\p{\sum_{i=1}^k\calI(X_i,Y_i)}}{k^2\calP^2_{d}},
\end{align}
where $\s{Var}_{\rho}$ denotes the variance w.r.t. $P_{XY}$. Noticing that 
\begin{align}
    \s{Var}_{\rho}\p{\sum_{i=1}^k\calI(X_i,Y_i)} &= \sum_{i=1}^k\s{Var}_{\rho}\p{\calI(X_i,Y_i)}\\
    & = k\calP_{\rho}(1-\calP_{\rho}),
\end{align}
we finally obtain,
\begin{align}
     \pr_{\calH_1}\p{\phi_{\s{count}}=0} &\leq\frac{4(1-\calP_{\rho})}{k\calP_{\rho}}\leq\frac{4}{k\calP_{\rho}}.\label{eqn:type2count}
\end{align}
Thus, using \eqref{eqn:chen} we obtain
\begin{align}
    \pr_{\calH_0}\p{\phi_{\s{count}}=1}&\leq \frac{2n^2}{k}\cdot\frac{\exp\pp{-d\cdot E_Q(\tau_{\s{count}})}}{1-\exp\pp{-d\cdot E_P(\tau_{\s{count}})}},
\end{align}
and
\begin{align}
    \pr_{\calH_1}\p{\phi_{\s{count}}=0}&\leq \frac{4}{k\cdot\p{1-\exp\pp{-d\cdot E_P(\tau_{\s{count}})}}}.
\end{align}
For strong detection, we see that the Type-II error probability converges to zero as $n\to\infty$, if $1-e^{- d\cdot E_P(\tau_{\s{count}})}=\omega(k^{-1})$, which is equivalent to $E_P(\tau_{\s{count}}) = \omega(k^{-1}d^{-1})$. Under this condition we see that the Type-I error probability converges to zero if $\frac{n^2}{k}e^{-d\cdot E_Q(\tau_{\s{count}})} = o(1)$, which is equivalent to $E_Q(\tau_{\s{count}}) = \omega(d^{-1}\log \frac{n^2}{k})$.

Finally, we show that for \eqref{eqn:CountcondEx} to hold it is suffice that $\rho^2 = 1-o((n^2/k)^{-4/d})$. We start by calculating $\psi_P$ and $\psi_Q$. The following result is proved in Appendix~\ref{app:logmom}.
\begin{lemma}\label{lem:simplecalc}
For $-\frac{1-\rho^2}{\rho^2}\leq\lambda$ and $|1-\lambda|<\frac{1}{\rho}$, 
    \begin{align}
  \psi_Q(\lambda) %&= -\frac{\lambda-1}{2}\log(1-\rho^2)-\frac{1}{2}\log[1-(1-2\lambda^2)\rho^2]\\
  & = -\frac{\lambda-1}{2}\log(1-\rho^2)-\frac{1}{2}\log[1-(1-\lambda)^2\rho^2],
\end{align}
and for $|\lambda|<1/\rho^2$,
\begin{align}
    \psi_P(\lambda) = -\frac{\lambda}{2}\log(1-\rho^2)-\frac{1}{2}\log(1-\lambda^2\rho^2).
\end{align}
\end{lemma}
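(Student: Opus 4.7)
The strategy is to exploit the identity $\exp(\calL_{\s{I}}(x,y))=f_P(x,y)/f_Q(x,y)$, which reduces both moment generating functions to explicit two–dimensional Gaussian integrals. Indeed,
\[
e^{\psi_Q(\lambda)}=\bE_Q\!\left[(f_P/f_Q)^\lambda\right]=\int_{\R^2}f_P^\lambda\,f_Q^{1-\lambda}\,dx\,dy,
\]
while a one-line change of measure gives $\bE_P[(f_P/f_Q)^\lambda]=\bE_Q[(f_P/f_Q)^{\lambda+1}]$, i.e.\ $\psi_P(\lambda)=\psi_Q(\lambda+1)$. It therefore suffices to compute $\psi_Q$ in closed form; the formula for $\psi_P$ will drop out by the shift $\lambda\mapsto\lambda+1$, and the stated range $|\lambda|<1/|\rho|$ will correspond to $|1-(\lambda+1)|<1/|\rho|$ in the $\psi_Q$ range.

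To evaluate the integral, I would write the integrand as a centered Gaussian density times a deterministic prefactor. Concretely, $f_P^\lambda f_Q^{1-\lambda}=(2\pi)^{-1}(1-\rho^2)^{-\lambda/2}\exp(-\tfrac12 \zvec^\top M\zvec)$, where $\zvec=(x,y)^\top$ and $M=\lambda\,\Sigma_\rho^{-1}+(1-\lambda)\,\Ib_2$. Plugging in $\Sigma_\rho^{-1}=(1-\rho^2)^{-1}\begin{pmatrix}1&-\rho\\-\rho&1\end{pmatrix}$ yields
\[
M=\frac{1}{1-\rho^2}\begin{pmatrix} 1-(1-\lambda)\rho^2 & -\lambda\rho \\ -\lambda\rho & 1-(1-\lambda)\rho^2 \end{pmatrix}.
\]
The two hypotheses on $\lambda$ in the lemma are exactly what make $M$ positive definite: diagonal positivity is equivalent to $\lambda\geq -(1-\rho^2)/\rho^2$, while positivity of $\det M$ will turn out to be equivalent to $|1-\lambda|<1/|\rho|$. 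Under these conditions the Gaussian integral equals $2\pi(\det M)^{-1/2}$.

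The only algebraic step requiring care is simplifying
\[
\det M=\frac{[1-(1-\lambda)\rho^2]^2-\lambda^2\rho^2}{(1-\rho^2)^2}.
\]
I would factor the numerator as a difference of squares, $[1-(1-\lambda)\rho^2-\lambda\rho][1-(1-\lambda)\rho^2+\lambda\rho]$, and extract a common factor of $(1-\rho^2)$ from each bracket; this collapses $\det M$ neatly to $[1-(1-\lambda)^2\rho^2]/(1-\rho^2)$. Putting the pieces together gives $\int f_P^\lambda f_Q^{1-\lambda}\,dx\,dy=(1-\rho^2)^{(1-\lambda)/2}[1-(1-\lambda)^2\rho^2]^{-1/2}$, and taking logarithms recovers the stated formula for $\psi_Q$. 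The expression for $\psi_P$ then follows by the shift identity. No substantial obstacle is anticipated: this is essentially a one-page Gaussian integration, and the convergence region emerges transparently from the positive definiteness of $M$.
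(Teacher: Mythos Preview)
Your proof is correct and takes a genuinely different route from the paper's. The paper evaluates $\bE_Q[\exp(\lambda\calL_{\s{I}})]$ by iterated conditioning: it conditions on $X$, applies the moment generating function formula for the square of a Gaussian to integrate out $Y$, and then applies the same formula a second time to integrate out $X$; it then repeats an analogous two-step conditioning computation from scratch for $\psi_P$. You instead recognize $e^{\psi_Q(\lambda)}=\int f_P^\lambda f_Q^{1-\lambda}$ as a single two-dimensional Gaussian integral with precision matrix $M=\lambda\Sigma_\rho^{-1}+(1-\lambda)\mathbf{I}_2$, compute $\det M$ in one line via the difference-of-squares factorization, and observe the change-of-measure identity $\psi_P(\lambda)=\psi_Q(\lambda+1)$, which eliminates the second computation entirely. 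Your approach is shorter and more structural: the convergence region drops out transparently as positive definiteness of $M$, and the shift identity makes the relationship between the two functions manifest. The paper's approach is more hands-on and avoids any matrix algebra, but at the cost of doing essentially the same work twice.

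One small remark: the range you derive for $\psi_P$ is $|\lambda|<1/|\rho|$, which is exactly the condition $1-\lambda^2\rho^2>0$ needed for the formula to make sense; the lemma as stated writes $|\lambda|<1/\rho^2$, which for $|\rho|<1$ is a strictly weaker hypothesis. This is a discrepancy in the statement itself rather than a flaw in your argument.
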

\begin{comment}
\begin{align}
    \psi_Q(\lambda) &= \log\bE_Q\pp{\frac{1}{(1-\rho^2)^{\lambda/2}}\exp\p{\lambda\frac{-(x^2+y^2)\rho^2+2 x y\rho}{2(1-\rho^2)}}}\\
    & = -\frac{\lambda}{2}\log(1-\rho^2)+\log\bE_Q\pp{\exp\p{\lambda\frac{-(x^2+y^2)\rho^2+2 x y\rho}{2(1-\rho^2)}}}.
\end{align}
Using the fact that the moment generating function of Gaussian random variable $\s{W}\sim\calN(\mu,\sigma^2)$ squared is given by,
\begin{align}
    \bE [\exp\p{t\s{W}^2}] = \frac{1}{\sqrt{1-2t\sigma^2}} \exp\left(\frac{\mu^2 t}{1-2t\sigma^2}\right),
\end{align}
for $\s{Real}(t\sigma^2)<1/2$, it can be shown that
\begin{align}
    \bE_Q\pp{\exp\p{\lambda\frac{-(x^2+y^2)\rho^2+2 x y\rho}{2(1-\rho^2)}}}  %=\frac{\sqrt{1-\rho^2}}{\sqrt{1-(1-2\lambda)^2\rho^2}}
    = \frac{\sqrt{1-\rho^2}}{\sqrt{1-(1-\lambda)^2\rho^2}},
\end{align}
for %$-\frac{1-\rho^2}{\rho^2}\leq\lambda$ and $\lambda[1-(1-(1-\lambda)\rho^2)^{-1}]<1$ \textcolor{red}{
$-\frac{1-\rho^2}{\rho^2}\leq\lambda$ and $|1-\lambda|<\frac{1}{\rho}$. Thus,
\end{comment}
Using Lemma~\ref{lem:simplecalc} with $\tau_{\s{count}}=0$ we have
\begin{align}
    E_Q(0) &= \sup_{\lambda\in\mathbb{R}}-\psi_Q(\lambda)\geq-\psi_Q(1/2) \\
    &= -\frac{1}{4}\log(1-\rho^2)+\frac{1}{2}\log(1-\rho^2/4).\label{eqn:EQExamp}
\end{align}
Similarly,
\begin{align}
    E_P(0) &= \sup_{\lambda\in\mathbb{R}}-\psi_P(\lambda)\geq-\psi_P(-1/2) \\
    &= -\frac{1}{4}\log(1-\rho^2)+\frac{1}{2}\log(1-\rho^2/4).
\end{align}
Thus, \eqref{eqn:CountcondEx} hold if $\rho^2 = 1-o((n^2/k)^{-4/d})$.

\vspace{0.2cm} \noindent\textbf{Sum test.} For simplicity of notation define $\s{T}(\s{X},\s{Y})\triangleq\sum_{i,j=1}^nX_i^TY_j$. Let us analyze first the Type-II error probability. We have
\begin{align}
    \pr_{\calH_1}\pp{\phi(\s{X},\s{Y})=0}& = \pr_{\calH_1}\pp{\s{T}(\s{X},\s{Y})\leq\tau}\\
    &\leq\frac{\s{Var}_{\calH_1}\p{\s{T}(\s{X},\s{Y})}}{\p{\bE_{\calH_1}[\s{T}(\s{X},\s{Y})]-\tau}^2},
\end{align}
assuming that $\tau\leq \bE_{\calH_1}[\s{T}(\s{X},\s{Y})] = kd\rho$. Let us find the variance of $\s{T}(\s{X},\s{Y})$ under $\calH_1$. Since $\bE_{\calH_1}[\s{T}(\s{X},\s{Y})] = kd\rho$, we are left with the calculation of the second moment of $\s{T}(\s{X},\s{Y})$. 

Without loss of generality, assume that $\sigma^\star=\s{Id}$ and $\calK=[k]$ (as the bound is invariant under averaging over $\sigma$ and $\calK$).  Furthermore, note that for $i\in\mathcal{K}$ we have $ (X_i,Y_i)$ are equally distributed as $(\rho Y_i+\sqrt{1-\rho^2}Z_i,Y_i)$, where $Z_i\sim \calN(0_d,\mathbf{I}_d)$, and 
 $\ppp{Z_i}_i$  are i.i.d, independent of $\ppp{Y_i}_i$. Thus,
\begin{align}
    \s{T}(\s{X},\s{Y}) &= \sum_{i=1}^{k}\sum_{j=1}^n (\rho Y_i+\sqrt{1-\rho^2}Z_i)^TY_j+\sum_{i=k+1}^{n}\sum_{j=1}^n Z_i^T\cdot Y_j\\
    & = \rho\cdot\sum_{i=1}^{k}\sum_{j=1}^nY_i^TY_j+\sqrt{1-\rho^2}\sum_{i=1}^{k}\sum_{j=1}^nZ_i^TY_j+\sum_{i=k+1}^{n}\sum_{j=1}^nZ_i^TY_j\\
    & = \rho\cdot\sum_{i,j=1}^{k}Y_i^TY_j+\rho\cdot\sum_{i=1}^{k}\sum_{j=k+1}^nY_i^TY_j+\sqrt{1-\rho^2}\sum_{i=1}^{k}\sum_{j=1}^nZ_i^TY_j+\sum_{i=k+1}^{n}\sum_{j=1}^nZ_i^TY_j\\    
    & = \rho\norm{\bar{Y}_k}_2^2+\rho\cdot\bar{Y}_k^T\bar{Y}^{(k+1)}+\sqrt{1-\rho^2}\bar{Z}_k^T\bar{Y}+(\bar{Z}^{k+1})^T\bar{Y},
\end{align}
where we have defined $\bar{Y}\triangleq\sum_{i=1}^n Y_i$, $
\bar{Y}_k\triangleq\sum_{i=1}^kY_i$, $\bar{Y}^{(k+1)}\triangleq\sum_{i=k+1}^nY_i$, $
\bar{Z}_k = \sum_{i=1}^k Z_i$, and $
\bar{Z}^{k+1} = \sum_{i=k+1}^n Z_i$. Note that 
$\bar{Y}\sim $ , $
\calN(0_d,n\cdot\mathbf{I}_d)$$ , $
$\bar{Y_k}\sim$ , $
\calN(0_d,k\cdot\mathbf{I}_d)$$ , $
$\bar{Z}_k \sim $ , $
\calN(0_d,k\cdot\mathbf{I}_d)$, and finally $ 
$$\bar{Z}^{k+1} \sim$
$\calN(0_d,(n-k)\cdot\mathbf{I}_d)$. We also note that 
\begin{align}
    (\bar{Z}_k,\bar{Z}_2)\indep (\bar{Y},\bar{Y}_k, \bar{Y}^{k+1}), \text{ and } \bar{Y}_k\indep \bar{Y}^{k+1}.\label{eq:indep}
\end{align} 
Thus, since all vectors have zero mean, the expected value of the crossing terms in $\E[\s{T}(\s{X},\s{Y})^2]$ are nullified, and as so,
\begin{align}
    \E_{\calH_1}\pp{\s{T}(\s{X},\s{Y})^2} &= \rho^2\bE\pp{\norm{\bar{Y}_k}_2^4}+\rho^2\bE\pp{\p{\bar{Y}_k^T\bar{Y}^{(k+1)}}^2}+(1-\rho^2)\bE\pp{\p{\bar{Z}_k^T\bar{Y}}^2}\nonumber\\
    &\quad+\bE\pp{\p{(\bar{Z}^{k+1})^T\bar{Y}}^2}.
\end{align}
By the independence structure in \eqref{eq:indep}, we get,
\begin{align}
   \bE\pp{\p{\bar{Z}_k^T\bar{Y}}^2} &=\bE[\bar{Y}^T\bar{Z}_k\bar{Z}_k^T\bar{Y}]\\
   &=\E\pp{\left.\E\pp{\bar{Y}^T\bar{Z}_k\bar{Z}_k^T\bar{Y} \right| \bar{Y}}}\\
   &= \E\pp{\bar{Y}^T\cdot \E\pp{\left.\bar{Z}_k\bar{Z}_k^T \right| \bar{Y}}\cdot \bar{Y}}\\
   &= \E\pp{\bar{Y}^T\E\pp{\bar{Z}_k\bar{Z}_k^T }\bar{Y}}\\
   &= k\bE\pp{\norm{\bar{Y}}_2^2} = knd.
\end{align}
A similar calculation shows that $\bE\pp{((\bar{Z}^{k+1})^T\bar{Y})^2}= (n-k)\bE\pp{\norm{\bar{Y}}_2^2} = (n-k)nd$. Next, we have,
\begin{align}
    \bE\pp{\p{\bar{Y}_k^T\bar{Y}^{(k+1)}}^2} &= \bE\pp{\bar{Y}_k^T\bar{Y}^{(k+1)}(\bar{Y}^{(k+1)})^T\bar{Y}_k} \nonumber\\
    &= (n-k)\bE\pp{\norm{\bar{Y}_k}_2^2 }.
\end{align}
Recall that $\bar{Y_k}\sim \calN(0_d,k\cdot\mathbf{I}_d)$. Denoting the elements of $\bar{Y}$ by $\bar{Y} =(\bar{Y}_1,\bar{Y}_2,\ldots,\bar{Y}_d)$, we have that for each $1\leq i\leq d$, $\bar{Y}_i\sim \calN(0,k)$. Thus,
\begin{align}
    \bE\norm{\bar{Y_k}}_2^2  = \bE\pp{\bar{Y}_1^2+\bar{Y}_2^2+\ldots+\bar{Y}_d^2} = kd,
\end{align}
implying that
\begin{align}
    \bE\pp{\p{\bar{Y}_k^T\bar{Y}^{(k+1)}}^2} = (n-k)kd.
\end{align}
Finally, 
\begin{align}
    \bE\pp{\norm{\bar{Y_k}}_2^4} &= \bE\pp{\p{\sum_{i=1}^d\bar{Y}_i^2}^2}\\
    & = \bE\pp{\sum_{i=1}^d\bar{Y}_i^4+\sum_{i\neq j}\bar{Y}_i^2\bar{Y}_j^2}\\
    & = 3dk^2+d(d-1)k^2,
\end{align}
where we have used the fact that for $W\sim \calN(0,\nu^2)$, it holds $\bE W^4 = 3\sigma^4$. Combining the above we finally obtain
\begin{align}
    \s{Var}_{\calH_1}\p{\s{T}(\s{X},\s{Y})}&= %\bE_{\calH_1}\pp{\s{T}(\s{X},\s{Y})}^2-\p{\bE_{\calH_1}\pp{\s{T}(\s{X},\s{Y})}}^2\\
    %& = \rho^2\pp{3dk^2+d(d-1)k^2+dk(n-k)}+(1-\rho^2)knd\nonumber\\
    %&\ \ \ +(n-k)nd-k^2d^2\rho^2\\
    \rho^2\p{dk^2+dkn}+(1-\rho^2)knd+(n-k)nd.
\end{align}
In the same way, we get that $\bE_{\calH_0}[\s{T}(\s{X},\s{Y})]=0$ and $\s{Var}_{\calH_0}\p{\s{T}(\s{X},\s{Y})} = (n-k)nd$. Recall that $\tau = \delta\cdot\rho kd$, for some $\delta<1$. Then,
\begin{align}
    \pr_{\calH_1}\pp{\phi(\s{X},\s{Y})=0} &\leq\frac{\s{Var}_{\calH_1}\p{\s{T}(\s{X},\s{Y})}}{\p{\bE_{\calH_1}[\s{T}(\s{X},\s{Y})]-\tau}^2},\\
    & = \frac{\rho^2\p{dk^2+dkn}+(1-\rho^2)knd+(n-k)nd}{(1-\delta)^2\rho^2k^2d^2}\\
    & = \frac{1}{(1-\delta)^2d}+\frac{n}{(1-\delta)^2dk}+\frac{(1-\rho^2)n}{(1-\delta)^2\rho^2kd}+\frac{(n-k)n}{(1-\delta)^2\rho^2k^2d}.
\end{align}
Thus, we see that if $d\to\infty$, $\rho^2d\frac{k^2}{n^2}\to\infty$, $\rho^2d\frac{k}{n}\to\infty$,  and $\frac{dk}{n}\to\infty$, then we have $\pr_{\calH_1}\pp{\phi(\s{X},\s{Y})=0}\to0$. Note however that $\rho^2d\frac{k^2}{n^2}\to\infty$ is more stringent compared to $\frac{dk}{n}\to\infty$ and $\rho^2d\frac{k}{n}\to\infty$; therefore, $\pr_{\calH_1}\pp{\phi(\s{X},\s{Y})=0}\to0$ if $d\to\infty$ and $\rho^2d\frac{k^2}{n^2}\to\infty$. As for the Type-I error probability, we get that
\begin{align}
    \pr_{\calH_0}\pp{\phi(\s{X},\s{Y})=1} &\leq\frac{\s{Var}_{\calH_0}\p{\s{T}(\s{X},\s{Y})}}{\p{\tau-\bE_{\calH_0}[\s{T}(\s{X},\s{Y})]}^2} \\
    &=\frac{(n-k)nd}{\delta^2\rho^2k^2d^2}\\
    &\leq \frac{n^2}{\delta^2\rho^2k^2d},
\end{align}
which again vanishes provided that $\rho^2d\frac{k^2}{n^2}\to\infty$. Therefore, we can conclude that the sum of the Type-I and Type-II error probabilities vanishes, as $d\to\infty$, if $\rho^2d\frac{k^2}{n^2}\to\infty$, as claimed.

\vspace{0.2cm} \noindent\textbf{Comparison test.} We now prove Theorem~\ref{thm:upper_comp_high}. We analyze the case where $\rho\in(0,1]$, with the understanding that the case where $\rho\in[-1,0)$ is analyzed in the same way. Let $G_1\triangleq\sum_{i=1}^n\sum_{j=1}^dX_{ij}$ and $G_2\triangleq\sum_{i=1}^n\sum_{j=1}^dY_{ij}$. Then, under $\calH_0$, we clearly have $G_1-G_2\sim\calN(0,2nd)$, while under $\calH_1$, we have $G_1-G_2\sim\calN(0,2nd-2kd\rho))$. Therefore, 
    \begin{align}
        1-\s{R}(\phi_{\s{comp}}) &= \pr_{\calH_0}(|G_1-G_2|\geq\theta)-\pr_{\calH_1}(|G_1-G_2|\geq\theta)\\
        & = \pr(|\calN(0,2nd)|\geq\theta)-\pr(|\calN(0,2nd-2kd\rho)|\geq\theta)\\
        & = d_{\s{TV}}\p{\calN(0,1),\calN\p{0,1-\frac{k}{n}\rho}}\\
        &= \Omega(1),
    \end{align}
    where the third equality holds by the definition of $\theta$ in \eqref{def:thetaTV}, and the last equality is because $\rho = \Omega(1)$ and $k = \Theta(n)$. 

\section{Conclusion and Outlook}

In this paper, we analyzed the detection problem of deciding whether two given databases are correlated or not, under a Gaussian distributive model. With a particular focus on the canonical one-dimensional case, we derived thresholds at which optimal testing is information-theoretically impossible and possible. Our lower bounding technique is based on an orthogonal polynomial expansion of the likelihood function, which revealed interesting connections to integer partition functions. While the study of the database alignment is relatively new, there are many open questions going forward. We thereby mention several directions of particular interest for future research:
\begin{enumerate}
    \item Our polynomial expansion method proved very useful in our setting. Thus, it would be interesting to develop a framework which implements similar ideas for general families of inference problems with a hidden combinatorial structure, e.g., in planted subgraph problems.  
    \item The information-theoretic thresholds for detection impossibility and possibility in the Gaussian database alignment problem are fairly understood for $d\to\infty$. Nevertheless, the case where $d\geq 2$ and fixed, is not solved completely. While the best currently known strong detection lower bound is given by $\min(1/d,\rho^\star)<1$, state-of-the-art strong detection algorithms work only if $\rho^2$ approaches $1$ sufficiently fast. Closing this evident gap is an important step towards complete understanding of the database alignment problem.
 
\end{enumerate}

%%%%%%
%% Appendix:
%% If needed a single appendix is created by
%%
%\appendix
%%
%% If several appendices are needed, then the command
%%
% \appendices
%%
%% in combination with further \section commands can be used.
%%%%%%

\section*{Acknowledgment}

The work of D. Elimelech was supported by the ISRAEL SCIENCE FOUNDATION (grant No. 985/23). The work of W. Huleihel was supported by the ISRAEL SCIENCE FOUNDATION (grant No. 1734/21).

\bibliographystyle{ieeetr}
\bibliography{bibfile} 

\appendix

\subsection{Proof of Lemma~\ref{lem:simplecalc}}\label{app:logmom}
We compute $\psi_Q(\lambda)$ using its definition,
\begin{align}
    \psi_Q(\lambda)
    &= \log\bE_Q\pp{\frac{1}{(1-\rho^2)^{\lambda/2}}\exp\p{\lambda\frac{-(X^2+Y^2)\rho^2+2 X Y\rho}{2(1-\rho^2)}}}\\
    & = -\frac{\lambda}{2}\log(1-\rho^2)+\log\bE_Q\pp{\exp\p{\lambda\frac{-(X^2+Y^2)\rho^2+2 X Y\rho}{2(1-\rho^2)}}}.\label{eq:psicalc}
\end{align}
Using the well-known fact that the moment generating function of Gaussian random variable $W\sim\calN(\mu,\sigma^2)$ squared is given by,
\begin{align}\label{eq:GausSquare}
    \bE [\exp\p{tW^2}] = \frac{1}{\sqrt{1-2t\sigma^2}} \exp\left(\frac{\mu^2 t}{1-2t\sigma^2}\right),
\end{align}
when $\s{Real}(t\sigma^2)<1/2$, we have that for $-\frac{1-\rho^2}{\rho^2}\leq\lambda$ and $|1-\lambda|<\frac{1}{\rho}$,
\begin{align}
    \bE_Q\pp{\exp\p{\lambda\frac{-(X^2+Y^2)\rho^2+2 X Y\rho}{2(1-\rho^2)}}} 
    &=\bE_Q\pp{\E\pp{\exp\p{\lambda\frac{-(X^2+Y^2)\rho^2+2 X Y\rho}{2(1-\rho^2)}}~\bigg|~X}}
    \\&=\bE_Q\pp{e^{-\frac{\lambda\rho^2}{2(1-\rho^2)}X^2}\E\pp{e^{\p{\lambda\frac{-Y^2\rho^2+2 X Y\rho}{2(1-\rho^2)}}}~\bigg|~X}}
    \\&=\bE_Q\pp{e^{\frac{-\lambda(\rho^2-1)}{2(1-\rho^2)}X^2}\E\pp{e^{\p{-\lambda\frac{(\rho Y -X)^2}{2(1-\rho^2)}}}~\bigg|~X}}\\
    &\overset{(a)}{=}\bE_Q\pp{e^{\frac{-\lambda(\rho^2-1)}{2(1-\rho^2)}X^2}\sqrt{\frac{1+\lambda\rho^2}{1-\rho^2+\lambda\rho^2}} \cdot e^{\frac{-\lambda X^2}{2(1-\rho^2+\lambda\rho^2)}}}\\
    &=\sqrt{\frac{1-\rho^2}{1-\rho^2+\lambda\rho^2}}\bE_Q\pp{ e^{\frac{-\lambda(1-\lambda)\rho^2 }{2(1-\rho^2+\lambda\rho^2)}\cdot X^2}}\\
    &\overset{(b)}{=}\sqrt{\frac{1-\rho^2}{1-\rho^2+\lambda\rho^2}} \cdot \sqrt{\frac{1-\rho^2+\lambda\rho^2}{1-(1-\lambda)^2\rho^2}} \\
    &=\frac{\sqrt{1-\rho^2}}{\sqrt{1-(1-\lambda)^2\rho^2}}, \label{eq:uglyexponent}
\end{align}
%$-\frac{1-\rho^2}{\rho^2}\leq\lambda$ and $\lambda[1-(1-(1-\lambda)\rho^2)^{-1}]<1$ \textcolor{red}{
where $(a)$ follows from \eqref{eq:GausSquare} with $t=-\lambda/(2(1-\rho^2))$, $\mu=-X$ and $\sigma^2=\rho^2$ as given $X$, $\rho Y-X$ is distributed as $\calN(X,\rho^2)$.  $(b)$ follows from \eqref{eq:GausSquare} as well, with $\mu=0$, $\sigma^2=1$ and 
$t=\frac{-\lambda(1-\lambda)\rho^2 }{2(1-\rho^2+\lambda\rho^2)}$. Plugging in \eqref{eq:uglyexponent} to \eqref{eq:psicalc}, we conclude the first part of the proof.

The calculation of $\psi_P(\lambda)$ is very similar. As in \eqref{eq:uglyexponent} we have 
\begin{align}
    &\psi_P(\lambda) = -\frac{\lambda}{2}\log(1-\rho^2)+\log\bE_P\pp{\exp\p{\lambda\frac{-(X^2+Y^2)\rho^2+2 X Y\rho}{2(1-\rho^2)}}}.\label{eq:psicalc2}
\end{align}
We repeat the the calculation of the moment generating function using \eqref{eq:GausSquare} and obtain that for all $\abs{\lambda}\leq 1/\rho^2$ we have 
\begin{align}
    \bE_P\pp{\exp\p{\lambda\frac{-(X^2+Y^2)\rho^2+2 X Y\rho}{2(1-\rho^2)}}} 
    &=\bE_P\pp{\E\pp{\exp\p{\lambda\frac{-(X^2+Y^2)\rho^2+2 X Y\rho}{2(1-\rho^2)}}~\bigg|~X}}
    \\&=\bE_P\pp{e^{-\frac{\lambda\rho^2}{2(1-\rho^2)}X^2}\E\pp{e^{\p{\lambda\frac{-Y^2\rho^2+2 X Y\rho}{2(1-\rho^2)}}}~\bigg|~X}}
    \\&=\bE_P\pp{e^{\frac{-\lambda(\rho^2-1)}{2(1-\rho^2)}X^2}\E\pp{e^{\p{-\lambda\frac{(\rho Y -X)^2}{2(1-\rho^2)}}}~\bigg|~X}}\\
    &\overset{(a)}{=}\bE_P\pp{e^{\frac{-\lambda(\rho^2-1)}{2(1-\rho^2)}X^2}\sqrt{\frac{1}{1+\lambda\rho^2}} \cdot e^{\frac{-\lambda(1-\rho^2) X^2}{2(1+\lambda\rho^2)}}}\\
    &=\sqrt{\frac{1}{1+\lambda\rho^2}}\bE_P\pp{ e^{\frac{\lambda\rho^2(1+\lambda) }{2(1+\lambda \rho^2)}\cdot X^2}}\\
  &=\sqrt{\frac{1}{1+\lambda\rho^2}}\cdot \sqrt{\frac{1+\lambda \rho^2}{1-\rho^2}}\\
  &=\frac{1}{\sqrt{1-\rho^2}},\label{eq:uglyexponent2}
\end{align}
where $(a)$ follows from \eqref{eq:GausSquare} with $t=-\lambda/(2(1-\rho^2))$, $\mu=X(\rho^2-1)$ and $\sigma^2=\rho^2(1-\rho^2)$ as given $X$, under the measure $P$, $\rho Y-X$,  is distributed as $\calN(X(1-\rho^2),\rho^2(1-\rho^2))$. $(b)$ follows from \eqref{eq:GausSquare} as well, with $\mu=0$, $\sigma^2=1$ and 
$t=\frac{\lambda\rho^2(1+\lambda) }{2(1+\lambda \rho^2)}$. 
Plugging in \eqref{eq:uglyexponent2} to \eqref{eq:psicalc2}, we conclude.

\subsection{Proof of Proposition~\ref{prop:nastybound}}\label{app:cyclesPer}
    We follow the steps of the proof of \cite[Theorem 2]{nazer2022detecting}. Let $\calK\sim\s{Unif}\binom{[n]}{k}$, and $\sigma$ be chosen uniformly over $\S_n$, where $\sigma$ and $\calK$ are mutually independent. By the model definition, it is easy to see that 
\begin{align}
    \calL(\s{X},\s{Y}) & = \frac{\pr_{\calH_1}(\s{X},\s{Y})}{\pr_{\calH_0}(\s{X},\s{Y})}\\
    & = \bE_{\sigma}\bE_{\calK}\pp{\prod_{i\in\calK}\frac{\pr_{\calH_1}(X_i,Y_{\sigma(i)})}{\pr_{\calH_0}(X_i,Y_{\sigma(i)})}}.
\end{align}
Let $\calK'$ and $\sigma'$ be independent copies of $\calK$ and $\sigma$, respectively. Then, using Fubini's theorem we have,
\begin{align}
    \calL(\s{X},\s{Y})^2&=\bE_{\substack{\sigma\indep\sigma'\\\calK\indep\calK'}}\pp{\prod_{i\in\calK}\frac{\pr_{\calH_1}(X_i,Y_{\sigma(i)})}{\pr_{\calH_0}(X_i,Y_{\sigma(i)})}\prod_{i\in\calK'}\frac{\pr_{\calH_1}(X_i,Y_{\sigma'_i})}{\pr_{\calH_0}(X_i,Y_{\sigma'_i})}}\\
    & = \bE_{\substack{\sigma\indep\sigma'\\\calK\indep\calK'}}\bigg[\prod_{i\in\calK\cap\calK'}\frac{\pr_{\calH_1}(X_i,Y_{\sigma(i)})}{\pr_{\calH_0}(X_i,Y_{\sigma(i)})}\frac{\pr_{\calH_1}(X_i,Y_{\sigma'_i})}{\pr_{\calH_0}(X_i,Y_{\sigma'_i})}\nonumber\\
    &\hspace{1.8cm} \cdot\prod_{i\in\calK\setminus\calK'}\frac{\pr_{\calH_1}(X_i,Y_{\sigma(i)})}{\pr_{\calH_0}(X_i,Y_{\sigma(i)})}\prod_{i\in\calK'\setminus\calK}\frac{\pr_{\calH_1}(X_i,Y_{\sigma'_i})}{\pr_{\calH_0}(X_i,Y_{\sigma'_i})} \bigg]\label{eqn:secondmom}\\
    & \triangleq \bE_{\sigma\indep\sigma'}\bE_{\calK\indep\calK'}\pp{g(\calK,\calK',\sigma,\sigma')}.
\end{align}
Thus, the second moment of the likelihood is given by,
\begin{align}
    \bE_{\calH_0}[\calL(\s{X},\s{Y})^2] = \bE_{\sigma\indep\sigma'}\bE_{\calK\indep\calK'}\bE_{\calH_0}\pp{g(\calK,\calK',\sigma,\sigma')}.
\end{align}
Looking at \eqref{eqn:secondmom} we note that the expectation of the last two products w.r.t. $\calH_0$ is unity. Specifically, for simplicity of notation, define $g_{\calA}(\sigma)\triangleq\prod_{i\in\calA}\frac{\pr_{\calH_1}(X_i,Y_{\sigma(i)})}{\pr_{\calH_0}(X_i,Y_{\sigma(i)})}$. Then, we have
\begin{align}
  \bE_{\calH_0}\pp{g(\calK,\calK',\sigma,\sigma')}
  &= \bE_{\calH_0}\bigg[g_{\calK\cap\calK'}(\sigma)g_{\calK\cap\calK'}(\sigma')g_{\calK\setminus\calK'}(\sigma)g_{\calK'\setminus\calK}(\sigma')\bigg]\nonumber\\
    &=\bE_{\calH_0}\bigg[\bE_{\calH_0}[g_{\calK\cap\calK'}(\sigma)g_{\calK\cap\calK'}(\sigma')g_{\calK\setminus\calK'}(\sigma)g_{\calK'\setminus\calK}(\sigma') \nonumber\\
    &   \hspace{2cm}~\big|~ \s{X}_{\calK\cap\calK'},\s{Y}_{\sigma(\calK\cap\calK')},\s{Y}_{\sigma'(\calK\cap\calK')}]\bigg] \\
    &=\bE_{\calH_0}\bigg[ g_{\calK\cap\calK'}(\sigma)g_{\calK\cap\calK'}(\sigma')\bE_{\calH_0}[g_{\calK\setminus\calK'}(\sigma)g_{\calK'\setminus\calK}(\sigma') \nonumber\\
    &  \hspace{2cm}~\big|~ \s{X}_{\calK\cap\calK'},\s{Y}_{\sigma(\calK\cap\calK')},\s{Y}_{\sigma'(\calK\cap\calK')}]\bigg]\\
        &\overset{(a)}{=}\bE_{\calH_0}\bigg[ g_{\calK\cap\calK'}(\sigma)g_{\calK\cap\calK'}(\sigma')\nonumber \\
    &   \hspace{2cm}\cdot\bE_{\calH_0}[g_{\calK\setminus\calK'}(\sigma)g_{\calK'\setminus\calK}(\sigma')~\big|~ \s{Y}_{\sigma(\calK\cap\calK')},\s{Y}_{\sigma'(\calK\cap\calK')}]\bigg] \\
        &\overset{(b)}{=}\bE_{\calH_0}\bigg[ g_{\calK\cap\calK'}(\sigma)g_{\calK\cap\calK'}(\sigma')\cdot\bE_{\calH_0}\pp{\left.g_{\calK\setminus\calK'}(\sigma)  \right| \s{Y}_{\sigma(\calK\cap\calK')},\s{Y}_{\sigma'(\calK\cap\calK')}}\nonumber \\
        &\hspace{1.5cm} \cdot\bE_{\calH_0}\pp{\left.g_{\calK'\setminus\calK}(\sigma')\right| \s{Y}_{\sigma(\calK\cap\calK')},\s{Y}_{\sigma'(\calK\cap\calK')}}\bigg]\\
    &\overset{(c)}{=}\bE_{\calH_0}\bigg[g_{\calK\cap\calK'}(\sigma)g_{\calK\cap\calK'}(\sigma')\bE_{\calH_0}\pp{\left.g_{\calK\setminus\calK'}(\sigma) \right| \s{Y}_{\sigma(\calK\cap\calK')}
    }\nonumber\\
    &\hspace{1.6cm}\cdot\bE_{\calH_0}\pp{\left.g_{\calK'\setminus\calK}(\sigma') \right|\s{Y}_{\sigma'(\calK\cap\calK')}}\bigg] \\
    &= \bE_{\calH_0}\bigg[ g_{\calK\cap\calK'}(\sigma)g_{\calK\cap\calK'}(\sigma')\bE_{\s{X}_{\calK\setminus\calK'}}[g_{\calK\setminus\calK'}(\sigma)]\bE_{\s{X}_{\calK'\setminus\calK}}[g_{\calK\setminus\calK'}(\sigma)]\bigg],\label{eqn:innerExpexindep}
\end{align}
where $(a)$ follows from the independence of $(\s{X}_{\calK\triangle \calK'}, \s{Y}_{\sigma(\calK\cap\calK')},\s{Y}_{\sigma'(\calK\cap\calK')})$ on $\s{X}_{\calK\cap\calK'}$, $(b)$ and $(c)$ follow similarly by independence properties of $\s{X}$ and $\s{Y}$, and the inner expectation $\bE_{\s{X}_{\calK\setminus\calK'}}$ in \eqref{eqn:innerExpexindep} is taken w.r.t. the distribution of $\{X_i\}_{i\in\calK\setminus\calK'}$, under $\calH_0$, and similarly for $\bE_{\s{X}_{\calK'\setminus\calK}}$, while the outer expectation in \eqref{eqn:innerExpexindep} is over the remaining random variables. Now, we note that
\begin{align}
    \bE_{\s{X}_{\calK\setminus\calK'}}[g_{\calK\setminus\calK'}(\sigma)]
    &= \bE_{\s{X}_{\calK\setminus\calK'}}\pp{\prod_{i\in\calK\setminus\calK'}\frac{\pr_{\calH_1}(X_i,Y_{\sigma(i)})}{\pr_{\calH_0}(X_i,Y_{\sigma(i)})}}\\
    & = \int_{\s{X}_{\calK\setminus\calK'}}\prod_{i\in\calK\setminus\calK'}\pr_{\calH_0}(X_i)\prod_{i\in\calK\setminus\calK'}\frac{\pr_{\calH_1}(X_i,Y_{\sigma(i)})}{\pr_{\calH_0}(X_i,Y_{\sigma(i)})}\mathrm{d}\s{X}_{\calK\setminus\calK'}\\
    & = \int_{\s{X}_{\calK\setminus\calK'}}\prod_{i\in\calK\setminus\calK'}\frac{\pr_{\calH_1}(X_i,Y_{\sigma(i)})}{\pr_{\calH_0}(Y_{\sigma(i)})}\mathrm{d}\s{X}_{\calK\setminus\calK'}\\
    & = \prod_{i\in\calK\setminus\calK'}\frac{\pr_{\calH_1}(Y_{\sigma(i)})}{\pr_{\calH_0}(Y_{\sigma(i)})}\\
    & = 1.
\end{align}
In the same way, we have $\bE_{\s{X}_{\calK'\setminus\calK}}[g_{\calK'\setminus\calK}(\sigma)]=1$. Therefore,
\begin{align}
    \bE_{\calH_0}[\calL(\s{X},\s{Y})^2]
    &= \bE_{\substack{\sigma\indep\sigma'\\ \calK\indep\calK'}}\bigg[\bE_{\calH_0}\pp{g_{\calK\cap\calK'}(\sigma)g_{\calK\cap\calK'}(\sigma')}\bigg]\\
    & = \bE_{\substack{\sigma\indep\sigma'\\ \calK\indep\calK'}}\pp{\bE_{\calH_0}\pp{\prod_{i\in\calK\cap\calK'}\frac{\pr_{\calH_1}(X_i,Y_{\sigma(i)})}{\pr_{\calH_0}(X_i,Y_{\sigma(i)})}\frac{\pr_{\calH_1}(X_i,Y_{\sigma'_i})}{\pr_{\calH_0}(X_i,Y_{\sigma'_i})}}}.
\end{align}
By symmetry we may assume that $\pi' = \s{Id}$ and $\calK'=[k]$. Then, the expression above simplifies to,
\begin{align}
    \bE_{\calH_0}[\calL(\s{X},\s{Y})^2]
    &= \bE_{\sigma,\calK}\pp{\bE_{\calH_0}\pp{\prod_{i\in[k]\cap\calK}\frac{\pr_{\calH_1}(X_i,Y_{i})}{\pr_{\calH_0}(X_i,Y_{i})}\frac{\pr_{\calH_1}(X_i,Y_{\sigma(i)})}{\pr_{\calH_0}(X_i,Y_{\sigma(i)})}}}.\label{eq:LikeRatDis}
\end{align}
Denote the product in \eqref{eq:LikeRatDis} by $Z_{\sigma,\calK}$. We recall that a cycle of a permutation $\sigma$ is a string $(i_0,i_2,\dots,i_{|C|-1})$ of elements in $[n]$ such that $\sigma(i_j)=i_{j+1\mod|C|}$  for all $j$. If $\abs{C}=\ell$, we call $C$ a $\ell$-cycle. For a fixed cycle $C$, we denote
\begin{align}
Z_{C,\calK}\triangleq \prod_{i\in[k]\cap\calK\cap C}\frac{\pr_{\calH_1}(X_i,Y_{i})}{\pr_{\calH_0}(X_i,Y_{i})}\frac{\pr_{\calH_1}(X_i,Y_{\sigma(i)})}{\pr_{\calH_0}(X_i,Y_{\sigma(i)})}.
\end{align}
Since the set of cycles of a permutation induce a partition of $[n]$, the random variables $\{Z_{C,\calK}\}_C$, corresponding to all cycles of $\sigma$, are independent (w.r.t. $\P_{\calH_0}$), and thus,
\begin{equation}
    \label{eq:CycleProd}
    Z_{\sigma,\calK}=\prod_{C}Z_{C,\calK}.
\end{equation} 
The following lemma is a straightforward generalization that follows easily from the proof of \cite[Lemma~10]{nazer2022detecting}.
\begin{lemma}\label{lem:CycleComp}
For a fixed cycle $C$ of a permutation $\sigma$, and a set $\calK$, we have
\begin{align}
\E_{\calH_0}[Z_{C,\calK}]=\frac{1}{(1-\rho^{2|[k]\cap\calK\cap C|})^{d}}.
\end{align}
\end{lemma}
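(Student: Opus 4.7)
The plan is to extend the one-cycle computation underlying \cite[Lemma~10]{nazer2022detecting} to the partly-correlated, multi-dimensional setting.

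First I would reduce to the scalar case $d=1$. Under $\P_{\calH_0}$ the $d$ columns of the databases are mutually independent, and the pairwise likelihood ratio factorises coordinate-wise as $L(X_i,Y_j)=\prod_{r=1}^{d}L_1(X_{i,r},Y_{j,r})$, where $L_1$ denotes the scalar Gaussian likelihood ratio with correlation $\rho$. Consequently $Z_{C,\calK}$ splits into a product of $d$ independent copies of its one-dimensional analogue, and by independence $\E_{\calH_0}[Z_{C,\calK}]=\bigl(\E_{\calH_0}[Z^{(1)}_{C,\calK}]\bigr)^{d}$. It suffices to prove $\E_{\calH_0}[Z^{(1)}_{C,\calK}]=1/(1-\rho^{2\ell})$ with $\ell\triangleq |[k]\cap\calK\cap C|$.

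Second, writing $L_1(x,y)=(1-\rho^{2})^{-1/2}\exp\bigl(\tfrac{\rho xy}{1-\rho^{2}}-\tfrac{\rho^{2}(x^{2}+y^{2})}{2(1-\rho^{2})}\bigr)$ and multiplying against the $\calH_0$-Gaussian densities of the $X_i$ and $Y_j$ that actually appear, I would recast $\E_{\calH_0}[Z^{(1)}_{C,\calK}]$ as a single Gaussian integral of the form $(1-\rho^{2})^{-\ell}(2\pi)^{-s}\int \exp(-\tfrac12\mathbf{v}^{T}M\mathbf{v})\,d\mathbf{v}$, where $\mathbf{v}$ lists the relevant variables and $M$ is a symmetric matrix whose entries can be read off from the factors of $Z^{(1)}_{C,\calK}$. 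Each $X_i$ with $i\in I\triangleq[k]\cap\calK\cap C$ couples to $Y_i$ and to $Y_{\sigma(i)}$, so $M$ inherits a block structure determined by how $\sigma$ threads the indices in $I$ around the cycle $C$.

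Third, I would compute $\det M$ using this cyclic structure. When the factors are organised along the orbit of $\sigma$ restricted to $I$, the matrix $M$ becomes block-circulant on $\Z/\ell\Z$; its eigenvalues are indexed by the $\ell$-th roots of unity and come in $2\times 2$ blocks whose determinants multiply to $(1-\rho^{2\ell})^{2}/(1-\rho^{2})^{2\ell}$. Substituting into the Gaussian-integral formula $\E_{\calH_0}[Z^{(1)}_{C,\calK}] = (1-\rho^{2})^{-\ell}/\sqrt{\det M}$ cancels the $(1-\rho^{2})^{-\ell}$ prefactor and yields $1/(1-\rho^{2\ell})$; raising to the $d$-th power gives the claim.

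The main obstacle is the third step: correctly writing $M$ in block-circulant form reflecting the action of $\sigma|_I$, and then evaluating the product of Fourier eigenvalues. As an alternative sidestepping the determinant computation, one can expand each $L_1$-factor using Mehler's identity $L_1(x,y)=\sum_{k\ge 0}\rho^{k}h_{k}(x)h_{k}(y)$ and apply the Hermite orthonormality from Section~\ref{sec:Hilbert}: after taking $\P_{\calH_0}$-expectation term by term, each $X_i$ forces an equality between its two Hermite indices and each doubly-covered $Y_j$ propagates an equality around the cycle; the surviving summation collapses to a single integer label $k\ge 0$ weighted by $\rho^{2\ell k}$, and summing the geometric series recovers the same $1/(1-\rho^{2\ell})$.
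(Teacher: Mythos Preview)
Your reduction to $d=1$ is clean, and both the Gaussian-integral and Mehler approaches you outline are indeed what underlies \cite[Lemma~10]{nazer2022detecting}, which is the paper's only justification for the lemma. The gap is in your third step: the claim that $M$ is block-circulant on $\Z/\ell\Z$ tacitly assumes that $\sigma$ restricts to a bijection of $I\triangleq[k]\cap\calK\cap C$ onto itself. Since $C$ is a single cycle, this holds only when $I=C$. When $I\subsetneq C$ there exist $j\in I$ with $\sigma^{-1}(j)\notin I$ (and $j'\in\sigma(I)\setminus I$); for such $j$ the variable $Y_j$ appears in exactly one likelihood factor of $Z^{(1)}_{C,\calK}$, so the bipartite incidence structure between the $X$- and $Y$-variables is a disjoint union of \emph{paths} rather than an $\ell$-cycle, and $M$ is block-tridiagonal rather than block-circulant.

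Carrying out your Mehler alternative carefully in that regime exposes the same issue: a singly-occurring $Y_j$ integrates out via $\E_{\calH_0}[L_1(x,Y_j)]=1$, which then leaves a singly-occurring $X_i$, and so on; peeling each path from its endpoints collapses the entire product to $1$, not $1/(1-\rho^{2\ell})$. A two-line check with $C=(1,2,3)$ and $I=\{1,2\}$ already exhibits this. Thus your argument (and the one in \cite{nazer2022detecting} it follows) delivers $1/(1-\rho^{2|C|})$ when $C\subseteq[k]\cap\calK$ and $1$ otherwise; it cannot produce the stated value for proper subsets, and the lemma as written appears to require the hypothesis $C\subseteq[k]\cap\calK$ rather than merely $|[k]\cap\calK\cap C|=\ell$.
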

Using this lemma, we get that
\begin{align}
    \bE_{\calH_0}[\calL(\s{X},\s{Y})^2] &= \bE_{\sigma}\bE_{\calK}\bE_{\calH_0}\pp{Z_{\sigma,\calK}}\\
    & = \bE_{\sigma}\bE_{\calK}\bE_{\calH_0}\pp{\prod_{C}Z_{C,\calK}}\\
    & = \bE_{\sigma}\bE_{\calK}\pp{\prod_{C}\bE_{\calH_0}\pp{Z_{C,\calK}}}\\
    & = \bE_{\sigma}\bE_{\calK}\pp{\prod_{C}\frac{1}{(1-\rho^{2|[k]\cap\calK\cap C|})^{d}}}\\
    & = \bE_{\sigma}\bE_{\calK}\pp{\prod_{\ell=1}^k\frac{1}{(1-\rho^{2\ell})^{d\s{N}_{\ell}(\sigma,\calK)}}},\label{eq:133}
\end{align}
where $\s{N}_{\ell}(\sigma,\calK)$ is defined as
\begin{align}
    \s{N}_{\ell}(\sigma,\calK) = \abs{\ppp{C\in\sigma:\;|[k]\cap\calK\cap C|=\ell}}.
\end{align}  

%%%%%%
%% To balance the columns at the last page of the paper use this
%% command:
%%
%\enlargethispage{-1.2cm} 
%%
%% If the balancing should occur in the middle of the references, use
%% the following trigger:
%%

%% which triggers a \newpage (i.e., new column) just before the given
%% reference number. Note that you need to adapt this if you modify
%% the paper.  The "triggered" command can be changed if desired:
%%
%\IEEEtriggercmd{\enlargethispage{-20cm}}
%%
%%%%%%

%%%%%%
%% References:
%% We recommend the usage of BibTeX:
%%
%\bibliographystyle{IEEEtran}
%\bibliography{definitions,bibliofile}
%%
%% where we here have assumed the existence of the files
%% definitions.bib and bibliofile.bib.
%% BibTeX documentation can be obtained at:
%% http://www.ctan.org/tex-archive/biblio/bibtex/contrib/doc/
%%%%%%

%% Or you use manual references (pay attention to consistency and the
%% formatting style!):

\end{document}